\numberwithin{equation}{section}
\numberwithin{figure}{section}
\theoremstyle{plain}
\newtheorem{thm}{\protect\theoremname}
  \theoremstyle{plain}
  \newtheorem{cor}[thm]{\protect\corollaryname}
  \theoremstyle{plain}
  \newtheorem{lem}[thm]{\protect\lemmaname}
  \theoremstyle{plain}
  \newtheorem{prop}[thm]{\protect\propositionname}
  \theoremstyle{definition}
  \newtheorem{defn}[thm]{\protect\definitionname}
  \theoremstyle{remark}
  \newtheorem{rem}[thm]{\protect\remarkname}
  \providecommand{\corollaryname}{Corollary}
  \providecommand{\definitionname}{Definition}
  \providecommand{\lemmaname}{Lemma}
  \providecommand{\propositionname}{Proposition}
  \providecommand{\remarkname}{Remark}
\providecommand{\theoremname}{Theorem}
\begin{document}
\global\long\def\C{\mathbb{C}}
\global\long\def\Cd{\C^{\delta}}
\global\long\def\Cprim{\C^{\delta,\circ}}
\global\long\def\Cdual{\C^{\delta,\bullet}}
\global\long\def\Od{\Omega^{\delta}}
\global\long\def\Oprim{\Omega^{\delta,\circ}}
\global\long\def\Odual{\Omega^{\delta,\bullet}}

\global\long\def\P{\mathsf{\mathbb{P}}}
 \global\long\def\E{\mathsf{\mathbb{E}}}
 \global\long\def\sF{\mathcal{F}}
 \global\long\def\ind{\mathbb{I}}

\global\long\def\R{\mathbb{R}}
 \global\long\def\Z{\mathbb{Z}}
 \global\long\def\N{\mathbb{N}}
 \global\long\def\Q{\mathbb{Q}}

\global\long\def\C{\mathbb{C}}
 \global\long\def\Rsphere{\overline{\C}}
 \global\long\def\re{\Re\mathfrak{e}}
 \global\long\def\im{\Im\mathfrak{m}}
 \global\long\def\arg{\mathrm{arg}}
 \global\long\def\i{\mathfrak{i}}
\global\long\def\eps{\varepsilon}
\global\long\def\lamb{\lambda}
\global\long\def\lambb{\bar{\lambda}}

\global\long\def\D{\mathbb{D}}
 \global\long\def\H{\mathbb{H}}
\global\long\def\F{\mathcal{F}}
\global\long\def\outside{\mathcal{\text{out}}}
\global\long\def\winding{\mathrm{wind}}

\global\long\def\dist{\mathrm{dist}}
 \global\long\def\reg{\mathrm{reg}}
 \global\long\def\ii{\mathrm{i}}

\global\long\def\half{\frac{1}{2}}
 \global\long\def\sgn{\mathrm{sgn}}
\global\long\def\Conf{\mathrm{Conf}}
\global\long\def\ZLT{\mathrm{Z_{LT}}}
\global\long\def\Wind{\mathrm{winding}}
\global\long\def\Edges{\mathrm{\mathcal{E}}}
\global\long\def\Dual#1{#1^{\bullet}}

\global\long\def\bdry{\partial}
\global\long\def\pa{\partial}
\global\long\def\en{\epsilon}
\global\long\def\TwiLap#1{\Delta^{#1}}
\global\long\def\oo#1{\mathfrak{o}(#1)}
\global\long\def\EveSub#1{\mathcal{E}(#1)}
 \global\long\def\cl#1{\overline{#1}}
\global\long\def\ZFK{Z_{\text{FK}}}
\global\long\def\Clusters{\mathcal{C}}
\global\long\def\Odp{\hat{\Omega}^{\delta}}
\global\long\def\Edges{\mathcal{E}}
\global\long\def\edges#1{\mathcal{\mathrm{Edges}}(#1)}
\global\long\def\abs#1{\left|#1\right|}
\global\long\def\Vertices{\mathcal{V}}
\global\long\def\bcond{\mathcal{\beta}}
\global\long\def\free{\mathcal{\mathrm{free}}}
\global\long\def\wired{\mathrm{wired}}
\global\long\def\Medial{\mathcal{M}}
\global\long\def\Edgesprim{\hat{\Edges}}
\global\long\def\Conf{\mathrm{Conf}}
\global\long\def\E{\ensuremath{\text{\ensuremath{\mathbb{E}}}}}
\global\long\def\FinLat{\Lambda^{\delta}}
\global\long\def\ModPar{\tau}
\global\long\def\T{\mathbb{T}}
\global\long\def\torus{\mathbb{T}^{\delta}}
\global\long\def\HTECon{\text{const}}
\global\long\def\CriPar{\alpha}
\global\long\def\mesh{\delta}
\global\long\def\wind{\text{wind}}
\global\long\def\KW{\mathcal{KW}}
\global\long\def\OriEdg#1{\vec{\mathcal{E}}(#1)}
\global\long\def\ParFunIsi{Z}
\global\long\def\InvTem{\beta}
\global\long\def\spin#1{\sigma_{#1}}
\global\long\def\vertexes#1{\left|#1\right|}
\global\long\def\ParFunIsiRed{Z^{I}}
\global\long\def\TwiKacWarWeiDetFac#1#2{v(#1,#2)}
\global\long\def\indicator{\mathbb{I}}
\global\long\def\TwiKacWarForShiVec#1{s_{(#1)}}
\global\long\def\unitmat{Id}
\global\long\def\TwiKacWar#1{\mathcal{KW}^{#1}}
\global\long\def\pvec#1{\vec{#1}}
\global\long\def\TorLat{\Lambda}
\global\long\def\DisSpiCon#1{\mathcal{C}\left(#1\right)}
\global\long\def\DisSpiObsVal#1#2{B^{(#1)}(#2)}
\global\long\def\sfrac#1#2{#2/#1}
\global\long\def\TwiParFunWei#1#2{Z^{(#1)}(#2)}
\global\long\def\der#1{\frac{\mathrm{d}}{\mathrm{d}#1}}
\global\long\def\loopfactor#1{(-1)^{q_{#1}(\xi)}}
\global\long\def\HorEdg{e_{H}}
\global\long\def\VerEdg{e_{V}}
\global\long\def\SumEneDen{\E(\en_{H})+\E(\en_{V})}
\global\long\def\Cgr{\mathcal{C}}
\global\long\def\Tdual{\mathbb{T}^{\delta,\star}}
\global\long\def\Tdbl{\hat{\mathbb{T}}}
\global\long\def\Tcvr{\tilde{\T}}
\global\long\def\proj#1#2{\mathrm{Proj}_{#1}\left(#2\right)}
\global\long\def\anw{a_{L}}
\global\long\def\ase{a_{R}}
\global\long\def\DifEneDen{\E(\epsilon_{H})-\E(\epsilon_{V})}
\global\long\def\dzmone{P}
\global\long\def\ConLimSpiObs#1#2{f_{#1}(#2)}
\global\long\def\JacThe#1#2{\theta_{#1}(#2)}
\global\long\def\DisAuxFun#1#2{G_{(#1)}(#2)}
\global\long\def\ConAuxFun#1#2{g_{(#1)}(#2)}
\global\long\def\CorLabPar{\lambda}
\global\long\def\EllMod{k}
\global\long\def\EllMod{k}
\global\long\def\OO#1{\mathcal{O}\left(#1\right)}
\global\long\def\DisSpiObsDivTwiParFun#1#2{\frac{F_{#1}(#2)}{Z_{#1}}}
\global\long\def\hex{\text{hex}}
\global\long\def\tri{\text{tri}}
\global\long\def\En#1#2{\mathcal{E}_{#1}^{(#2)}}
\global\long\def\res{\mathrm{res}}
\global\long\def\fmp#1#2#3{f_{#2}^{(#1)}(#3)}
\global\long\def\Pf{\mathrm{Pf}\,}
\global\long\def\ccor#1{\mathrm{\langle}#1\rangle}
\global\long\def\Op{\mathcal{O}}
\global\long\def\Ceps{C_{\en}}
\global\long\def\PartFun#1{\mathcal{Z}^{(#1)}}
\global\long\def\PFTotal{\mathcal{Z}}
\global\long\def\ds{\mathrm{ds}}
\global\long\def\ns{\mathrm{ns}}
\global\long\def\cs{\mathrm{cs}}
\global\long\def\Conk#1{C_{#1}}
\global\long\def\Ede#1#2{\mathcal{E}_{#1}(#2)}

\title{Energy correlations in the critical Ising model on a torus}

\author{Konstantin Izyurov, Antti Kemppainen and Petri Tuisku}
\begin{abstract}
We compute rigorously the scaling limit of multi-point energy correlations
in the critical Ising model on a torus. For the one-point function,
averaged between horizontal and vertical edges of the square lattice,
this result has been known since the 1969 work of Ferdinand and Fischer.
We propose an alternative proof, in a slightly greater generality,
via a new exact formula in terms of determinants of discrete Laplacians.
We also compute the main term of the asymptotics of the difference
$\E(\en_{V}-\en_{H})$ of the energy density on a vertical and a horizontal
edge, which is of order of $\delta^{2}$, where $\delta$ is the mesh
size. The observable $\en_{V}-\en_{H}$ has been identified by Kadanoff
and Ceva as (a component of) the stress-energy tensor. 

We then apply the discrete complex analysis methods of Smirnov and
Hongler to compute the multi-point correlations. The fermionic observables
are only periodic with doubled periods; by anti-symmetrization, this
leads to contributions from four ``sectors''. The main new challenge
arises in the doubly periodic sector, due to the existence of non-zero
constant (discrete) analytic functions. We show that some additional
input, namely the scaling limit of the one-point function and of relative
contribution of sectors to the partition function, is sufficient to
overcome this difficulty and successfully compute all correlations. 
\end{abstract}

\maketitle

\section{Introduction}

The Ising model is a very famous and influential model in statistical
physics, mathematical physics, discrete mathematics and computer science.
Originally introduced by Wilhelm Lenz in 1920 and named after Ernst
Ising who in 1924 solved the $1$D case of the model \cite{ISING},
the model was initially introduced to study the magnetic phase change
at the so-called Curie temperature. For example, a piece of iron loses
its ferromagnetic properties and becomes a paramagnet at $770$ $^{\circ}$C.

Since its inception, the Ising model has become a very much studied
and archetypal model, a ``test laboratory'', of statistical physics.
This is because the model is rather simple, but still encompasses
a lot of the interesting behavior. Because of the wide applicability
of the model, the Ising model has been studied in many academic fields
ranging from pure mathematics via physics and chemistry to biology
and economics.

The most interesting aspect of Ising model is the fact that it has
an \emph{order-disorder phase transition} in dimension $2$ and higher
dimensions. This fact was established by Peierls in 1936 \cite{PEIRLS}.
The temperature of the phase transition, called \emph{critical temperature},
was predicted by Kramers and Wannier in 1941 \cite{KRAMERSWANNIER}.
After it was understood that the model does have a phase transition
in dimension $2$, finding an exact formula for the partition function
of the model in this case became a central question of much interest
in statistical physics. The feat was achieved by Lars Onsager in a
seminal paper in 1944 \cite{ONSAGER} using transfer matrices. Due
to this exact and rigorous formula being established, it is often
said that the Ising model is \emph{exactly solvable} or \emph{integrable}
in $2$ dimensions. After 1944, the transfer matrix technique was
developed further by Onsager and Kaufman \cite{KAUFMAN,KAUFMANONSAGER}.
The spinor analysis of Kaufman \cite{KAUFMAN} underlies much of our
work.

Besides the algebraic transfer matrix techniques (whose use has continued
and developed since the 1940s), Ising model has also been studied
by so-called combinatorial approach. This approach was advanced by,
among others, Van der Waerden who developed the high temperature expansion
in 1941 \cite{VANDERWAERDEN}, Kac and Ward who introduced the Kac–Ward
determinant in 1952 \cite{KACWARD} (see also \cite{POTTSWARD}),
Sherman, Hurst-Green, Kasteleyn and Fisher who all developed a technique
known as the Pfaffian method in 1960s \cite{SHERMAN} \cite{HURSTGREEN},
\cite{KASTELEYN1,KASTELEYN2,KASTELEYN3}, \cite{FISHER1,FISHER2}
and McCoy and Wu, who wrote the book \cite{MCCOYWU} summarizing the
development of the combinatorial approach in 1970. The results of
Van der Waerden and Kac–Ward especially are of key importance to our
work. The conjecture of Kac and Ward, namely that the Ising model
partition function can be expressed as a determinant of a suitably
chosen matrix, the Kac–Ward matrix, was established mathematically
in 1999 by Dolbilin, Zinov'ev, Mishchenko, Shtan'ko and Shtogrin \cite{DOLBILIN}
and generalized to the form we apply in this paper by David Cimasoni
\cite{CIMASONI1}, \cite{CIMASONI2}.

In 1960s-1970s, another, non-rigorous way to understand Ising model
appeared in the physical literature, the renormalization group. This
approach postulates that the Ising model has a ``continuum limit''
described by a ``quantum field theory (QFT)'' \cite{ALVAREZ-GAUMEMOOREVAFA}.
This theory is thought to be the free fermionic theory. The QFT approach
led to multiple seminal hypothesis being proposed by physicists, among
them the idea that the continuum limit correlations should be related
to the determinants of Laplacians and Dirac operators. We shall show
that for the energy density (correlation of $2$ neighboring spins)
an identity in the discrete setting exists that shows this relation
to determinants of Laplacians. It is noteworthy that this identity
is exact and rigorous and exists even in the discrete setting, not
only in the continuum limit as proposed by QFT approach.

The QFT approach was developed further by Belavin, Polyakov, Zamolodchikov
\cite{BELAVINPOLYAKOVZAMOLODCHIKOV} who in 1984 suggested that the
limiting QFT for the critical Ising model case (when the fermion theory
is ``massless'') has conformal symmetry (CFT)(see also \cite{ALVAREZ-GAUMEMOOREVAFA}).
This lead to the famous prediction that the critical Ising model possesses
conformal covariance in the scaling limit, a conjecture that has inspired
a lot of work in the mathematical community in the last decade. 

In the mathematical community, a lot of work has been done to prove
the conjectures by physicists, and to further the understanding of
the Ising model. Let us here focus only on the study on conformal
covariance of the critical Ising model, that is, the conformal covariance
of its various correlations. The calculation of the full-plane energy
correlations was achieved by Boutillier and de Tilière  \cite{BOUTILLIERDETILIERE1,BOUTILLIERDETILIERE2};
this was later extended to some non-integrable models \cite{giuliani2012scaling,antinucci2020energy}.
The full-plane spin correlations were calculated by Palmer \cite{PALMER}.
Smirnov \cite{SMIRNOV} introduced a powerful tool to analyze the
scaling limit of the Ising model on arbitrary domains, namely, the
discrete holomorphic fermionic observables and associated Riemann
boundary value problem. This was applied to the calculation of the
energy density in simply connected planar domains by Hongler and Smirnov
\cite{HONGLERSMIRNOV}; later extended by Hongler to multi-point energy
correlations in his Ph. D. thesis \cite{hongler_thesis}. The case
of spin correlations in planar domains was solved by Chelkak, Hongler
and Izyurov \cite{CHELKAKHONGLERIZYUROV} and recently these results
were extended to all mixed correlations of primary fields in \cite{CHI_Mixed}. 

The success of the discrete holomorphicity techniques in these papers
naturally leads to the question of whether they can be applied to
the analysis of the model on Riemann surfaces. On the positive side,
the existing universality results \cite{CHELKAKSMIRNOV,ChelkakIzyurovMahfouf,chelkak2020ising}
allow one to extend the techniques to families of graphs that are
flexible enough to approximate any Riemann surface. However, there
are several known difficulties on this path. First, the fermionic
correlations are not well defined on the Riemann surface but have
non-trivial monodromy properties. Thus, one needs to consider $2^{2g}$
observables on a genus $g$ surface where just one was sufficient
in the planar case. This is similar to the $2^{2g}$ Kac–Ward determinants
introduced by Cimasoni \cite{CIMASONI1,CIMASONI2}, and in fact this
is no coincidence since the observables are related to the inverses
of the Kac–Ward matrices, see \cite{lis2014fermionic,CHELKAKCIMASONIKASSEL}.
What's more, the observables do not compute the energy or spin correlations
per se, but rather their correlations with one of $2^{2g}$ ``topological''
observables, normalized by the expectation of that observable, given
by the corresponding Kac–Ward determinant. Thus, to recover the correlations
of interest, one would need to compute scaling limit of the ratio
of the square root of each of the $2^{2g}$ Kac-Ward determinants
to their sum. Finally, a degeneracy occurs in that one of the observables
happens to be a non-zero constant, whose value is not immediately
recoverable by the discrete complex analysis methods.

The present paper provides a first step in this program, treating
the case of energy correlations on a torus. The advantage of working
with the flat torus is that the above-mentioned difficulties can be
treated by other methods. Thus, the Kac-Ward determinants can be calculated
explicitly, and their limiting asymptotics analyzed. Moreover, as
the partition function can be computed for any temperature, one can
differentiate it with respect to the temperature to obtain the average
energy density at criticality. This has been done by Ferdinand and
Fischer \cite{FERDINANDFISHER} and later refined in \cite{SALAS1,SALAS2,izmailian2002exact}
by using the Onsager–Kaufman expression for the partition function.
It turns out that a slight refinement of this result (separating the
vertical and the horizontal edges) is sufficient to remedy the above-mentioned
degeneracy problem. We propose an alternative computation of the average
energy density, based instead on the Kac–Ward solution. The advantage
of our approach is that we relate, at the discrete level, the average
energy density to a ratio of determinants of the discrete Laplacians.
The asymptotics of the determinants of Laplacians has been recently
analyzed in great generality \cite{finski2020finite,finski2020spanning,izyurov2020asymptotics}.
Thus, if a similar ``bosonization'' relation is found on Riemann
surfaces, the asymptotics of the one-points function in the scaling
limit would be readily available. 

It turns out that once the above-mentioned difficulties are treated,
or the missing pieces supplied as an input, the rest of the computation
of the arbitrary many-point energy correlations can be done by the
discrete complex analysis methods, recovering physicists' predictions
in this regard, see Theorem \ref{thm: multipoint} below. While for
even number of marked points, the argument in \cite{hongler_thesis,CHI_Mixed}
can be extended almost verbatim, for odd number points we had to modify
it, as we have found no combinatorial counterpart for the ``propagator''
$\zeta(e_{n}-e_{m})$ in (\ref{eq: corr_en_odd}) below. Therefore,
rather than reducing everything to the two-point fermionic observables
as in \cite{hongler_thesis,CHI_Mixed}, we had to work out the convergence
result directly for a multi-point observable, and only obtain the
Pfaffian formula (\ref{eq: corr_en_odd}) by analysing the resulting
scaling limit in the continuum.

\subsubsection*{Acknowledgements. }

Work supported by the Academy of Finland via Centre of Excellence
in Analysis and Dynamics research and the academy project ``Critical
phenomena in dimension two: analytic and probabilistic methods''.
We are grateful to Antti Kupiainen and Dmitry Chelkak for useful discussions, and to the anonymous referee for careful reading of the manuscript and many useful suggestions.
We thank David Loeffler for pointing out a quick way to derive the
Kronecker limit formula with anti-periodic boundary conditions, used
in the proof of Corollary \ref{THEOREM1}.

\section{Main results}

\subsection{Setup and notation}

We will study the Ising model on tori $\T^{\delta}:=\mesh\Z^{2}/\FinLat$,
where $\delta>0$ is the \emph{mesh size}, $\Lambda^{\delta}$ is the lattice
\[
\Lambda^{\delta}=\{n\omega_{1}^{\mesh}+m\omega_{2}^{\mesh}:n,m\in\Z\}
\]
and $\omega_{1}^{\mesh},\omega_2^{\mesh}\in\mesh\Z^{2}$ are non-collinear. We will
be interested in the \emph{scaling limit} of the model where $\mesh\to0$,
$\omega_{i}^{\mesh}\to\omega_{i}\in\C\setminus\{0\}$ for $i=1,2$. We will
assume that the \emph{modular parameter} $\ModPar:=\omega_{2}/\omega_{1}$
satisfies $\im\ModPar>0$, and denote $\T=\C/\Lambda$, where $\Lambda=\{n\omega_{1}+m\omega_{2}:n,m\in\Z\}$.

The (zero magnetic field) Ising model on $\torus$ is the probability
measure on spin configurations $\sigma:\torus\to\{\pm1\}$ given by
\begin{align*}
\P\left[\sigma\right]=\frac{1}{\ParFunIsi}\exp\left(\InvTem\sum_{x\sim y}\spin x\spin y\right)
\end{align*}
where the sum is over all pairs of nearest-neighboring vertices of
$\T^{\delta}$, $\InvTem>0$ is a parameter called the \emph{inverse
temperature}, and 
\begin{align*}
\ParFunIsi=\sum_{\sigma:\torus\to\{\pm1\}}\exp\left(\InvTem\sum_{x\sim y}\spin x\spin y\right)
\end{align*}
is the \emph{partition function} of the model, which ensures that
the configuration probabilities sum up to $1$. Our main results will
concern the \emph{critical temperature} $\InvTem=\InvTem_{c}=\frac{1}{2}\log(\sqrt{2}+1)$.
We will denote by $\P$ and $\E$, respectively, the probability and
the expectation with respect to the above measure.

The main object of our interest is the \emph{energy
observable}. If $(xy)$ is an edge of $\T^{\delta}$, we denote 
\[
\en_{(xy)}=\sigma_{x}\sigma_{y}-\frac{1}{\sqrt{2}}.
\]
The constant $\frac{1}{\sqrt{2}}$ is the expectation of $\sigma_{x}\sigma_{y}$
in the \text{full-plane}, i.e., the thermodynamic limit. Thus,
our results will measure how the toric boundary conditions affect
the expectation and the correlations of $\en_{(xy)}$. Clearly, since
$\torus$ carries an action of $\delta\Z^{2}$ by translations, we
have $\E\en_{e}=\E\en_{\hat{e}}$ if the edges $e,\hat{e}$ are either
both horizontal, or both vertical. We thus denote by $\E\en_{H}$
and $\E\en_{V}$ the expectation of $\en$ on any horizontal and vertical
edge, respectively.

\subsection{Main results}

\label{subsec: main_results}Our first result concerns the sum of
vertical and horizontal energy densities. We introduce some notation.
Given $\omega_{1,2}^{\mesh}$ as above and $i,j\in\Z_{2}$, denote
\[
V_{\delta}^{ij}:=\{f:\delta\Z^{2}\to\R:f(v+\omega_{1}^{\delta})\equiv(-1)^{i}f(v),\;f(v+\omega_{2}^{\delta})\equiv(-1)^{j}f(v)\}.
\]
Each of $V_{\delta}^{ij}$ is a linear space of dimension $|\torus|$;
$V_{\delta}^{00}$ can be viewed just as the set of functions on $\torus$.
Note that the lattice Laplacian $\Delta_{\delta}f(x):=\sum_{y\sim x}(f(y)-f(x))$
preserves each of these spaces. We denote by $\Delta_{\delta}^{ij}$
the restriction of $\Delta$ to $V_{\delta}^{ij}$.
\begin{thm}
\label{thm: sum} For the critical Ising model on $\torus$,
we have, for a horizontal edge $H$ and a vertical edge $V$: 
\begin{equation}
\E\en_{V}+\E\en_{H}=4\cdot\frac{\sqrt{\det^{\star}\TwiLap{00}_{\delta}}}{\sqrt{\det\TwiLap{10}_{\delta}}+\sqrt{\det\TwiLap{01}_{\delta}}+\sqrt{\det\TwiLap{11}_{\delta}}}\cdot\frac{1}{|\torus|},\label{eq: main_combinatirual}
\end{equation}
where $\det^{\star}$ denotes the product of all non-zero eigenvalues, and $|\torus|$ denotes the number of vertices of $\torus$. 
\end{thm}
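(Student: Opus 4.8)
The plan is to express the left-hand side through the Kac--Ward (fermionic) partition functions attached to the four spin structures of $\torus$, and then to invoke an exact ``bosonization'' identity matching those to the twisted Laplacian determinants $\det\TwiLap{ij}_{\delta}$. The high-temperature (even-subgraph) expansion gives $\ParFunIsi=2^{|\torus|}(\cosh\InvTem)^{|\Edges|}\,\mathcal{Z}(x)$, where $x=\tanh\InvTem$ and $\mathcal{Z}(x)=\sum_{\omega}x^{|\omega|}$ runs over even subgraphs $\omega$ of $\torus$. Inserting a single $\spin a\spin b$ on an edge $e=(ab)$ and using $\spin a^{2}=1$ yields the standard identity $\E[\spin a\spin b]=x+\frac{1-x^{2}}{x}\,\P[e\in\omega]$ with $\omega$ the random even subgraph; summing over all edges, using $\sum_{e}\P[e\in\omega]=x\,\mathcal{Z}'(x)/\mathcal{Z}(x)$, the translation invariance $|\Edges_{H}|=|\Edges_{V}|=|\torus|$ and $|\Edges|=2|\torus|$, one gets, at $x_{c}=\tanh\InvTem_{c}=\sqrt{2}-1$ (so $2x_{c}-\sqrt{2}=\sqrt{2}-2$ and $1-x_{c}^{2}=2x_{c}$),
\[
\E\en_{V}+\E\en_{H}=(\sqrt{2}-2)+\frac{2x_{c}}{|\torus|}\cdot\frac{\mathcal{Z}'(x_{c})}{\mathcal{Z}(x_{c})},
\]
so everything reduces to evaluating $\mathcal{Z}'(x_{c})/\mathcal{Z}(x_{c})$.

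By the torus Kac--Ward/Kasteleyn formula (in Cimasoni's form) $\mathcal{Z}(x)=\half\bigl(-P^{00}(x)+P^{01}(x)+P^{10}(x)+P^{11}(x)\bigr)$, with $\det(\unitmat-x\,T^{ij})=\bigl(P^{ij}(x)\bigr)^{2}$, where $T^{ij}$ is the Kac--Ward transition matrix on directed edges of $\torus$ twisted by the $(ij)$-th character of $H_{1}(\torus;\Z_{2})$ and $P^{ij}$ is the associated Pfaffian, a polynomial whose sign is fixed by positivity of $\mathcal{Z}$. The bosonization input --- obtained by simultaneously block-diagonalising $\unitmat-x\,T^{ij}$ and $\TwiLap{ij}_{\delta}$ by the $(ij)$-twisted Fourier modes of $\FinLat$ and comparing the scalar expressions mode by mode --- is that at the critical isotropic weight each Kac--Ward block determinant equals a mode-independent multiple of the scalar Laplacian symbol $4-2\cos(\cdot)-2\cos(\cdot)$, so that $\det(\unitmat-x_{c}\,T^{ij})=c_{\delta}\det\TwiLap{ij}_{\delta}$ for a constant $c_{\delta}>0$ depending only on $\delta$ and $|\torus|$; moreover an infinitesimal change of $x$ perturbs that symbol by a mode-independent ``mass'' vanishing to second order at $x_{c}$, which gives $\partial_{x}\log\det(\unitmat-x\,T^{ij})\big|_{x_{c}}=\sqrt{2}\,|\torus|$ for $ij\neq00$ and $\partial_{x}^{2}\det(\unitmat-x\,T^{00})\big|_{x_{c}}=\tfrac{8}{x_{c}^{2}}c_{\delta}\det^{\star}\TwiLap{00}_{\delta}$ (the arithmetic identities at $x=\sqrt{2}-1$ that encode criticality). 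Hence: for $ij\neq00$, $\TwiLap{ij}_{\delta}$ is invertible, so $P^{ij}(x_{c})=\sqrt{c_{\delta}}\sqrt{\det\TwiLap{ij}_{\delta}}$ and $\partial_{x}\log P^{ij}(x)\big|_{x_{c}}=|\torus|/\sqrt{2}$; for $ij=00$, the kernel of $\TwiLap{00}_{\delta}$ is the constants, so $\det\TwiLap{00}_{\delta}=0$, hence $P^{00}(x_{c})=0$, $\det(\unitmat-x\,T^{00})$ has a double zero at $x_{c}$, and $\bigl|(P^{00})'(x_{c})\bigr|=\tfrac{2}{x_{c}}\sqrt{c_{\delta}}\sqrt{\det^{\star}\TwiLap{00}_{\delta}}$.

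Assembling: since $P^{00}(x_{c})=0$ we have $\mathcal{Z}(x_{c})=\half\sqrt{c_{\delta}}\bigl(\sqrt{\det\TwiLap{01}_{\delta}}+\sqrt{\det\TwiLap{10}_{\delta}}+\sqrt{\det\TwiLap{11}_{\delta}}\bigr)$, and as the three non-degenerate summands share the logarithmic derivative $|\torus|/\sqrt{2}$ while $-(P^{00})'(x_{c})>0$,
\[
\frac{\mathcal{Z}'(x_{c})}{\mathcal{Z}(x_{c})}=\frac{|\torus|}{\sqrt{2}}+\frac{2}{x_{c}}\cdot\frac{\sqrt{\det^{\star}\TwiLap{00}_{\delta}}}{\sqrt{\det\TwiLap{10}_{\delta}}+\sqrt{\det\TwiLap{01}_{\delta}}+\sqrt{\det\TwiLap{11}_{\delta}}}.
\]
Substituting into the first display and using $\tfrac{2x_{c}}{|\torus|}\cdot\tfrac{|\torus|}{\sqrt{2}}=\tfrac{2x_{c}}{\sqrt{2}}=2-\sqrt{2}$, the ``bulk'' constant $\sqrt{2}-2$ cancels, and the surviving term is exactly $\tfrac{4}{|\torus|}$ times the ratio in \eqref{eq: main_combinatirual}.

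The step I expect to be the main obstacle is the bosonization input together with its two normalisations. This is an exact but delicate finite computation: on a general (skew) torus one must set up the Fourier transform adapted to $\FinLat$, identify the Kac--Ward blocks in each of the four twisted sectors, verify that at the critical isotropic weight the block determinant collapses onto the Laplacian symbol up to a mode-independent constant, and then read off its first two $x$-derivatives --- the ``bulk'' logarithmic derivative $\sqrt{2}\,|\torus|$ and the quadratic vanishing of the $00$-block (the critical ``mass''). The genuinely new feature is the degeneracy of the doubly-periodic ($00$) sector: the vanishing $P^{00}(x_{c})=0$ forced by the constant zero mode of $\TwiLap{00}_{\delta}$ is what turns the numerator of \eqref{eq: main_combinatirual} into the regularised determinant $\det^{\star}\TwiLap{00}_{\delta}$, and it is precisely the ``missing piece'' flagged in the introduction for the discrete-complex-analysis programme.
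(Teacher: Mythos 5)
Your argument is, up to bookkeeping, the paper's own proof: the high-temperature expansion reducing $\E\en_{V}+\E\en_{H}$ to $\mathcal{Z}'(x_{c})/\mathcal{Z}(x_{c})$, Cimasoni's identity $2\mathcal{Z}=-Z^{(00)}+Z^{(01)}+Z^{(10)}+Z^{(11)}$ with $Z^{(ij)}=\sqrt{\det\KW^{ij}}$, the Fourier product formula for the twisted Kac--Ward determinants, the mode-by-mode identification of the factors $v(\alpha_{c},q)$ with the eigenvalues of $\TwiLap{ij}_{\delta}$ (up to the constant $c_{\delta}=(2\alpha_{c}^{2})^{|\torus|}$), the observation that at criticality every factor has $\alpha$-logarithmic derivative $\sqrt{2}$ (so the three non-degenerate sectors contribute exactly the ``bulk'' term that cancels $\sqrt{2}-2$), and the double zero of the $q=0$ factor in the $(00)$ sector, which is what produces $\det^{\star}\TwiLap{00}_{\delta}$. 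The only organizational difference is that you differentiate the total partition function and track $\mathcal{Z}'/\mathcal{Z}$, whereas the paper works sector by sector with the quantities $\DisSpiObsVal{ij}{\HorEdg}+\DisSpiObsVal{ij}{\VerEdg}$ and shows each vanishes for $(ij)\neq(00)$; these are the same computation, and your normalizations ($\partial_{x}\log P^{ij}|_{x_{c}}=|\torus|/\sqrt{2}$, $|(P^{00})'(x_{c})|=\frac{2}{x_{c}}\sqrt{c_{\delta}}\sqrt{\det^{\star}\TwiLap{00}_{\delta}}$) all check out against the paper's.

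The one genuine gap is the sign of $(P^{00})'(x_{c})$. You assert $-(P^{00})'(x_{c})>0$ and attribute the choice of branch to ``positivity of $\mathcal{Z}$'', but since $P^{00}(x_{c})=0$, positivity of $\mathcal{Z}$ near $x_{c}$ is governed entirely by the three non-vanishing sectors and constrains the sign of $(P^{00})'$ not at all; nor can you infer it from the final answer without circularity, since $\E\en_{V}+\E\en_{H}$ is a deviation from the bulk value and has no a priori sign. The determinant identity only determines $Z^{(00)}(x)=\sum_{\xi}(-1)^{q_{00}(\xi)}x^{|\xi|}$ up to an overall sign as a square root, and getting the derivative's sign wrong would flip the sign of the entire right-hand side of the theorem. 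The paper settles this by citing the proof of Theorem 1.1 of Cimasoni and Duminil-Copin: the polynomial $Z^{(00)}(\alpha)$ equals $1$ at $\alpha=0$, is negative at $\alpha=1$, and vanishes only at $\alpha=\alpha_{c}$, so its derivative there is negative. You need an argument of this kind (or some other determination of the branch) to close the proof.
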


The continuous counterpart of this identity has appeared in the CFT
literature, see \cite{diFrancesco1987critical}, \cite{FELDER}, \cite{DIFRANCESCOMATHIEUSENECHAL},
however, the discrete version is, to the best of our knowledge, new.
Cimasoni \cite{CIMASONI2} has related the determinants of the discrete
Laplacians to the critical Ising partition functions on arbitrary
isoradial graphs embedded on a torus. However, we were unable to adapt
his methods to the computation of energy densities. A more general
approach to bosonization of the Ising model was developed by Dubédat
in \cite{dubedat2011exact}, however, it involves more complicated
modifications of the original graph and does not seem to lead to (\ref{eq: main_combinatirual})
either. In Section \ref{sec: triangular}, we provide an analog of
(\ref{eq: main_combinatirual}) for triangular lattice; we do not
know whether such analogs hold true for other lattices.

By combining the above formula with known results on the asymptotics
of determinants of discrete Laplacians, we recover the asymptotics
result of Ferdinand and Fischer \cite{FERDINANDFISHER}, in a slightly
greater generality of arbitrary torus as compared to diagonal one:
\begin{cor}
\label{THEOREM1} In the limit $\delta\to0$, $\omega_{1,2}^{\delta}\to\omega_{1,2}$
with $\omega_{2}/\omega_{1}=\tau$, we have 
\begin{align*}
\E\en_{V}+\E\en_{H}=\frac{2(\im\tau)^{\frac{1}{2}}|\theta_{2}\theta_{3}\theta_{4}|}{|\theta_{2}|+|\theta_{3}|+|\theta_{4}|}\cdot\frac{1}{|\T|^{\frac{1}{2}}}\cdot\delta+\oo{\delta}.
\end{align*}
Hereinafter, our notation for the theta constants is $\theta_{i}:=\theta_{i}(\tau):=\theta_{i}(0,q)$,
where $q=e^{\i\pi\tau}$ and $\theta_{i}(0,q)$ is as in \cite[Chapter 20]{NIST:DLMF}, and $|\T|$ denotes the area of $\T$.
\end{cor}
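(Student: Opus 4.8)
The plan is to feed the exact identity of Theorem~\ref{thm: sum} into the known asymptotics of determinants of discrete Laplacians on a flat torus. The key structural observation is that the exponentially large ``bulk'' factor $\exp(f_{0}\,|\torus|)$, where $f_{0}$ is the free energy per vertex, is the \emph{same} in all four determinants appearing in (\ref{eq: main_combinatirual}): $f_{0}$ is a local quantity, insensitive to the (anti)periodicity conditions, so it cancels in the ratio on the right-hand side and only the modular factors survive. I would also dispose of the combinatorial prefactor at once: since $\torus$ is the set of points of $\delta\Z^{2}$ in a fundamental domain of $\Lambda^{\delta}$, we have $|\torus|\,\delta^{2}=\bigl|\det(\omega_{1}^{\delta},\omega_{2}^{\delta})\bigr|\to|\T|$, so $1/|\torus|=\delta^{2}|\T|^{-1}(1+\oo{1})$ and $|\torus|^{1/2}=\delta^{-1}|\T|^{1/2}(1+\oo{1})$. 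Hence it remains to prove that
\[
\frac{\sqrt{\det{}^{\star}\TwiLap{00}_{\delta}}}{\sqrt{\det\TwiLap{10}_{\delta}}+\sqrt{\det\TwiLap{01}_{\delta}}+\sqrt{\det\TwiLap{11}_{\delta}}}=\frac{(\im\tau)^{1/2}\,|\theta_{2}\theta_{3}\theta_{4}|}{2\,(|\theta_{2}|+|\theta_{3}|+|\theta_{4}|)}\,|\torus|^{1/2}\,(1+\oo{1}),
\]
i.e.\ the numerator exceeds each term in the denominator by a factor of order $|\torus|^{1/2}\sim\delta^{-1}$; this growth is exactly the imprint of the zero eigenvalue removed in the doubly periodic sector.

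Next I would invoke the asymptotics of determinants of discrete Laplacians on a torus \cite{finski2020finite,izyurov2020asymptotics}. Since the torus is flat and has no boundary, the expansion of $\log\det$ carries no surface term, and one extracts a positive $\delta$-dependent constant $Q_{\delta}$ (absorbing the common bulk factor) such that, as $\delta\to0$,
\[
\det\TwiLap{ij}_{\delta}=Q_{\delta}\,\Bigl|\tfrac{\theta_{k(ij)}(\tau)}{\eta(\tau)}\Bigr|^{2}(1+\oo{1})\quad\text{for }(ij)\ne(00),\qquad \det{}^{\star}\TwiLap{00}_{\delta}=Q_{\delta}\,|\torus|\,(\im\tau)\,|\eta(\tau)|^{4}\,(1+\oo{1}),
\]
where $\eta$ is the Dedekind eta function. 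The second formula is the classical Kronecker limit formula for the renormalized determinant of the periodic Laplacian with the zero mode removed (equivalently, the asymptotics of the number of spanning trees of $\torus$, which equals $\det{}^{\star}\TwiLap{00}_{\delta}/|\torus|$); the three twisted formulas are the anti-periodic analogue of the Kronecker limit formula, in which the indices $k(ij)$ run over $\{2,3,4\}$ (the three non-vanishing theta constants) in some order --- the vanishing $\theta_{1}(0,q)\equiv0$ does not occur, consistently with the fact that exactly the doubly periodic sector carries a zero mode. One should also check in passing that these expansions are continuous in the modular parameter, so that $\omega_{2}^{\delta}/\omega_{1}^{\delta}$ may be replaced by $\tau$ at the cost of an error absorbed into $\oo{1}$.

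Finally I would assemble the pieces. Taking square roots,
\[
\sum_{(ij)\ne(00)}\sqrt{\det\TwiLap{ij}_{\delta}}=\frac{\sqrt{Q_{\delta}}}{|\eta(\tau)|}\,\bigl(|\theta_{2}|+|\theta_{3}|+|\theta_{4}|\bigr)(1+\oo{1}),\qquad \sqrt{\det{}^{\star}\TwiLap{00}_{\delta}}=\sqrt{Q_{\delta}}\,|\torus|^{1/2}(\im\tau)^{1/2}|\eta(\tau)|^{2}(1+\oo{1}),
\]
so the quotient equals $|\torus|^{1/2}(\im\tau)^{1/2}|\eta(\tau)|^{3}\,(1+\oo{1})/(|\theta_{2}|+|\theta_{3}|+|\theta_{4}|)$; the Jacobi identity $\theta_{2}\theta_{3}\theta_{4}=2\eta^{3}$ now turns $|\eta(\tau)|^{3}$ into $\tfrac12|\theta_{2}\theta_{3}\theta_{4}|$, while $\sqrt{Q_{\delta}}$ has cancelled. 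Plugging $|\torus|^{1/2}=\delta^{-1}|\T|^{1/2}(1+\oo{1})$ and $1/|\torus|=\delta^{2}|\T|^{-1}(1+\oo{1})$ back into (\ref{eq: main_combinatirual}), the leading factor $4$ combines with $\tfrac12\cdot|\T|^{-1/2}$ to give $2\,|\T|^{-1/2}$, and
\[
\E\en_{V}+\E\en_{H}=\frac{2(\im\tau)^{1/2}|\theta_{2}\theta_{3}\theta_{4}|}{|\theta_{2}|+|\theta_{3}|+|\theta_{4}|}\cdot\frac{1}{|\T|^{1/2}}\cdot\delta+\oo{\delta}
\]
follows. Note that, since this expression is symmetric under permutations of $\theta_{2},\theta_{3},\theta_{4}$, one never needs to decide which twisted sector produces which theta constant.

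The main obstacle is the second step, and within it the anti-periodic Kronecker limit formula together with the verification that one and the same bulk rate $f_{0}$ and normalization $Q_{\delta}$ govern all four sectors --- only then do they genuinely cancel. In practice both the periodic and the anti-periodic discrete asymptotics should be produced on the same footing from a single statement about discrete torus determinants, which also requires reconciling the sign and phase conventions built into the spaces $V_{\delta}^{ij}$ with those in the definition of the theta constants, and controlling the error terms uniformly as $\omega_{2}^{\delta}/\omega_{1}^{\delta}\to\tau$. By contrast, once the continuum determinants are pinned down, the passage from the discrete Laplacians to their continuum counterparts is exactly of the type treated in \cite{finski2020finite,izyurov2020asymptotics}, and the remaining manipulations are routine.
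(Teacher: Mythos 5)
Your overall route is the paper's route: start from the exact identity of Theorem \ref{thm: sum}, feed in Laplacian determinant asymptotics, use $|\torus|\delta^{2}\to|\T|$ and the Jacobi identity $\theta_{2}\theta_{3}\theta_{4}=2\eta^{3}$, and the final arithmetic you perform is correct. The gap is the step you yourself flag as the ``main obstacle'': the claimed asymptotics $\det\TwiLap{ij}_{\delta}=Q_{\delta}\,|\theta_{k(ij)}/\eta|^{2}(1+\oo 1)$ for $(ij)\neq(00)$, with the \emph{same} bulk factor $Q_{\delta}$ as in $\det^{\star}\TwiLap{00}_{\delta}$, is asserted rather than proved. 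The input actually available (Proposition \ref{PROPOSITIONASYMPTOTICSOFLAPLACIAN}, i.e.\ the Chinta--Jorgenson--Karlsson result, together with Kronecker's formula (\ref{eq: DetLapKronecker})) concerns only the \emph{untwisted} Laplacian $\det^{\star}\TwiLap{00}_{\delta}$; a discrete ``anti-periodic Kronecker limit formula'' with a matched bulk constant is exactly what needs to be established, and citing \cite{finski2020finite,izyurov2020asymptotics} does not by itself supply it, nor does a heuristic appeal to the locality of the free energy per site constitute a proof that the $\oo 1$-errors and constants agree across sectors.

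The paper closes precisely this gap with a short but essential observation that your proposal is missing: the twisted spaces sit inside untwisted spaces on doubled tori, $V_{2\omega_{1}^{\delta},\omega_{2}^{\delta}}^{00}=V_{\omega_{1}^{\delta},\omega_{2}^{\delta}}^{00}\oplus V_{\omega_{1}^{\delta},\omega_{2}^{\delta}}^{10}$ and similarly for the other sectors, so that each twisted determinant is a \emph{ratio} of untwisted ones, e.g.\ $\det\TwiLap{10}_{\delta}=\det^{\star}\Delta^{00}_{2\omega_{1}^{\delta},\omega_{2}^{\delta}}/\det^{\star}\Delta^{00}_{\omega_{1}^{\delta},\omega_{2}^{\delta}}$, see (\ref{eq: det_1_reduce})--(\ref{eq: det_3_reduce}). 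This reduces everything to Proposition \ref{PROPOSITIONASYMPTOTICSOFLAPLACIAN} applied to periodic Laplacians only; the exponential bulk factors then cancel automatically (giving your $Q_{\delta}=e^{C|\torus|}$ for free, with no separate matching argument needed), and the continuum ratios are evaluated by Kronecker's formula combined with the theta duplication identities \cite[20.7.11--20.7.12]{NIST:DLMF}, which produce the factors $|\theta_{4}|^{2}/|\eta|^{2}$, $|\theta_{2}|^{2}/|\eta|^{2}$, $|\theta_{3}|^{2}/|\eta|^{2}$ for the three sectors. (You are right that the final answer is symmetric in $\theta_{2},\theta_{3},\theta_{4}$, so the sector-to-theta assignment is immaterial for the Corollary; but the existence and form of the expansion in each sector still has to be derived, and the doubling trick is what does it.) With that lemma supplied, the rest of your assembly goes through exactly as in the paper.
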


Our second result concerns the asymptotics of the \emph{difference}
between energy density on vertical and horizontal edges. This difference
happens to be of order $\delta^{2};$ such an observable was identified
\cite{KadanoffCeva} as a component of \emph{stress-energy tensor}
in the model, see also \cite{ChelkakGlazmanSmirnov}. 
\begin{thm}
\label{thm: difference} In the limit $\delta\to0$, $\omega_{1,2}^{\delta}\to\omega_{1,2}$
with $\omega_{2}/\omega_{1}=\tau$, we have 
\begin{equation}
\DifEneDen=\frac{\sqrt{2}\pi}{24}\cdot H(\omega_{1},\omega_{2})\cdot\mesh^{2}+\oo{\mesh^{2}},\label{eq: diff}
\end{equation}
where $H(\omega_{1},\omega_{2})$ is given by 
\[
\frac{\PartFun{01}}{\PFTotal}\re\left[\omega_{1}^{-2}(\theta_{2}^{4}-2\theta_{3}^{4})\right]+\frac{\PartFun{10}}{\PFTotal}\re\left[\omega_{1}^{-2}(\theta_{2}^{4}+\theta_{3}^{4})\right]+\frac{\PartFun{11}}{\PFTotal}\re\left[\omega_{1}^{-2}(\theta_{3}^{4}-2\theta_{2}^{4})\right]
\]
where 
\begin{equation}
\PartFun{01}=\left|\theta_{2}\right|,\quad\PartFun{10}=\left|\theta_{4}\right|\quad\PartFun{11}=\left|\theta_{3}\right|,\label{eq: Z}
\end{equation}
and $\PFTotal=$ $\PartFun{01}+\PartFun{10}+\PartFun{11}.$ 
\end{thm}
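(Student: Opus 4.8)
\emph{Proof plan.} The starting point is an exact discrete identity for the energy density on a single edge. In the Kac--Ward/Fisher formalism for the Ising model on $\torus$, the two-point function $\E[\sigma_x\sigma_y]$, hence $\E\en_{(xy)}$, is a combination over the four sectors $(i,j)\in\Z_2^2$ of the entries, near the edge $(xy)$, of the inverse of the corresponding quasi-periodic Kac--Ward matrix (equivalently, of Smirnov's discrete fermionic observable in that sector), sector $(i,j)$ entering with weight $\PartFun{ij}_\delta/\PFTotal_\delta$, where $\PartFun{ij}_\delta$ is that sector's partition function; this is the $g=1$ case of the $2^{2g}$-observable picture recalled in the introduction. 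On the flat torus every sector operator is translation-covariant and is therefore block-diagonalized by the discrete Fourier modes $k$ of the dual torus, so each relevant inverse entry --- and hence $\E\en_H$ and $\E\en_V$ separately --- becomes an explicit double sum over $k$ of a rational trigonometric function of $\delta k$ depending only on $(i,j)$ and on the direction of the edge; these are exactly the sums whose finite-size asymptotics underlie the Ferdinand--Fischer-type statement in Theorem~\ref{thm: sum} and Corollary~\ref{THEOREM1}.

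Now take the difference. Both $\E\en_V$ and $\E\en_H$ tend to $0$, and the $90^{\circ}$ rotation --- which maps $\Lambda$ to $\ii\Lambda$, hence fixes $\tau$ while interchanging horizontal and vertical edges --- together with conformal covariance of the (dimension-one) energy field on the torus, which forces the $\OO{\mesh}$ coefficient to be a function of $\tau$ and $|\omega_1|$ only, shows that this coefficient is the same for $H$ and $V$; hence $\E\en_V-\E\en_H$ starts at order $\mesh^2$. Conceptually its coefficient is $\propto\Re\langle T\rangle_\T$, the real part of the Ising ($c=\tfrac12$) stress--energy tensor expectation on $\T$ --- the continuum avatar of the lattice observable $\en_V-\en_H$ identified by Kadanoff and Ceva --- and $\langle T\rangle_\T=\sum_{ij}(\PartFun{ij}_\delta/\PFTotal_\delta)\langle T\rangle_{ij}$, the sector weights reappearing because $\langle\cdot\rangle_\T=\bigl(\sum_\nu Z_\nu\langle\cdot\rangle_\nu\bigr)/\bigl(\sum_\nu Z_\nu\bigr)$. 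Concretely I would push the Euler--Maclaurin/Poisson-summation analysis of the momentum sums one order further: the smooth part of each summand yields a convergent Eisenstein-type lattice sum which, using the heat equation $\partial_\tau\theta_\nu=\tfrac1{4\pi\ii}\theta_\nu''(0,\cdot)$ and the classical quadratic and quartic identities among theta constants, collapses in the three non-degenerate sectors $(01),(10),(11)$ to, respectively, $\omega_1^{-2}(\theta_2^4-2\theta_3^4)$, $\omega_1^{-2}(\theta_2^4+\theta_3^4)$ and $\omega_1^{-2}(\theta_3^4-2\theta_2^4)$. Taking real parts is forced because, to leading order, $\en_V-\en_H$ is the real part of a discrete-holomorphic quantity; and the universal prefactor $\tfrac{\sqrt2\,\pi}{24}$ is assembled from the normalization $\en_{(xy)}=\sigma_x\sigma_y-\tfrac1{\sqrt2}$, the value of $\InvTem_c$ and the summation constant, with $\tfrac1{24}$ the familiar avatar of $c/12$ at $c=\tfrac12$.

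It remains to supply the sector weights: I would use $\PartFun{ij}_\delta/\PFTotal_\delta\to\PartFun{ij}/\PFTotal$ with $\PartFun{01}=|\theta_2|$, $\PartFun{10}=|\theta_4|$, $\PartFun{11}=|\theta_3|$ as in~(\ref{eq: Z}), which follows from the asymptotics of the Kac--Ward determinants, equivalently of the Laplacian determinants of Theorem~\ref{thm: sum}. The doubly-periodic sector $(00)$ --- whose observable degenerates to a non-zero constant and whose weight $\PartFun{00}_\delta$ tends to $0$ --- must be shown not to contribute at order $\mesh^2$ to the \emph{difference}: a constant observable carries no directional dependence, so its effect on $\E\en_V-\E\en_H$ is of strictly lower order and is discarded. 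Assembling the three sector contributions with their weights gives~(\ref{eq: diff}).

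The step I expect to be the main obstacle is this one-order-finer asymptotics of the momentum sums with the correct theta-constant prefactors. It is a genuinely second-order discrete-to-continuum statement --- the naive limit of $\E\en_V-\E\en_H$ is $0$ --- so one must separate cleanly the smooth ``bulk'' contribution, handled by Euler--Maclaurin/Poisson summation and identified with an Eisenstein series (hence with theta constants), from the contribution of the small-momentum region, which in the $(00)$ sector is precisely where the Kac--Ward/Dirac operator degenerates and where the $\det^{\star}$ regularization and the one-point input foreshadowed in the introduction must enter; the bookkeeping there is the heaviest part of the argument.
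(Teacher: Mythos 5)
Your route --- decomposing $\E\en_e$ over the four sectors with weights $Z^{(ij)}/\PFTotal$, diagonalizing each quasi-periodic Kac--Ward operator in Fourier modes, and extracting the $\delta^{2}$ term by a second-order Euler--Maclaurin/Poisson analysis of the momentum sums --- is precisely the alternative the paper mentions (the anisotropic Ferdinand--Fischer/Salas-type computation) but deliberately does not take; the paper instead gets the theorem as a by-product of the fermionic observables: the exact identity $\E\left[\mu^{(ij)}(\en_{e_{L}}-\en_{e_{R}})\right]=-\frac{1}{\i\sqrt{2}\eta_{a}}\left(F^{(ij)}(a,a_{L})-F^{(ij)}(a,a_{R})\right)$, convergence of $\delta^{-1}F^{(ij)}(a,\cdot)$ to explicit elliptic functions ($\cs$, $\ns$, $\ds$ up to constants), and the standard fact that convergence of discrete holomorphic functions entails convergence of discrete derivatives, so the $\delta^{2}$ coefficient is read off the Taylor expansion of those elliptic functions, yielding $\theta_{2}^{4}-2\theta_{3}^{4}$, $\theta_{2}^{4}+\theta_{3}^{4}$, $\theta_{3}^{4}-2\theta_{2}^{4}$ and the prefactor $\tfrac{\sqrt{2}\pi}{24}$. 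As it stands, your plan has a genuine gap: its entire analytic content --- the one-order-finer asymptotics of the momentum sums, which you yourself flag as the main obstacle --- is only asserted to ``collapse'' to those three theta-constant combinations; nothing in the proposal establishes these coefficients or the constant $\tfrac{\sqrt{2}\pi}{24}$ (the ``avatar of $c/12$'' remark is a heuristic, not a derivation). Moreover, the cancellation of the order-$\delta$ term is justified by conformal covariance of the continuum energy field / identification of $\en_{V}-\en_{H}$ with the stress tensor, which is not an available input --- it is essentially part of what the paper proves --- so that step is circular as written. In the paper no such argument is needed: for $(ij)\neq(00)$ one already has $\E[\mu^{(ij)}\en_{e}]=o(\delta)$ (indeed $O(\delta^{2})$), so the order-$\delta$ part of the energy density sits entirely in the $(00)$ sector and drops out of the difference structurally.

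The $(00)$ sector is the other soft spot. You discard it as being ``of strictly lower order'' because ``a constant observable carries no directional dependence'', but what is needed (and what the paper proves) is the exact identity $\E[\mu^{(00)}\en_{V}]=\E[\mu^{(00)}\en_{H}]$, which follows from a discrete Liouville-type argument showing that the discrete observable $F^{(00)}(a,\cdot)$ itself --- not merely its scaling limit --- is a constant s-holomorphic function, so that its values at $a_{L}$ and $a_{R}$ coincide; note that the $(00)$ term is by no means negligible in absolute size, since it carries the whole order-$\delta$ contribution to $\E\en_{H}+\E\en_{V}$, so ``lower order'' must be argued for the difference specifically. In your Fourier framework this is exactly the degenerate zero-mode bookkeeping (where $\det\KW^{00}$ vanishes and the $\det^{\star}$ regularization enters) that you defer; until that and the second-order sum asymptotics are actually carried out, the proposal is a plausible program along a genuinely different route, but not yet a proof.
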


We believe that his result could have been obtained by the methods
of \cite{FERDINANDFISHER,SALAS1,SALAS2,izmailian2002exact}, by writing
down the partition function of the anisotropic Ising model and then
differentiating with respect to the coupling constant separately on
vertical and horizontal edges. However, we obtain it as a very simple
by-product of our analysis of discrete holomorphic fermionic observables. 

We record a corollary that will be useful in the study of multi-point
energy correlations:
\begin{cor}
\label{Cor: hor_only} In the limit $\delta\to0$, $\omega_{1,2}^{\delta}\to\omega_{1,2}$
with $\omega_{2}/\omega_{1}=\tau$, we have 
\begin{align*}
\E\en_{H}=\frac{(\im\tau)^{\frac{1}{2}}|\theta_{2}\theta_{3}\theta_{4}|}{|\theta_{2}|+|\theta_{3}|+|\theta_{4}|}\cdot\frac{1}{|\T|^{\frac{1}{2}}}\cdot\delta+\oo{\delta}.
\end{align*}
and similarly for $\E\en_{V}$. 
\end{cor}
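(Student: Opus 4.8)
\emph{Proof proposal.} The plan is to deduce this statement directly from Corollary~\ref{THEOREM1} and Theorem~\ref{thm: difference} by an elementary manipulation. One writes
\[
\E\en_{H}=\frac12\bigl[(\E\en_{V}+\E\en_{H})-(\E\en_{V}-\E\en_{H})\bigr],\qquad
\E\en_{V}=\frac12\bigl[(\E\en_{V}+\E\en_{H})+(\E\en_{V}-\E\en_{H})\bigr],
\]
and substitutes the asymptotics of the sum furnished by Corollary~\ref{THEOREM1}, namely $\E\en_{V}+\E\en_{H}=\frac{2(\im\tau)^{1/2}|\theta_{2}\theta_{3}\theta_{4}|}{|\theta_{2}|+|\theta_{3}|+|\theta_{4}|}\cdot\frac{1}{|\T|^{1/2}}\cdot\delta+\oo{\delta}$. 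By Theorem~\ref{thm: difference} one has $\E\en_{V}-\E\en_{H}=\OO{\delta^{2}}=\oo{\delta}$, so the difference term is absorbed into the error at order $\delta$. Halving the sum then yields the stated leading-order expression, which is the same for $\E\en_{H}$ and for $\E\en_{V}$.

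There is no genuine obstacle in this argument; the only things worth recording are the bookkeeping of constants (one half of the coefficient $2$ appearing in Corollary~\ref{THEOREM1} is $1$) and the conceptual point that the asymmetry between horizontal and vertical edges is a strictly lower-order, $\OO{\delta^{2}}$, effect. Consequently, at the leading order in $\delta$ the horizontal and vertical energy densities are indistinguishable and each is exactly one half of $\E\en_{V}+\E\en_{H}$; the finer content of Theorem~\ref{thm: difference} would only be needed if one wished to separate them at order $\delta^{2}$, which is not required here. This is precisely the form in which the one-point asymptotics will be fed into the analysis of the multi-point energy correlations.
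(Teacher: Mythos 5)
Your argument is correct and is exactly the (implicit) derivation the paper intends: Corollary \ref{Cor: hor_only} is obtained by halving the sum from Corollary \ref{THEOREM1} and absorbing the difference, which is $\OO{\delta^{2}}=\oo{\delta}$ by Theorem \ref{thm: difference}, into the error term, with the constant bookkeeping matching the stated coefficient. There is also no hidden circularity in invoking Theorem \ref{thm: difference} here, since (as the paper notes in the proof of Theorem \ref{thm: obs_conv}) its proof uses the observable convergence only in the sectors $(ij)\neq(00)$, which does not rely on Corollary \ref{Cor: hor_only}.
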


In fact, higher-order terms (up to $\delta^{3}$) of the expansion
of $\E\en_{H}+\E\en_{V}$ were computed by Salas and Izmailyan–Hu
in \cite{SALAS1,izmailian2002exact}. They showed that the $\delta^{2}$
term is absent from the expansion. Therefore, our results in fact
give the expansion of $\E\en_{H}$ up to order $\oo{\delta^{2}}$.
Chinta, Jorgenson and Karlsson \cite{CHINTAJORGENSONKARLSSON1} indicate
a way to compute the asymptotic expansion of $\det^{\star}\Delta^{00}$
up to arbitrary order in $\delta$. Combined with our Theorem \ref{thm: difference}
, this could in principle used to the compute $\E\en_{H}+\E\en_{V}$
up to arbitrary order.

We also compute the scaling limit of multi-point correlation functions. 
\begin{thm}
\label{thm: multipoint}In the scaling limit $\delta\to0$, $\T^{\delta}\to\T$,
as $e_{1},\dots,e_{k}$ approach distinct points of $\T$, we have,
for even $k,$ 
\begin{multline}
\pi^{k}\delta{}^{-k}\E\left[\en_{e_{1}}\dots\en_{e_{k}}\right]\longrightarrow\frac{\PartFun{01}}{\PFTotal}\left|\Pf\left[\cs_{\omega_{1},\omega_{2}}(e_{n}-e_{m})\right]\right|^{2}\\
+\frac{\PartFun{10}}{\PFTotal}\left|\Pf\left[\ns_{\omega_{1},\omega_{2}}(e_{n}-e_{m})\right]\right|^{2}+\frac{\PartFun{11}}{\PFTotal}\left|\Pf\left[\ds_{\omega_{1},\omega_{2}}(e_{n}-e_{m})\right]\right|^{2},\label{eq: corr_en_even}
\end{multline}
where \textup{$\cs_{\omega_{1},\omega_{2}},$} $\ns_{\omega_{1},\omega_{2}},$
$\ds_{\omega_{1},\omega_{2}}$ are Jacobian elliptic functions, see
Section \ref{sec: Scaling-limits }.

For odd $k$, we have 
\begin{equation}
\pi^{k}\delta{}^{-k}\E\left[\en_{e_{1}}\dots\en_{e_{k}}\right]\to\i^{k}\cdot\Pf M,\label{eq: corr_en_odd}
\end{equation}
 where $M$ is the $2k\times2k$ anti-symmetric matrix given by $M_{2n-1,2m-1}=\zeta_{\omega_{1},\omega_{2}}(e_{n}-e_{m}),$
$M_{2n,2m}=\overline{\zeta_{\omega_{1},\omega_{2}}(e_{n}-e_{m})},$
and 
\[
M_{2n-1,2m}\equiv-\pi\i\cdot\frac{(\im\tau)^{\frac{1}{2}}|\theta_{2}\theta_{3}\theta_{4}|}{|\theta_{2}|+|\theta_{3}|+|\theta_{4}|}\cdot\frac{1}{|\T|^{\frac{1}{2}}}.
\]
Here $\zeta_{\omega_{1},\omega_{2}}$ is the Weierstrass $\zeta$–function,
see Section \ref{sec: Scaling-limits }.
\end{thm}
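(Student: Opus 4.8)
The plan is to follow the Smirnov--Hongler discrete complex analysis strategy, adapted to the torus, where the fermionic observables split into four topological sectors corresponding to the monodromies around the two cycles. Concretely, I would introduce, for each pair $(i,j)\in\Z_2^2$ and each collection of medial-lattice points $z_1,\dots,z_k$ near the edges $e_1,\dots,e_k$, a discrete fermionic observable $F^{ij}_\delta(z_1,\dots,z_k)$ built from the disorder/order (Kac--Ward) correlators with prescribed $(-1)^i$, $(-1)^j$ periodicity; after antisymmetrization over the $2^{2g}=4$ sectors these combine, via the sector partition functions $\PartFun{ij}$, into $\E[\en_{e_1}\cdots\en_{e_k}]$. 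The first step is the combinatorial identity expressing $\E[\en_{e_1}\cdots\en_{e_k}]$ as a normalized alternating sum of the $F^{ij}_\delta$ evaluated on the diagonal $z_m\to e_m$, using the high-temperature expansion and the fact that $\en_{(xy)}=\sigma_x\sigma_y-\tfrac1{\sqrt2}$ measures the local edge weight; this is the torus analog of the identities in \cite{HONGLERSMIRNOV,hongler_thesis}. The second step is discrete holomorphicity: each $F^{ij}_\delta$, as a function of any one of its arguments, satisfies the s-holomorphicity (spin-holomorphic) equation away from the other insertions and the diagonal, with a specified singularity of the form $\tfrac1{z-e_m}$ near each $e_m$, and with monodromy $(-1)^i$ around the first cycle, $(-1)^j$ around the second.

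The core analytic step is then a precompactness-plus-identification argument: using the uniform regularity estimates for s-holomorphic functions (as in \cite{CHELKAKSMIRNOV,chelkak2020ising}), extract subsequential scaling limits $f^{ij}(z_1,\dots,z_k)$ which are genuinely holomorphic in each variable on $\T\setminus\{e_1,\dots,e_k\}$, with simple poles of residue $\tfrac1\pi$ (after the correct normalization $\pi^k\delta^{-k}$) at each $e_m$, and with the prescribed $(\pm1)$-periodicities. For the sectors $(i,j)\ne(0,0)$, such a function is unique: the only meromorphic function on $\T$ with the prescribed simple poles, residues and nontrivial periodicity is the appropriate Pfaffian-type combination of the Jacobi elliptic functions $\cs,\ns,\ds$ (which are precisely the three second-order elliptic functions with half-period antiperiodicities), and on the diagonal this yields $|\Pf[\cs(e_n-e_m)]|^2$ etc.; squaring the modulus accounts for passing from the fermionic observable to the real-valued energy correlation. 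The doubly-periodic sector $(0,0)$ is the genuinely new difficulty, and I would handle it exactly as the introduction advertises: here the space of admissible limits is one dimension larger because nonzero \emph{constant} functions are doubly periodic, so the Weierstrass $\zeta$-function (which has the right pole structure but only quasi-periodicity, corrected by an additive constant) is determined only up to this constant. For even $k$ the constant drops out after antisymmetrization/pairing and the $(0,0)$ contribution in fact vanishes against the others in the way encoded by (\ref{eq: corr_en_even}); for odd $k$ it does not, and the undetermined constant is pinned down by an independent input --- the one-point function from Corollary \ref{Cor: hor_only}, i.e. the known scaling limit of $\E\en_H$ --- which fixes exactly the off-diagonal entries $M_{2n-1,2m}$ as the stated multiple of $(\im\tau)^{1/2}|\theta_2\theta_3\theta_4|/(|\theta_2|+|\theta_3|+|\theta_4|)$.

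Finally, to assemble (\ref{eq: corr_en_odd}) one reads off that the limiting correlation is the diagonal value of the alternating sum $\sum_{ij}\tfrac{\PartFun{ij}}{\PFTotal}(\pm)f^{ij}$; grouping the two "halves" of each fermionic observable (the holomorphic and antiholomorphic parts, responsible for the $M_{2n-1,2m-1}=\zeta(e_n-e_m)$ and $M_{2n,2m}=\overline{\zeta(e_n-e_m)}$ blocks) and using the constant cross-term fixed above produces the $2k\times2k$ antisymmetric matrix $M$, whose Pfaffian, times $\i^k$, is the claimed limit; the factor $\i^k$ and the absence of the modulus-squared (present for even $k$) come from the parity of the number of sign flips in the Pfaffian expansion on an odd number of points. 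The main obstacle, as flagged in the introduction, is the odd-$k$ case: there is no combinatorial object playing the role of the "propagator" $\zeta(e_n-e_m)$ on a single fermionic pair, so one cannot reduce to two-point observables as in \cite{hongler_thesis,CHI_Mixed}; instead one must run the convergence argument directly for the full $k$-point observable, control its diagonal singularities and monodromies uniformly, and only recognize the Pfaffian structure by analyzing the continuum limit --- in particular proving that the subsequential limit is the unique holomorphic-in-each-variable function with the right poles, periodicities, and the normalization dictated by Corollary \ref{Cor: hor_only}.
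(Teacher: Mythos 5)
Your proposal is correct and follows essentially the same route as the paper: decomposition over the four sectors via the disorder operators weighted by $\PartFun{ij}/\PFTotal$, s-holomorphic multi-point observables with prescribed poles and (anti)periodicities, precompactness plus uniqueness to identify the limits with $\cs,\ns,\ds$ in the three antiperiodic sectors, and the degenerate $(00)$ sector built from the Weierstrass $\zeta$-function with the constant pairing fixed by the one-point input of Corollary \ref{Cor: hor_only}, running the convergence argument directly on the $k$-point observable rather than reducing to two-point ones. The only small imprecision is your statement that for even $k$ the $(00)$ contribution ``vanishes against the others'': in fact $\delta^{-k}\E[\mu^{(00)}\en_{e_{1}}\cdots\en_{e_{k}}]$ tends to zero on its own (by the parity structure of that sector), with no cancellation against the other sectors needed.
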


The formula (\ref{eq: corr_en_even}) has been predicted in the physics
literature by di Francesco, Saleur and Zuber \cite{diFrancesco1987critical},
see also \cite[Section 12]{DIFRANCESCOMATHIEUSENECHAL}. The formula
(\ref{eq: corr_en_odd}), on the other hand, appears to be new. We
expect it to be related to the prediction of \cite{diFrancesco1987critical}
by an appropriate version of Fay's formula.

\section{Partition functions and Kac–Ward determinants}

\label{sec: Kac-Ward}

In this Section, we record the necessary results involving Kac–Ward
solution to the critical Ising model. This approach was originated
in \cite{KACWARD}; the first complete proof was given in \cite{DOLBILIN},
and in the case of surfaces in \cite{CIMASONI1}, with simplified
proof in \cite{CHELKAKCIMASONIKASSEL}. We now recall the required
material in detail in the case of a torus, following \cite[Section 4]{CHELKAKCIMASONIKASSEL}.

Let us denote $\EveSub{\torus}$ the set of even subgraphs of $\torus$
(understood as subsets of edges of $\torus$), that is, subgraphs
of $\torus$ such that each vertex is adjacent to an even number of
edges. It is well known (see for example \cite[Subsection 2.1]{CIMASONI2}
for a short exposition) that the partition function $\ParFunIsi$
can be expressed as (the \emph{high temperature expansion}) 
\begin{align*}
\ParFunIsi=\cosh\left(\InvTem\right)^{\abs{\edges{\torus}}}2^{\vertexes{\torus}}\sum_{\xi\in\EveSub{\torus}}\tanh\left(\InvTem\right)^{\abs{\xi}}.
\end{align*}
To avoid the constant factor appearing in every formula, let us
define the partition function of the Ising model without this factor:
\begin{align}
\ParFunIsiRed:=\sum_{\xi\in\EveSub{\torus}}\CriPar^{\abs{\xi}},\quad \alpha=\tanh(\beta)\label{HIGHTEMPERATUREEXPANSIONREDUCED}
\end{align}
We shall refer to this sum also as ``partition function''.

Pick a point $z_{0}\in\C$ such that the two lines $\gamma_{1,2}:=\{z_{0}+t\omega_{1,2}^{\delta}:t\in\R\}$
do not intersect $\delta\Z^{2}$. We identify $\gamma_{1,2}$ with
their projection onto $\torus$. Given an edge $e\in\torus$ and $i,j\in\Z_{2}$,
put 
\begin{equation}
\varphi_{ij}(e)=i\ind_{e\cap\gamma_{1}\neq0}+j\ind_{e\cap\gamma_{2}\neq0}\,\mod2=i\varphi_{10}(e)+j\varphi_{01}(e)\,\mod2\label{eq:def_phi}
\end{equation}
These are four $\Z_{2}$-valued \emph{flat connections} on $\torus$, i.e., the function $\gamma\mapsto \varphi_{ij}(\gamma):=\sum_{e\in\gamma}\varphi_{ij}(e)$ vanishes identically on contractible paths.
In fact, these are the only flat connections
up to gauge equivalence. As explained in \cite[Section 4]{CHELKAKCIMASONIKASSEL},
this allows one to construct four \emph{spin structures} $\lambda_{ij}$,
which are, roughly speaking, ways to assign a winding number modulo
$4\pi$ to a closed lattice path. Namely, if the lattice path $\gamma$
consist of the edges $e_{1},e_{2},\dots,e_{k}$, then $\wind_{\lambda_{ij}}(\gamma)=\wind(\gamma)+2\pi\varphi_{ij}(\gamma)$,
where $\wind$ is the winding of the lift of $\gamma$ to the plane.
This, in its turn, allows one to define four \emph{quadratic forms}
$q_{ij}$ on $\EveSub{\torus}$: given $\xi\in\EveSub{\torus}$, decompose
it into a collection of loops $\xi_{1},\dots\xi_{N}$ that do not
intersect themselves or each other transversally (to this end, for
each vertex of degree 4 in $\xi$, pick any two incident edges forming
a right angle, and declare them belong to the same loop, and also
other two to belong to the same loop). Then, put $(-1)^{q_{ij}(\xi)}:=\prod_{k=1}^{N}(-\exp(\frac{\i}{2}\wind_{\lambda_{ij}}(\xi_{k})))$.

Let us calculate $q_{ij}(\xi)$ concretely. If a loop $\xi_{k}$ lifts
to a closed loop on $\delta\Z^{2}$, then it crosses $\gamma_{1}$
and $\gamma_{2}$ an even number of times, and thus $\wind_{\lambda_{ij}}(\xi_{k})=\wind(\xi_{k})=2\pi$.
Otherwise, it lifts to a path connecting two distinct points in the
plane, and we have $\wind(\xi_{k})=0.$ Thus, we have 
\[
(-1)^{q_{ij}(\xi)}=(-1)^{N(\xi)+\varphi_{ij}(\xi)},
\]
where $N(\xi)$ is a number of non-contractible loops in (the decomposition
of) $\xi$. The lift of a non-contractible loop connects $z$ and
$z+m_{1}\omega_{1}^{\delta}+m_{2}\omega_{2}^{\delta}$; since the
loops are simple and non-intersecting, $m_{1}$ and $m_{2}$ are relatively
prime and the same for all non-contractible loops in $\xi$. This means
that $N(\xi)=\varphi_{01}(\xi)+\varphi_{01}(\xi)-\varphi_{01}(\xi)\varphi_{10}(\xi)$,
by exclusion-inclusion: configurations with an odd number of non-contractible
loops contributes to $\varphi_{01}(\xi)$ iff $m_{1}$ is odd (respectively,
to $\varphi_{10}(\xi)$ iff $m_{2}$ is odd). Therefore,
\begin{equation}
(-1)^{q_{ij}(\xi)}=(-1)^{(1-i)\varphi_{10}(\xi)+(1-j)\varphi_{01}(\xi)+\varphi_{10}(\xi)\cdot\varphi_{01}(\xi)}.\label{eq: qf}
\end{equation}
Of course, this is a manifestation of the general fact that $q_{ij}$
is a quadratic form on the homology space $H_{1}(\torus,\Z_{2})$. We summarize the values of $q_{ij}(\chi)$ in Table 1:
\begin{table}
\caption{The values of $q_{ij}$ on the four homology classes in $H_{1}(\torus,\Z_{2})$. Here $\hat{\gamma}_1$ and $\hat{\gamma}_2$ are simple loops in $\torus$ that lift to paths on $\delta\Z^2$ connecting the origin to $\omega_{2}^{\delta}$ and $\omega_{1}^{\delta}$ respectively.}
\begin{center}
\begin{tabular}{|c|c|c|c|c|}
\hline
$\xi$& $q_{00}(\xi)$ & $q_{01}(\xi)$ & $q_{10}(\xi)$ & $q_{11}(\xi)$\\ \hline
$[\emptyset]$& $0$& $0$& $0$& $0$\\ \hline
$[\hat{\gamma}_1]$& $1$& $1$& $0$& $0$\\ \hline
$[\hat{\gamma}_2]$& $1$& $0$& $1$& $0$\\ \hline
$[\hat{\gamma}_1+\hat{\gamma}_2]$& $1$& $0$& $0$& $1$\\ \hline
\end{tabular}
\end{center}
\end{table}

To each of the four spin structures, one associates a twisted Kac–Ward
matrix $\KW^{ij}$. That is is a matrix indexed by the set $\OriEdg{\torus}$
of oriented edges of $\torus$ , by putting $\KW^{ij}:=\unitmat-T^{ij}$,
where 
\begin{align*}
T^{ij}_{\pvec e,\pvec e'}:=\begin{cases}
(-1)^{\varphi_{ij}\left(\pvec e\right)}\exp\left(\frac{\i}{2}\wind\left(\pvec e,\pvec e'\right)\right)\alpha,\text{if }t\left(\pvec e\right)=o\left(\pvec e'\right)\text{ but }\pvec e'\neq-\pvec e\\
0,\text{ otherwise}
\end{cases}
\end{align*}
Here $t(\pvec e)$ and $o(\pvec e)$ denote the end and the beginning
of $e$, respectively.

The following theorem due to Cimasoni relates the Kac–Ward matrices
with Ising partition functions: 
\begin{thm}
\label{LEMMACIMASONILEMMA} We have, for $i,j\in\Z_{2}$, 
\begin{align}
\sqrt{\det\KW^{ij}}=\sum_{\xi\in\EveSub{\torus}}(-1)^{q_{ij}(\xi)}\alpha^{\abs{\xi}}=:Z^{(ij)}.\label{eq: cimasoni}
\end{align}
\end{thm}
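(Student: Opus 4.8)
The plan is to prove Theorem \ref{LEMMACIMASONILEMMA} by the standard Kac--Ward loop-expansion argument, specialized to the torus and made sensitive to the spin structure $\lambda_{ij}$. First I would expand $\det \KW^{ij} = \det(\unitmat - T^{ij})$ using the identity $\log\det(\unitmat - T^{ij}) = -\sum_{\ell \geq 1}\frac{1}{\ell}\operatorname{tr}((T^{ij})^\ell)$, and interpret $\operatorname{tr}((T^{ij})^\ell)$ as a sum over closed walks of length $\ell$ in the oriented-edge graph with no immediate backtracks. Exponentiating, one obtains $\det \KW^{ij}$ as a sum over finite families of such non-backtracking closed walks (``loops''), with each loop $\gamma$ carrying the weight $-(-1)^{\varphi_{ij}(\gamma)}\exp(\tfrac{\i}{2}\operatorname{wind}(\gamma))\alpha^{|\gamma|}$, where here $|\gamma|$ counts edges with multiplicity; the combinatorial bookkeeping to pass from the walk expansion of the logarithm to an expansion of the determinant as a product over primitive loops is classical (Sherman's lemma / the Kac--Ward identity), and I would cite \cite{DOLBILIN,CHELKAKCIMASONIKASSEL} rather than redo it.

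The heart of the argument is then to match this loop expansion with $\sum_{\xi}(-1)^{q_{ij}(\xi)}\alpha^{|\xi|}$. The combinatorial miracle, due to Kac--Ward and made rigorous in \cite{DOLBILIN}, is that when one sums over all ways of reorganizing a given even subgraph $\xi$ into non-backtracking loops (i.e.\ all ways of resolving the degree-$4$ vertices), the turning-angle phases $\exp(\tfrac{\i}{2}\operatorname{wind})$ together with the signs $(-1)^{\#\mathrm{loops}}$ conspire so that all resolutions except the ``non-crossing'' one cancel, and the non-crossing resolutions each contribute $+1$ up to the global sign encoded by the winding. Concretely, for a single collection of non-intersecting simple loops $\xi_1,\dots,\xi_N$ one gets $\prod_k(-\exp(\tfrac{\i}{2}\operatorname{wind}(\xi_k)))$, which is exactly $(-1)^{q_{00}(\xi)}$ in the untwisted case by Whitney's formula; the twist by $\varphi_{ij}$ inserts the extra factor $(-1)^{\varphi_{ij}(\xi)}$, and comparing with \eqref{eq: qf} --- i.e.\ using the already-established computation $(-1)^{q_{ij}(\xi)} = (-1)^{(1-i)\varphi_{10}(\xi)+(1-j)\varphi_{01}(\xi)+\varphi_{10}(\xi)\varphi_{01}(\xi)}$ --- identifies the signed weight of $\xi$ precisely as $(-1)^{q_{ij}(\xi)}\alpha^{|\xi|}$. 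This is really just re-deriving Theorem 4.1 of \cite{CHELKAKCIMASONIKASSEL} in the notation of the present paper, so the cleanest exposition is to set up the correspondence between flat connections $\varphi_{ij}$, spin structures $\lambda_{ij}$ and the twisted matrices $\KW^{ij}$ as done above, and then invoke that theorem.

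The one genuine subtlety --- and the step I expect to require the most care --- is the choice of the square root in $\sqrt{\det \KW^{ij}}$. The loop expansion computes $\det \KW^{ij}$, which is the square of the right-hand side, so one must argue that $\det \KW^{ij}$ is a perfect square as a polynomial in $\alpha$ (true because $T^{ij}$, in the appropriate ordering of oriented edges, is conjugate to a matrix of the form needed for a Pfaffian interpretation, or because $\KW^{ij}$ factors as a product related to a discrete Dirac operator and its adjoint), and then fix the sign of the square root by evaluating at $\alpha = 0$: there $\det \KW^{ij} = 1$ and $Z^{(ij)} = (-1)^{q_{ij}(\emptyset)}= 1$, so the principal branch $\sqrt{\cdot}$ with $\sqrt{1}=1$ is the correct one, and continuity in $\alpha$ (for $\alpha$ in $[0,1)$, where the expansion converges) propagates the identity. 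I would note that at criticality $\alpha = \alpha_c = \sqrt{2}-1 < 1$ lies in the region of convergence, so \eqref{eq: cimasoni} holds there; and I would remark that $Z^{(ij)}$, being a finite signed sum of monomials in $\alpha$, extends the identity to all $\alpha$ by analytic continuation. Beyond this, everything is a citation to \cite{CHELKAKCIMASONIKASSEL} together with the explicit identification of $q_{ij}$ already carried out via Table~1 and \eqref{eq: qf}.
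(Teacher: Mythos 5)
Your proposal is correct and is essentially the paper's own treatment: the paper states this as Cimasoni's theorem and simply cites \cite{CIMASONI1,CHELKAKCIMASONIKASSEL} for the proof, and your loop-expansion sketch (non-backtracking walk expansion of $\det(\unitmat-T^{ij})$, Sherman-type cancellation, Whitney's formula for the winding sign, and the twist by the flat connection $\varphi_{ij}$ matched against (\ref{eq: qf})) is exactly the argument behind those references. One small caution on your square-root step: for $(ij)=(00)$ the sum $Z^{(00)}$ vanishes at $\alpha_{c}$ and changes sign there, so fixing the branch at $\alpha=0$ and propagating by continuity determines the sign only for $\alpha\le\alpha_{c}$ (which suffices for the identity as used here); the paper addresses the branch choice where it actually matters, in the proof of Lemma \ref{lem: B_ij}.
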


The following Lemma is a particular case of the equation (4.6) in
\cite{CHELKAKCIMASONIKASSEL}. It allows, in particular, to express
the partition function $\ParFunIsiRed$ in terms of determinants of
Kac–Ward matrices. 
\begin{lem}
\label{Lemma: sum_qf} We have, for any $\xi\in\EveSub{\torus}$,
\begin{align}
(-1)^{q_{10}(\xi)}+(-1)^{q_{01}(\xi)}+(-1)^{q_{11}(\xi)}-(-1)^{q_{00}(\xi)}=2\label{eq: lem_qf}\\
2\ParFunIsiRed=-\sqrt{\det\KW^{00}}+\sqrt{\det\KW^{01}}+\sqrt{\det\KW^{10}}+\sqrt{\det\KW^{11}}
\end{align}
\end{lem}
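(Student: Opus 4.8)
The plan is to deduce the second line of the display in Lemma~\ref{Lemma: sum_qf} directly from the pointwise identity \eqref{eq: lem_qf} together with Cimasoni's Theorem~\ref{LEMMACIMASONILEMMA}, so that the only statement genuinely requiring an argument is \eqref{eq: lem_qf} itself. For that I would use the explicit formula \eqref{eq: qf}: its right-hand side depends on $\xi$ only through the pair $(\varphi_{10}(\xi),\varphi_{01}(\xi))\in\Z_2^2$, that is, only through the $\Z_2$-homology class of $\xi$, so it suffices to check \eqref{eq: lem_qf} on each of the four homology classes — equivalently, on each of the four rows of Table~1.

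Concretely, write $a:=\varphi_{10}(\xi)$ and $b:=\varphi_{01}(\xi)$; then \eqref{eq: qf} gives $(-1)^{q_{00}(\xi)}=(-1)^{a+b+ab}$, $(-1)^{q_{01}(\xi)}=(-1)^{a+ab}$, $(-1)^{q_{10}(\xi)}=(-1)^{b+ab}$ and $(-1)^{q_{11}(\xi)}=(-1)^{ab}$. Running through $(a,b)\in\{(0,0),(1,0),(0,1),(1,1)\}$ one sees that for $a=b=0$ all four signs equal $+1$ and the left-hand side of \eqref{eq: lem_qf} is $1+1+1-1=2$, while in each of the remaining three cases $(-1)^{q_{00}(\xi)}=-1$ and exactly one of $(-1)^{q_{01}(\xi)}$, $(-1)^{q_{10}(\xi)}$, $(-1)^{q_{11}(\xi)}$ equals $-1$, so the left-hand side is again $1+1-1-(-1)=2$. (Equivalently, this records the fact that $q_{00}$ is the unique one of the four quadratic forms with $\mathrm{Arf}=1$, the other three having $\mathrm{Arf}=0$.) This establishes \eqref{eq: lem_qf}.

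It then remains to multiply \eqref{eq: lem_qf} by $\CriPar^{\abs{\xi}}$ and sum over $\xi\in\EveSub{\torus}$: the right-hand side becomes $2\sum_{\xi}\CriPar^{\abs{\xi}}=2\,\ParFunIsiRed$ by \eqref{HIGHTEMPERATUREEXPANSIONREDUCED}, and the left-hand side becomes $Z^{(10)}+Z^{(01)}+Z^{(11)}-Z^{(00)}$, which by \eqref{eq: cimasoni} equals $\sqrt{\det\KW^{10}}+\sqrt{\det\KW^{01}}+\sqrt{\det\KW^{11}}-\sqrt{\det\KW^{00}}$; rearranging yields the claimed identity for $2\,\ParFunIsiRed$. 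I do not expect a serious obstacle here — once \eqref{eq: qf} is in hand the whole proof is a four-case verification followed by the interchange of a finite sum. The one point deserving a moment's care is the reduction in the first paragraph, namely that the signs $(-1)^{q_{ij}(\xi)}$ are genuinely functions of the homology class of $\xi$, and this is precisely what \eqref{eq: qf} (equivalently Table~1) provides.
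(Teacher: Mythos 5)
Your proposal is correct and follows essentially the same route as the paper: verify the sign identity \eqref{eq: lem_qf} from the explicit formula \eqref{eq: qf} (the paper phrases the four-case check as ``altering $\varphi_{01}$ or $\varphi_{10}$ always flips exactly two of the four signs,'' which is your enumeration in compressed form), then multiply by $\alpha^{|\xi|}$, sum over $\xi\in\EveSub{\torus}$, and invoke Theorem \ref{LEMMACIMASONILEMMA}. No gaps; the Arf-invariant remark is a pleasant but inessential aside.
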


\begin{proof}
The first identity easily follows from (\ref{eq: qf}): if $\varphi_{01}(\xi)=\varphi_{10}(\xi)=0$,
then all the terms in the left-hand side are equal to $1$, and altering
$\varphi_{01}(\xi)$ or $\varphi_{10}(\xi)$ always changes the sign
of exactly two terms. The second identity is obtained by multiplying
the first one by $\alpha^{|\xi|}$ and summing over $\xi$. 
\end{proof}
In the case of a torus, the determinant of Kac–Ward matrix can be
calculated explicitly. Let us first denote, for $q\in\C$, $z\left(q\right)=\exp\left(2\pi\i\re q\right)$
and $w\left(q\right)=\exp\left(2\pi\i\im q\right)$, and 
\begin{align}
\TwiKacWarWeiDetFac{\alpha}q:=\left(1+\alpha^{2}\right)^{2}+\alpha\left(\alpha^{2}-1\right)\left(z\left(q\right)+z\left(q\right)^{-1}+w\left(q\right)+w\left(q\right)^{-1}\right);\label{eq: def_v_alpha_q}
\end{align}
note that this is real non-negative for all $\alpha\in (0,1)$ and any $q$, and vanishes only at $q\in\mathbb{Z}$ and $\alpha=\sqrt{2}-1$. For the following (purely combinatorial) discussion, we will assume
$\delta=1$. The dual lattice $\TorLat^{*}$ of a lattice $\TorLat\subset\Z^{2}$
is defined as 
\begin{align*}
\TorLat^{*}:=\{q\in\C\,|\re{z}\re{q}+\im{z}\im{q}\in\Z\textnormal{ for all }z\in\TorLat\},
\end{align*}
and we define shift vectors (complex numbers) $\TwiKacWarForShiVec{ij}$,
$i,j\in\{0,1\}$ as follows: 
\begin{align*}
\re{\TwiKacWarForShiVec{ij}}\re{\omega_{1}^{\delta}}+\im{\TwiKacWarForShiVec{ij}}\im{\omega_{1}^{\delta}} & =\frac{i}{2}\\
\re{\TwiKacWarForShiVec{ij}}\re{\omega_{2}^{\delta}}+\im{\TwiKacWarForShiVec{ij}}\im{\omega_{2}^{\delta}} & =\frac{j}{2}.
\end{align*}
Note that such $\TwiKacWarForShiVec{ij}$ with these properties exist
and are unique as $\omega_{1,2}^{\delta}$ form a basis of the plane.
\begin{thm}
\label{thm: det_KW} One has 
\begin{align}
\det\KW^{ij}=\prod_{q\in\sfrac{\Z^{2}}{\TorLat^{*}}+\TwiKacWarForShiVec{ij}}\TwiKacWarWeiDetFac{\alpha}q.\label{EQUATIONPRODUCTFORMULAFORTHETWISTEDKACWARDDETERMINANTi,j}
\end{align}
 
\end{thm}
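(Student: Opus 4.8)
The plan is to diagonalise $\TwiKacWar{ij}=\unitmat-T^{ij}$ by a Fourier transform on the torus, exploiting that $T^{ij}$ is translation--covariant up to the flat twist $\varphi_{ij}$.

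First I record the structure of $\OriEdg{\torus}$: an oriented edge of $\torus$ is a pair (base vertex, one of the four lattice directions), so $\C^{\OriEdg{\torus}}\cong\C^{\torus}\otimes\C^{4}$. Since $\varphi_{ij}$ is a flat $\Z_{2}$--connection, it is gauge equivalent to a ``constant'' connection with the same holonomies around the two generating cycles of $H_{1}(\torus,\Z_{2})$; these holonomies are read off from (\ref{eq: qf}) (equivalently, Table~1). Conjugating $T^{ij}$ by the corresponding diagonal $\{\pm1\}$--matrix (indexed by oriented edges through $\pvec e\mapsto o(\pvec e)$) changes neither $\det\TwiKacWar{ij}$ nor the winding weights $\exp(\tfrac{\i}{2}\wind(\pvec e,\pvec e'))$, and replaces the cochain $(-1)^{\varphi_{ij}}$ by this constant twist. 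After this change, a twisted function on $\OriEdg{\torus}$ is nothing but a function $\psi$ on $\OriEdg{\Z^{2}}$ satisfying $\psi(\pvec e+\omega_{1}^{\delta})\equiv\eta_{1}\psi(\pvec e)$, $\psi(\pvec e+\omega_{2}^{\delta})\equiv\eta_{2}\psi(\pvec e)$ for two signs $\eta_{1},\eta_{2}$ determined by $i,j$, and $T^{ij}$ becomes the restriction of the genuinely translation--invariant Kac--Ward operator $T$ on $\OriEdg{\Z^{2}}$ to this quasi--periodic subspace.

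Next I diagonalise $T$. Its joint eigenfunctions for the $\Z^{2}$--translations are the plane waves $\pvec e\mapsto\psi_{d}\,z(q)^{\re v}\,w(q)^{\im v}$, where $v=o(\pvec e)$, $d$ is the direction of $\pvec e$, and $(\psi_{1},\dots,\psi_{4})\in\C^{4}$. Such a plane wave lies in the quasi--periodic subspace precisely when $e^{2\pi\i\langle q,\omega_{k}^{\delta}\rangle}=\eta_{k}$ for $k=1,2$ (here $\langle q,z\rangle:=\re q\,\re z+\im q\,\im z$); modulo $\Z^{2}$, which is all that distinguishes plane waves on $\Z^{2}$, the admissible $q$ form a coset $\sfrac{\Z^{2}}{\TorLat^{*}}+\TwiKacWarForShiVec{ij}$ --- matching the $\eta_{k}$ against the defining relations of $\TwiKacWarForShiVec{ij}$ --- of cardinality $|\sfrac{\Z^{2}}{\TorLat^{*}}|=|\torus|$, which matches the size of the $4|\torus|\times4|\torus|$ matrix $\TwiKacWar{ij}$. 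On the four--dimensional span $E_{q}$ of the plane waves with a fixed such $q$, the operator $T$ acts as an explicit $4\times4$ matrix $\hat T(q)$: its $(d,d')$ entry equals $\CriPar\exp(\tfrac{\i}{2}\theta_{d'd})$ times the monomial in $z(q)^{\pm1},w(q)^{\pm1}$ that records the lattice step, where $\theta_{d'd}\in\{0,\pm\tfrac{\pi}{2}\}$ is the turn and the ``U--turn'' entries vanish. Since the quasi--periodic subspace equals $\bigoplus_{q}E_{q}$ and $T^{ij}$ preserves each $E_{q}$,
\begin{equation*}
\det\TwiKacWar{ij}=\prod_{q\in\sfrac{\Z^{2}}{\TorLat^{*}}+\TwiKacWarForShiVec{ij}}\det\bigl(\unitmat-\hat T(q)\bigr).
\end{equation*}

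It remains to evaluate this $4\times4$ determinant --- the standard ``local'' Kac--Ward determinant. Because the U--turns are missing and the turning angles are rigidly $e^{\pm\i\pi/4}$ or $1$, most products in the expansion vanish; collecting the survivors and using $e^{\i\pi/4}+e^{-\i\pi/4}=\sqrt2$ gives exactly $\det(\unitmat-\hat T(q))=\TwiKacWarWeiDetFac{\CriPar}{q}=(1+\CriPar^{2})^{2}+\CriPar(\CriPar^{2}-1)\bigl(z(q)+z(q)^{-1}+w(q)+w(q)^{-1}\bigr)$, independently of $i,j$, which is the assertion. I expect the one genuinely delicate point to be bookkeeping rather than analysis: carrying out the gauge change and nailing the dictionary between the combinatorial twist $\varphi_{ij}$ (crossings of $\gamma_{1},\gamma_{2}$) and the arithmetic shift $\TwiKacWarForShiVec{ij}$ --- which generator of $H_{1}(\torus,\Z_{2})$ carries which holonomy, the base--point and orientation conventions hidden in the definition of $T^{ij}$, and consistency with the branch of $\sqrt{\det\TwiKacWar{ij}}$ used in Theorem~\ref{LEMMACIMASONILEMMA}. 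One should also note that the blocks $\hat T(q)$ need not be diagonalisable, but that is harmless: the above block decomposition already yields the product formula for the determinant.
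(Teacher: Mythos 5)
Your proposal is correct and follows essentially the same route as the paper's proof: block-diagonalise the (gauge-equivalently constant-twisted) Kac--Ward operator by plane waves indexed by $q\in\sfrac{\Z^{2}}{\TorLat^{*}}+\TwiKacWarForShiVec{ij}$, reducing to a $4\times4$ one-vertex determinant that evaluates to $\TwiKacWarWeiDetFac{\alpha}{q}$. The only difference is that you spell out the gauge change and the completeness of the plane-wave decomposition a bit more explicitly than the paper does, which is harmless.
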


\begin{proof}
The theorem is as in \cite[Lemma 4.1]{CIMASONIDUMINIL-COPIN}. For
completeness, let us provide a short argument. This argument
originates from Kac and Ward \cite{KACWARD}. 

Observe that we can think of $\KW^{ij}$ as a non-twisted Kac–Ward
matrix $\KW:=\KW^{00}$ acting on the corresponding space
of (anti-)periodic function of oriented edges, similar to $V^{ij}.$
The untwisted Kac–Ward operator $\KW$ commutes with shifts by $\Z^{2},$
hence, it is natural to look for eigenvectors of $\KW^{ij}$ that
are also eigenvectors of these shifts. The latter in general have
the form
\[
U_{q}(\pvec e)=\hat{U}([\pvec e])e^{2\pi \i\re qx(\pvec e)}e^{2\pi \i\im qy(\pvec e)}=\hat{U}([\pvec e])z(q)^{x(\pvec e)}w(q)^{y(\pvec e)},
\]
where $\hat{U}([\pvec e])$ is some function of the equivalence class
of $\pvec e$ under shifts, $x(\pvec e),y(\pvec e)\in \mathbb{Z}^2\subset \mathbb{R}^2$ are ``coordinates''
of $\pvec e$ (say, of its beginning), and the condition $q\in\sfrac{\Z^{2}}{\TorLat^{*}}+\TwiKacWarForShiVec{ij}$
stems from the (anti-)periodicity requirement. The
action of $\KW^{ij}$ on $U_{q}$ is then straightforward to compute,
\[
\KW(U_{q})(\pvec e)=\left(\KW^{(q)}(\hat{U})\right)([\pvec e])z(q)^{x(\pvec e)}w(q)^{y(\pvec e)},
\]
 where $\KW^{(q)}$ is the twisted Kac–Ward operator of a
one-vertex torus corresponding to the connection $\varphi^{q}$ with
$\varphi^{q}(\pvec e)=z(q)$ for $\pvec e$ pointing to the right,
and $\varphi^{q}(\pvec e)=w(q)$ for $\pvec e$ pointing upwards.
Therefore, we conclude that 
\[
\det\KW^{ij}=\prod_{q\in\sfrac{\Z^{2}}{\TorLat^{*}}+\TwiKacWarForShiVec{ij}}\det\KW^{(q)}.
\]
 Finally, $\KW^{(q)}$ is an explicit $4\times4$ matrix, and a straightforward
computation yields $\det\KW^{(q)}=v(\alpha,q).$
\end{proof}
We observe that the \emph{same} expressions give the determinants
of the discrete Laplacians defined in Section \ref{subsec: main_results}.
\begin{prop}
\label{prop: KW_to_Laplacian} The discrete Laplacian $\Delta^{ij}$
has eigenvalues 
\begin{align}
\frac{1}{2\alpha_{c}^{2}}\TwiKacWarWeiDetFac{\alpha_{c}}q,\quad q\in\sfrac{\Z^{2}}{\TorLat^{*}}+\TwiKacWarForShiVec{ij},\label{eq: detlap_viaKacward}
\end{align}
where $\alpha_{c}=\sqrt{2}-1$. 
\end{prop}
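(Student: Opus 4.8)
The plan is to diagonalize the discrete Laplacian $\Delta^{ij}_\delta$ by the same Fourier-type eigenvectors that were used to diagonalize the Kac--Ward operator in the proof of Theorem~\ref{thm: det_KW}, and simply match eigenvalues. Since we work with $\delta=1$ for this combinatorial statement (the eigenvalues scale by $\delta^{-2}$ for general $\delta$, but the product formula is of course independent of the choice), recall that $\Delta_\delta$ commutes with translations by $\Z^2$. On the space $V^{ij}_\delta$ of functions with the prescribed $(-1)^i,(-1)^j$ (anti)periodicity, it is therefore natural to look for eigenfunctions that are also eigenfunctions of the $\Z^2$-shifts, i.e.\ of the form $U_q(v)=z(q)^{x(v)}w(q)^{y(v)}$ with $x(v),y(v)$ the integer coordinates of $v$. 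The (anti)periodicity forces exactly $q\in \Z^2/\TorLat^{*}+\TwiKacWarForShiVec{ij}$, giving $|\torus|$ such $q$, matching $\dim V^{ij}_\delta$.

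First I would compute $\Delta_\delta U_q$ directly: since $U_q(v\pm e_1)=z(q)^{\pm1}U_q(v)$ and $U_q(v\pm e_2)=w(q)^{\pm1}U_q(v)$, one gets
\begin{align*}
\Delta_\delta U_q(v)=\bigl(z(q)+z(q)^{-1}+w(q)+w(q)^{-1}-4\bigr)U_q(v),
\end{align*}
so $U_q$ is an eigenfunction with eigenvalue $\mu(q):=z(q)+z(q)^{-1}+w(q)+w(q)^{-1}-4$. Next I would show these $U_q$ span $V^{ij}_\delta$ (they are linearly independent, being distinct characters of the finite abelian group $\Z^2/\TorLat$ twisted by the fixed shift $\TwiKacWarForShiVec{ij}$, and there are $|\torus|$ of them), so the listed $\mu(q)$ are \emph{all} eigenvalues of $\Delta^{ij}_\delta$.

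It then remains to check the purely algebraic identity
\begin{align*}
\mu(q)=\frac{1}{2\alpha_c^2}\,\TwiKacWarWeiDetFac{\alpha_c}{q}
\end{align*}
at $\alpha_c=\sqrt2-1$. Plugging $\alpha_c$ into \eqref{eq: def_v_alpha_q}: one has $1+\alpha_c^2=2(2-\sqrt2)=2\sqrt2\,\alpha_c\cdot\frac{\sqrt2}{\ ?\ }$, more usefully $1+\alpha_c^2=4-2\sqrt2$ and $\alpha_c(\alpha_c^2-1)=(\sqrt2-1)(2-2\sqrt2)=-2(\sqrt2-1)^2=-2\alpha_c^2$. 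Hence $\TwiKacWarWeiDetFac{\alpha_c}{q}=(4-2\sqrt2)^2-2\alpha_c^2\bigl(z+z^{-1}+w+w^{-1}\bigr)$, and since $(4-2\sqrt2)^2=24-16\sqrt2=8\alpha_c^2$ (indeed $8(\sqrt2-1)^2=8(3-2\sqrt2)=24-16\sqrt2$), we get $\TwiKacWarWeiDetFac{\alpha_c}{q}=2\alpha_c^2\bigl(4-(z+z^{-1}+w+w^{-1})\bigr)=-2\alpha_c^2\,\mu(q)$. I expect the only point needing care here is a sign: the convention $\Delta_\delta f(x)=\sum_{y\sim x}(f(y)-f(x))$ makes $\Delta_\delta$ negative semidefinite with eigenvalues $\mu(q)\le 0$, while $\TwiKacWarWeiDetFac{\alpha_c}{q}\ge 0$; so the correct statement is with $-\Delta^{ij}_\delta$, or equivalently the eigenvalues of $\Delta^{ij}_\delta$ are $\tfrac{1}{2\alpha_c^2}\TwiKacWarWeiDetFac{\alpha_c}{q}$ up to this overall sign — which is immaterial for $\det^\star$ up to a $(-1)^{|\torus|}$ that the surrounding formulas absorb. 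The main (and essentially only) obstacle is thus bookkeeping: making sure the count of $q$'s, the spanning claim, the sign convention on $\Delta$, and the constant $\tfrac{1}{2\alpha_c^2}$ are all consistent; the algebraic identity itself is a one-line computation once $\alpha_c$ is substituted.
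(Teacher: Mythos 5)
Your proposal is correct and follows essentially the same route as the paper: diagonalize $\Delta^{ij}_{\delta}$ by the exponential characters $f_q$ with $q\in\sfrac{\Z^{2}}{\TorLat^{*}}+\TwiKacWarForShiVec{ij}$, count that these $|\torus|$ independent eigenfunctions exhaust $V^{ij}_{\delta}$, and verify the one-line identity $\TwiKacWarWeiDetFac{\alpha_c}{q}=2\alpha_c^2\bigl(4-z-z^{-1}-w-w^{-1}\bigr)$. Your remark about the sign is a fair catch of a convention slip in the statement of the definition $\Delta_\delta f(x)=\sum_{y\sim x}(f(y)-f(x))$; the intended operator is the positive (geometers') Laplacian, consistent with the later continuous $\Delta=-\partial_x^2-\partial_y^2$ and with taking square roots of the determinants.
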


\begin{proof}
Observe that if $q\in\sfrac{\Z^{2}}{\TorLat^{*}}+\TwiKacWarForShiVec{ij}$, then the exponential function $f_{q}(z)=\exp(2\pi\i(\re{q}\re{z}+\im{q}\im{z}))$, where $z\in\mathbb{Z}^2$, belong to $V^{ij}$,  and $\Delta f_{q}=\frac{1}{2\alpha_{c}^{2}}\TwiKacWarWeiDetFac{\alpha_{c}}qf_{q}$.
Thus we find $|\Lambda^{\star}/\Z^{2}|=|\torus|$ distinct eigenvalues;
as $V_{\delta}^{ij}$ has dimension $|\torus|$, these are all the eigenvalues.\footnote{The reader may observe that this is essentially the same proof as that of Theorem 8; the latter is a bit more involved since the fundamental domain has only one vertex and four oriented edges.}
\end{proof}

\section{The sum of the horizontal and vertical edge energy densities}

\label{SECTIONTHESUMOFTHEHORIZONTALANDVERTICALEDGEENERGYDENSITIES}

We continue with a standard Lemma representing energy density in high-temperature
expansion, see e.g. \cite[Subsection 2.1]{CIMASONI2} or \cite{KadanoffCeva} for background.
\begin{lem}
\label{LEMMAAUXILIARYFORMOFENERGYDENSITY} Let $e=(xy)\in\edges{\torus}$
be any edge. For the critical temperature $\alpha=\sqrt{2}-1$, one
has 
\begin{align}
\E(\en_{e})=\frac{1}{\sqrt{2}}\frac{1}{\ParFunIsiRed}\sum_{\xi\subset\EveSub{\torus}}\left(\CriPar^{-1}\indicator_{e\in\xi}-\CriPar\indicator_{e\notin\xi}\right)\CriPar^{\abs{\xi}}.\label{eq: energy_HT}
\end{align}
\end{lem}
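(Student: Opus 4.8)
The plan is to derive the identity from the high-temperature expansion of $\ParFunIsi$ by differentiating with respect to the edge coupling, or equivalently by a direct combinatorial manipulation. Recall that, writing $\sigma_x\sigma_y\in\{\pm 1\}$, one has the elementary identity $e^{\beta\sigma_x\sigma_y}=\cosh\beta\,(1+\alpha\,\sigma_x\sigma_y)$ with $\alpha=\tanh\beta$. Expanding the product over all edges $\prod_{(xy)}(1+\alpha\sigma_x\sigma_y)$ and summing over spin configurations, the only surviving monomials $\prod_{e\in\xi}\sigma_{x_e}\sigma_{y_e}$ are those in which every vertex appears an even number of times, i.e. $\xi\in\EveSub{\torus}$; each such term contributes $2^{\vertexes{\torus}}\alpha^{|\xi|}$. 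This is exactly the stated expansion $\ParFunIsi=\cosh(\beta)^{|\Edges(\torus)|}2^{|\torus|}\ParFunIsiRed$.

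Next I would run the \emph{same} computation but with the weight $\sigma_x\sigma_y$ inserted for the fixed edge $e=(xy)$. Write $\E(\sigma_x\sigma_y)=\frac{1}{\ParFunIsi}\sum_\sigma \sigma_x\sigma_y\prod_{e'}e^{\beta\sigma_{x'}\sigma_{y'}}$. Pull out the factor associated to the edge $e$: $\sigma_x\sigma_y\cdot e^{\beta\sigma_x\sigma_y}=\cosh\beta\,(\sigma_x\sigma_y+\alpha)=\cosh\beta\,(\alpha^{-1}+\alpha(\sigma_x\sigma_y-\alpha^{-1}))$ — more cleanly, note $\sigma_x\sigma_y(1+\alpha\sigma_x\sigma_y)=\alpha+\sigma_x\sigma_y$, so the monomial expansion over the remaining edges picks up a factor $\sigma_x\sigma_y$ precisely when we do \emph{not} take the $\alpha\sigma_x\sigma_y$ term from edge $e$, and a constant $\alpha$ when we do. Equivalently, in the resulting sum over even subgraphs $\xi$ of $\torus\setminus\{e\}$, the monomial $\prod\sigma\sigma$ is the indicator that $\xi\cup\{e\}$ is even, and in that case it survives summation over spins and contributes. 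Reorganizing by whether $e\in\xi$ for even subgraphs $\xi$ of the full $\torus$: if $e\in\xi$ we came from the ``$\alpha$'' branch and the corresponding weight relative to $\ParFunIsiRed$ is $\alpha^{|\xi|-1}=\alpha^{-1}\alpha^{|\xi|}$ with a $+$ sign; if $e\notin\xi$ we came from the ``$\sigma_x\sigma_y$'' branch, $\xi$ is already even, the edge $e$ contributes a $1$ but the prefactor $\sigma_x\sigma_y$ must be cancelled, which forces the sign $-\alpha$, i.e. weight $-\alpha\cdot\alpha^{|\xi|}$. Collecting, $\E(\sigma_x\sigma_y)=\frac{1}{\ParFunIsiRed}\sum_{\xi\in\EveSub{\torus}}\bigl(\alpha^{-1}\indicator_{e\in\xi}-\alpha\indicator_{e\notin\xi}\bigr)\alpha^{|\xi|}$.

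Finally, subtract the full-plane value: $\E(\en_e)=\E(\sigma_x\sigma_y)-\tfrac{1}{\sqrt 2}$. At criticality $\alpha=\alpha_c=\sqrt 2-1$ one checks the numerical identity $\alpha_c^{-1}-\tfrac{1}{\sqrt 2}=\tfrac{1}{\sqrt 2}\alpha_c^{-1}$ and $-\alpha_c-\tfrac{1}{\sqrt 2}=-\tfrac{1}{\sqrt 2}\alpha_c^{-1}$ — indeed $\alpha_c^{-1}=\sqrt 2+1$, so $\alpha_c^{-1}-\tfrac1{\sqrt2}=\sqrt2+1-\tfrac{1}{\sqrt2}=\tfrac{1}{\sqrt2}(\sqrt2+1)=\tfrac{1}{\sqrt2}\alpha_c^{-1}$, and $-\alpha_c-\tfrac1{\sqrt2}=-(\sqrt2-1)-\tfrac1{\sqrt2}=-\tfrac1{\sqrt2}(\sqrt2+1)=-\tfrac1{\sqrt2}\alpha_c^{-1}$. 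Substituting these two relations, using $\sum_{\xi}\alpha_c^{|\xi|}=\ParFunIsiRed$ to absorb the constant $-\tfrac1{\sqrt2}$ into the sum, yields exactly (\ref{eq: energy_HT}). The only genuinely delicate point is bookkeeping the signs and the $\alpha$-powers in the edge-$e$ factor when passing between ``$\xi$ an even subgraph of $\torus$'' and ``$\xi$ an even subgraph of $\torus$ with $e$ removed''; this is where one must be careful, but it is a finite, purely algebraic check with no analytic content.
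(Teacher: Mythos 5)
Your overall route is the same as the paper's (high-temperature expansion, matching even subgraphs of $\torus$ with the defect configurations at $e$, then the algebraic identities special to $\alpha_c=\sqrt2-1$), but two consecutive steps are wrong as written, and they do not cancel. First, your intermediate formula $\E(\sigma_x\sigma_y)=\frac{1}{\ParFunIsiRed}\sum_{\xi\in\EveSub{\torus}}\bigl(\CriPar^{-1}\indicator_{e\in\xi}-\CriPar\indicator_{e\notin\xi}\bigr)\CriPar^{\abs{\xi}}$ has a spurious minus sign: since $(\sigma_x\sigma_y)^2=1$, every surviving term in the expansion of the numerator is positive, and the correct statement is $\E(\sigma_x\sigma_y)=\frac{1}{\ParFunIsiRed}\sum_{\xi\in\EveSub{\torus}}\bigl(\indicator_{e\in\xi}\CriPar^{\abs{\xi}-1}+\indicator_{e\notin\xi}\CriPar^{\abs{\xi}+1}\bigr)$ (your branch bookkeeping is also reversed: $e\in\xi$ corresponds to keeping the constant term at $e$, so $S=\xi\setminus\{e\}$ and weight $\CriPar^{\abs{\xi}-1}$, while $e\notin\xi$ corresponds to taking the $\CriPar\sigma_x\sigma_y$ term, weight $+\CriPar^{\abs{\xi}+1}$). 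Your claimed formula cannot be right for an independent reason: combined with the Lemma it would force $\E(\sigma_x\sigma_y)=\sqrt2\,\E(\sigma_x\sigma_y)-1$, i.e.\ $\E(\sigma_x\sigma_y)=\sqrt2+1>1$.

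Second, the numerical identity you use to absorb the subtraction of $\tfrac1{\sqrt2}$, namely $-\CriPar_c-\tfrac{1}{\sqrt2}=-\tfrac{1}{\sqrt2}\CriPar_c^{-1}$, is false: $-(\sqrt2-1)-\tfrac1{\sqrt2}\approx-1.12$ while $-\tfrac{1}{\sqrt2}(\sqrt2+1)\approx-1.71$ (the step $-\sqrt2+1-\tfrac1{\sqrt2}=-1-\tfrac1{\sqrt2}$ in your chain is the slip). Moreover, even if it were true it would produce the coefficient $-\tfrac{1}{\sqrt2}\CriPar^{-1}$ on $\indicator_{e\notin\xi}$, not the $-\tfrac{1}{\sqrt2}\CriPar$ appearing in (\ref{eq: energy_HT}), so the displayed argument does not yield the Lemma. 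The repair is exactly the paper's computation: with the correct $+\CriPar$ coefficient, write $\E(\en_e)=\E(\sigma_x\sigma_y)-\tfrac1{\sqrt2}$ and use the true identities $\CriPar_c^{-1}-\tfrac1{\sqrt2}=\tfrac{\CriPar_c^{-1}}{\sqrt2}$ and $\CriPar_c-\tfrac1{\sqrt2}=-\tfrac{\CriPar_c}{\sqrt2}$; the minus sign in the Lemma arises only at this last step, not in the expansion of the two-point function itself.
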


\begin{proof}
Denote by $\Ede e{\torus}$ the set of configurations $\omega\subseteq\edges{\torus}$
such that every vertex except the end-vertices of $e$ has even degree
and the end-vertices of $e$ have odd degree in $\omega$. If $\xi\in\EveSub{\torus}$,
define $\omega_{e}(\xi)\in\Ede e{\torus}$ by $\omega_{e}(\xi)=\xi\cup e$
if $e\notin\xi$ and $\omega_{e}(\xi)=\xi\setminus e$ if $e\in\xi$.
Then, $\omega_{e}$ is is a bijection between $\EveSub{\torus}$ and
$\Ede e{\torus}$. By high-temperature expansion, we have 
\begin{align*}
\E(\sigma_{x}\sigma_{y})=\frac{1}{\ParFunIsiRed}\sum_{\omega\in\Ede e{\torus}}\CriPar^{\abs{\omega}}=\frac{1}{\ParFunIsiRed}\sum_{\xi\in\EveSub{\torus}}\CriPar^{\abs{\omega_{e}(\xi)}}=\frac{1}{\ParFunIsiRed}\sum_{\xi\in\EveSub{\torus}}\left(\ind_{e\in\xi}\CriPar^{\abs{\xi}-1}+\ind_{e\notin\xi}\CriPar^{\abs{\xi}+1}\right)\\
=\frac{1}{\sqrt{2}}+\frac{1}{\ParFunIsiRed}\sum_{\xi\in\EveSub{\torus}}\left(\left(\alpha^{-1}-\frac{1}{\sqrt{2}}\right)\ind_{e\in\xi}\CriPar^{\abs{\xi}}+\left(\alpha-\frac{1}{\sqrt{2}}\right)\ind_{e\notin\xi}\CriPar^{\abs{\xi}}\right),
\end{align*}
and we finish by noting that $\alpha^{-1}-\frac{1}{\sqrt{2}}=\frac{\alpha^{-1}}{\sqrt{2}}$
and $\alpha-\frac{1}{\sqrt{2}}=-\frac{\alpha}{\sqrt{2}}$. 
\end{proof}
For $i,j\in\Z_{2}$ and $e$ and edge of $\torus$, define 
\begin{align*}
\DisSpiObsVal{ij}e:=\sum_{\xi\in\EveSub{\torus}}\left(\CriPar^{-1}\indicator_{e\in\xi}-\CriPar\indicator_{e\notin\xi}\right)(-1)^{q_{ij}(\xi)}\CriPar^{\abs{\xi}}.
\end{align*}
In view of Lemma \ref{LEMMAAUXILIARYFORMOFENERGYDENSITY} and (\ref{eq: lem_qf}),
we have 
\begin{align}
\E(\en_{e})=%
\frac{1}{2\sqrt{2}}\frac{1}{\ParFunIsiRed}\left(\DisSpiObsVal{01}e+\DisSpiObsVal{10}e+\DisSpiObsVal{11}e-\DisSpiObsVal{00}e\right).\label{eq: energy_B}
\end{align}

It turns out that at the critical temperature, it is possible to express
the sum of $\DisSpiObsVal{ij}e$ for horizontal and vertical edge
in terms of a determinant of the Laplacian:
\begin{lem}
\label{lem: B_ij}One has, for $\alpha=\alpha_{c}=\sqrt{2}-1,$ 
\begin{align*}
\DisSpiObsVal{ij}{e_{H}}+\DisSpiObsVal{ij}{e_{V}}=\begin{cases}
-2^{\frac{|\torus|+5}{2}}\CriPar^{|\torus|}\sqrt{\mathrm{det}^{\star}\Delta^{00}}\frac{1}{|\torus|}, & i=j=0;\\
0, & \text{otherwise}.
\end{cases}
\end{align*}
\end{lem}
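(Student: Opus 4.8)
The plan is to evaluate $\DisSpiObsVal{ij}{e_{H}}+\DisSpiObsVal{ij}{e_{V}}$ by summing $\DisSpiObsVal{ij}{e}$ over \emph{all} edges of $\torus$ at once. Since $(-1)^{q_{ij}(\xi)}$ depends only on the class of $\xi$ in $H_{1}(\torus,\Z_{2})$, and translations of $\torus$ act trivially on homology, each $\DisSpiObsVal{ij}{e}$ is invariant under the translation action of $\delta\Z^{2}$; hence it takes one value $\DisSpiObsVal{ij}{e_{H}}$ on all horizontal edges and one value $\DisSpiObsVal{ij}{e_{V}}$ on all vertical edges. Writing $\mathbb{I}_{e\notin\xi}=1-\mathbb{I}_{e\in\xi}$ gives $\DisSpiObsVal{ij}{e}=(\alpha^{-1}+\alpha)\sum_{\xi\ni e}(-1)^{q_{ij}(\xi)}\alpha^{|\xi|}-\alpha Z^{(ij)}$; as $\torus$ has $|\torus|$ horizontal, $|\torus|$ vertical, hence $2|\torus|$ edges, summing over all $e$ and using $\sum_{e}\mathbb{I}_{e\in\xi}=|\xi|$ together with $\sum_{\xi}|\xi|(-1)^{q_{ij}(\xi)}\alpha^{|\xi|}=\alpha\tfrac{d}{d\alpha}Z^{(ij)}(\alpha)$ yields
\begin{equation}
|\torus|\bigl(\DisSpiObsVal{ij}{e_{H}}+\DisSpiObsVal{ij}{e_{V}}\bigr)=(1+\alpha^{2})\frac{d}{d\alpha}Z^{(ij)}(\alpha)-2\alpha\,|\torus|\,Z^{(ij)}(\alpha),\label{eq:Bsum_master}
\end{equation}
where $Z^{(ij)}(\alpha)=\sum_{\xi\in\EveSub{\torus}}(-1)^{q_{ij}(\xi)}\alpha^{|\xi|}$ is regarded as a polynomial in $\alpha$. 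It remains to evaluate the right-hand side at $\alpha=\alpha_{c}=\sqrt{2}-1$.

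By Theorems \ref{LEMMACIMASONILEMMA} and \ref{thm: det_KW}, $Z^{(ij)}(\alpha)^{2}=\det\KW^{ij}=\prod_{q}v(\alpha,q)$ over the $|\torus|$ points $q\in\Lambda^{*}/\Z^{2}+\TwiKacWarForShiVec{ij}$; the correct square-root branch is pinned down by $Z^{(ij)}(0)=1$ and the nonnegativity of the factors $v(\alpha,q)$ noted after (\ref{eq: def_v_alpha_q}). The one nontrivial input is that at the self-dual point every factor satisfies $(1+\alpha^{2})\,\partial_{\alpha}v(\alpha,q)\big|_{\alpha_{c}}=4\alpha_{c}\,v(\alpha_{c},q)$, equivalently $\partial_{\alpha}\log\bigl(v(\alpha,q)(1+\alpha^{2})^{-2}\bigr)\big|_{\alpha_{c}}=0$, which a short computation reduces to $-\alpha^{4}+6\alpha^{2}-1=0$ at $\alpha_{c}$ (the combinatorial trace of Kramers--Wannier self-duality: $\alpha(1-\alpha^{2})(1+\alpha^{2})^{-2}$ has a critical point there). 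If $(i,j)\neq(0,0)$ then no point of $\Lambda^{*}/\Z^{2}+\TwiKacWarForShiVec{ij}$ lies in $\Z^{2}$ (its pairing with $\omega_{1}$ or $\omega_{2}$ is a non-integer, by the defining property of $\TwiKacWarForShiVec{ij}$), so all $v(\alpha_{c},q)>0$, $Z^{(ij)}(\alpha_{c})\neq0$, and by logarithmic differentiation
\[
(1+\alpha^{2})\tfrac{d}{d\alpha}Z^{(ij)}\big|_{\alpha_{c}}=Z^{(ij)}(\alpha_{c})\cdot\tfrac{1+\alpha_{c}^{2}}{2}\sum_{q}\frac{\partial_{\alpha}v(\alpha_{c},q)}{v(\alpha_{c},q)}=Z^{(ij)}(\alpha_{c})\cdot\tfrac12\sum_{q}4\alpha_{c}=2\alpha_{c}|\torus|Z^{(ij)}(\alpha_{c}),
\]
so the right-hand side of (\ref{eq:Bsum_master}) vanishes and $\DisSpiObsVal{ij}{e_{H}}+\DisSpiObsVal{ij}{e_{V}}=0$.

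In the case $(i,j)=(0,0)$ the shift $\TwiKacWarForShiVec{00}$ vanishes, so $q=0$ occurs in the product, with $v(\alpha,0)=(1+\alpha^{2})^{2}-4\alpha(1-\alpha^{2})=(\alpha^{2}+2\alpha-1)^{2}=(\alpha_{c}-\alpha)^{2}(\alpha-\alpha_{c}')^{2}$, $\alpha_{c}'=-1-\sqrt{2}$, while the other factors stay strictly positive near $\alpha_{c}$. Hence the polynomial $Z^{(00)}$ has a simple zero at $\alpha_{c}$, and we write $Z^{(00)}(\alpha)=(\alpha_{c}-\alpha)R(\alpha)$ with $R(\alpha)=(\alpha-\alpha_{c}')\bigl(\prod_{q\neq0}v(\alpha,q)\bigr)^{1/2}$, the positive root (again fixed by $Z^{(00)}(0)=1$). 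Then $\tfrac{d}{d\alpha}Z^{(00)}\big|_{\alpha_{c}}=-R(\alpha_{c})$, the last term of (\ref{eq:Bsum_master}) vanishes, and so $|\torus|\bigl(\DisSpiObsVal{00}{e_{H}}+\DisSpiObsVal{00}{e_{V}}\bigr)=-(1+\alpha_{c}^{2})(\alpha_{c}-\alpha_{c}')\bigl(\prod_{q\neq0}v(\alpha_{c},q)\bigr)^{1/2}$. By Proposition \ref{prop: KW_to_Laplacian}, $\Delta^{00}$ has eigenvalues $\tfrac{1}{2\alpha_{c}^{2}}v(\alpha_{c},q)$ with the single zero eigenvalue at $q\equiv0$, so $\prod_{q\neq0}v(\alpha_{c},q)=(2\alpha_{c}^{2})^{|\torus|-1}\det^{\star}\Delta^{00}$; using $(1+\alpha_{c}^{2})(\alpha_{c}-\alpha_{c}')=8\alpha_{c}$ and $8\alpha_{c}(2\alpha_{c}^{2})^{(|\torus|-1)/2}=2^{(|\torus|+5)/2}\alpha_{c}^{|\torus|}$ gives exactly the claimed value.

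The substance of the argument is the pointwise identity $(1+\alpha^{2})\partial_{\alpha}v(\alpha,q)=4\alpha v(\alpha,q)$ at $\alpha_{c}$, together with the correct treatment of the degenerate sector: one must recognize the simple zero of the polynomial $Z^{(00)}$, factor it out, and fix the square-root branch (from positivity of the surviving factors and from $Z^{(00)}(0)=1$) in order to obtain the overall minus sign. Everything else is bookkeeping of constants.
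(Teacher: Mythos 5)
Your proof is correct and follows essentially the same route as the paper's: your master identity is just a rearrangement of the paper's relation (obtained from translation invariance) between $\frac{d}{d\alpha}Z^{(ij)}$, $Z^{(ij)}$ and $\DisSpiObsVal{ij}{e_{H}}+\DisSpiObsVal{ij}{e_{V}}$, and your pointwise identity $(1+\alpha^{2})\partial_{\alpha}v(\alpha,q)\big|_{\alpha_{c}}=4\alpha_{c}\,v(\alpha_{c},q)$ is the same computation as the paper's evaluation of $v$ and $\partial_{\alpha}v$ at criticality. The one genuine difference is how the sign is fixed in the $(00)$ sector: the paper appeals to the behaviour of $Z^{(00)}(\alpha)$ on $(0,1)$ established in \cite{CIMASONIDUMINIL-COPIN} to conclude that $\frac{d}{d\alpha}Z^{(00)}\big|_{\alpha_{c}}$ is negative, whereas you obtain it self-containedly by factoring $v(\alpha,0)=(\alpha^{2}+2\alpha-1)^{2}=(\alpha-\alpha_{c})^{2}(\alpha-\alpha_{c}')^{2}$, noting that the remaining factors are strictly positive on $[0,\alpha_{c}]$, and pinning the branch with $Z^{(00)}(0)=1$, so that $Z^{(00)}(\alpha)=(\alpha_{c}-\alpha)R(\alpha)$ there and the left (hence two-sided, by polynomiality) derivative at $\alpha_{c}$ is $-R(\alpha_{c})$. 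This is a small but worthwhile simplification, and your bookkeeping of constants agrees with the paper's value $-2^{\frac{|\torus|+5}{2}}\CriPar_{c}^{|\torus|}\sqrt{\mathrm{det}^{\star}\Delta^{00}}\,|\torus|^{-1}$.
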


\begin{proof}
Denote by $\TwiParFunWei{ij}{\alpha}$ the right-hand side of (\ref{eq: cimasoni}).
We can write 
\begin{align*}
\der{\alpha}\TwiParFunWei{ij}{\alpha} & =\frac{1}{\CriPar}\sum_{\xi\in\EveSub{\torus}}\abs{\xi}\CriPar^{\abs{\xi}}\loopfactor{ij}=\frac{1}{\CriPar}\sum_{e}\sum_{\xi\in\EveSub{\torus}}\indicator_{e\in\xi}\CriPar^{\abs{\xi}}\loopfactor{ij}\\
 & =\frac{\abs{\torus}}{\CriPar}\left(\sum_{\xi\in\EveSub{\torus}}\indicator_{e_{H}\in\xi}\CriPar^{\abs{\xi}}\loopfactor{ij}+\sum_{\xi\in\EveSub{\torus}}\indicator_{e_{V}\in\xi}\CriPar^{\abs{\xi}}\loopfactor{ij}\right),
\end{align*}
where we used that $\torus$ has exactly $|\torus|$ horizontal edges
and $|\torus|$ vertical edges, and $\sum_{\xi\in\EveSub{\torus}}\indicator_{e\in\xi}\CriPar^{\abs{\xi}}\loopfactor{ij}$
only depends on whether $e$ is vertical or horizontal. Now, using
that $\indicator_{e\in\xi}=\frac{1}{\alpha+\alpha^{-1}}(\alpha^{-1}\indicator_{e\in\xi}-\alpha\indicator_{e\notin\xi})+\frac{\alpha}{\alpha+\alpha^{-1}}$,
we arrive at 
\begin{equation}
\der{\alpha}\TwiParFunWei{ij}{\alpha}=\frac{2|\torus|}{\alpha+\alpha^{-1}}\TwiParFunWei{ij}{\alpha}+\frac{|\torus|\alpha^{-1}}{\alpha+\alpha^{-1}}\left(\DisSpiObsVal{ij}{e_{H}}+\DisSpiObsVal{ij}{e_{V}}\right)\label{eq: B_Z_dZ}
\end{equation}

We now use Theorem \ref{LEMMACIMASONILEMMA} and Theorem \ref{thm: det_KW}
to compute $\der{\alpha}\TwiParFunWei{ij}{\alpha}$ at $\alpha=\alpha_{c}$, starting with the case $(ij)\neq(00)$.
It is straightforward to see from (\ref{eq: def_v_alpha_q}) that
\begin{align}
\TwiKacWarWeiDetFac{\CriPar_{c}}q & =2\CriPar_{c}^{2}\left(4-z\left(q\right)-z\left(q\right)^{-1}-w\left(q\right)-w\left(q\right)^{-1}\right);\label{eq: v_crit}\\
\der{\alpha}_{\alpha=\CriPar_{c}}\TwiKacWarWeiDetFac{\alpha}q & =2\sqrt{2}\CriPar_{c}^{2}\left(4-z\left(q\right)-z\left(q\right)^{-1}-w\left(q\right)-w\left(q\right)^{-1}\right).\label{eq: v_crit_prime}
\end{align}
In particular, since $|z(q)|=|w(q)|=1$, the quantity $\TwiKacWarWeiDetFac{\CriPar_{c}}q$
can only vanish when $z(q)=w(q)=1$, that is, $q\in\Z^{2}$. If $i=1$
(respectively, $j=1$), then, for $q\in\Lambda^{\star}+\TwiKacWarForShiVec{ij}$,
we have $\re{q}\re{\omega^\delta_{1}}+\im{q}\im{\omega^\delta_{1}}\in\Z+\frac{1}{2}$
(respectively, $\re{q}\re{\omega^\delta_{2}}+\im{q}\im{\omega^\delta_{2}}\in\Z+\frac{1}{2}$).
Since $\omega^\delta_{1,2}\in\Z^{2}$, this cannot happen for $q\in\Z^{2}$.
We conclude that if $(i,j)\neq(0,0)$, then $\TwiParFunWei{ij}{\alpha}\neq0$
. In this case, we have 
\begin{align*}
\der{\alpha}_{\alpha=\CriPar_{c}}\log\TwiParFunWei{ij}{\alpha}=\der{\alpha}_{\alpha=\CriPar_{c}}\frac{1}{2}\log\det\KW^{ij}=\frac{1}{2}\sum_{q\in\sfrac{\Z^{2}}{\TorLat^{*}}+\TwiKacWarForShiVec{ij}}\der{\alpha}_{\alpha=\CriPar_{c}}\log\TwiKacWarWeiDetFac{\CriPar}q.
\end{align*}
By (\ref{eq: v_crit}–\ref{eq: v_crit_prime}), each term in the sum
equals $\sqrt{2}$, and $|\sfrac{\Z^{2}}{\TorLat^{*}}+\TwiKacWarForShiVec{ij}|=|\torus|$.Thus
\begin{align*}
\der{\alpha}_{\alpha=\CriPar_{c}}\TwiParFunWei{ij}{\alpha}=\frac{\sqrt{2}|\torus|}{2}\TwiParFunWei{ij}{\alpha_{c}}.
\end{align*}
Since $2/(\alpha_{c}+\alpha_{c}^{-1})=\frac{\sqrt{2}}{2}$, plugging
this equation into (\ref{eq: B_Z_dZ}) at $\alpha=\alpha_{c}$ yields
the desired result $\DisSpiObsVal{ij}{\HorEdg}+\DisSpiObsVal{ij}{\VerEdg}=0$ if $(ij)\neq(00)$.

We now turn to the the case $(ij)=(00)$. We have $0\in\sfrac{\Z^{2}}{\TorLat^{*}}+\TwiKacWarForShiVec{00}=\sfrac{\Z^{2}}{\TorLat^{*}}$
and $\TwiKacWarWeiDetFac{\CriPar_{c}}0=0$; by (\ref{eq: v_crit_prime}),
we also have $\left.\der{\alpha}\right|_{\alpha=\CriPar_{c}}\TwiKacWarWeiDetFac{\CriPar}0=0$,
and from (\ref{eq: def_v_alpha_q}), it is straightforward to compute
\begin{align*}
\left.\frac{\mathrm{d}^{2}}{\mathrm{d}\alpha^{2}}\right|_{\alpha=\CriPar_{c}}\TwiKacWarWeiDetFac{\CriPar}0=16.
\end{align*}
By the above discussion, $\TwiKacWarWeiDetFac{\CriPar_{c}}q\neq0$
for other $q\in\sfrac{\Z^{2}}{\TorLat^{*}}$. Therefore, 
\begin{align*}
\TwiParFunWei{00}{\alpha}=\left(8(\alpha-\alpha_{c})^{2}\prod_{q\in\sfrac{\Z^{2}}{\TorLat^{*}}\setminus\{0\}}\TwiKacWarWeiDetFac{\CriPar_{c}}q\right)^{\frac{1}{2}}(1+\oo 1)\quad\text{as }\alpha\to\alpha_{c},
\end{align*}
and we obtain, taking into account Proposition \ref{prop: KW_to_Laplacian},
\begin{align}
\left.\der{\alpha}\right|_{\alpha=\CriPar_{c}}\TwiParFunWei{00}{\alpha}=-2\sqrt{2}\left(\prod_{q\in\sfrac{\Z^{2}}{\TorLat^{*}}\setminus\{0\}}\TwiKacWarWeiDetFac{\CriPar_{c}}q\right)^{\frac{1}{2}}=-2^{\frac{|\torus|+2}{2}}\alpha_{c}^{|\torus|-1}\sqrt{\text{det}^{\star}\Delta^{00}}.\label{eq: last_identity}
\end{align}
The choice of the \emph{negative} sign of the square root can be justified
by looking into \cite[Proof of Theorem 1.1]{CIMASONIDUMINIL-COPIN}.
For in that proof it is shown that $\det\KW_{00}(\alpha)=Z_{00}(\alpha)$
(Lemma 3.1) behaves as follows: when $\alpha\to0,$ $\Z_{00}(\alpha)\to1$,
and when $\alpha\to1$, $\Z_{00}(\alpha)\to-2|\T|$, and $\Z_{00}(\alpha)=0$
has a unique solution, that is, the critical point: $\Z_{00}(\alpha_{c})=0$.
Thus its derivative at the critical point must be negative.

Since $\TwiParFunWei{00}{\alpha_c}=0$, plugging (\ref{eq: last_identity})
into (\ref{eq: B_Z_dZ}) proves the desired result. 
\end{proof}
\begin{proof}[Proof of Theorem \ref{thm: sum}]
 By summing (\ref{eq: energy_B}) over a horizontal and vertical
edge and taking into account Lemma \ref{lem: B_ij}, we get 
\begin{align}
\SumEneDen=-\frac{1}{2\sqrt{2}}\frac{1}{\ParFunIsiRed}\left(\DisSpiObsVal{00}{\HorEdg}+\DisSpiObsVal{00}{\VerEdg}\right)=2^{\frac{|\T^{\delta}|+2}{2}}\alpha_{c}^{|\T^{\delta}|}\frac{1}{\ParFunIsiRed}\sqrt{\text{det}^{\star}\Delta^{00}}\frac{1}{\left|\T^{\delta}\right|}.\label{eq: proof_thm_2}
\end{align}
Recall that $\TwiParFunWei{00}{\alpha_{c}}=0$. By combining Theorem
\ref{LEMMACIMASONILEMMA}, the equation (\ref{eq: cimasoni}) and
Proposition \ref{prop: KW_to_Laplacian}, we get 
\begin{align*}
\ParFunIsiRed & =\frac{1}{2}\left(\TwiParFunWei{01}{\alpha_{c}}+\TwiParFunWei{10}{\alpha_{c}}+\TwiParFunWei{11}{\alpha_{c}}\right)\\
 & =\frac{1}{2}\left(\sqrt{\det\KW^{01}}+\sqrt{\det\KW^{10}}+\sqrt{\det\KW^{11}}\right)\\
 & =2^{\frac{|\T^{\delta}|}{2}-1}\alpha_{c}^{|\T^{\delta}|}\left(\sqrt{\det\Delta^{01}}+\sqrt{\det\Delta^{10}}+\sqrt{\det\Delta^{11}}\right).
\end{align*}

Plugging this into (\ref{eq: proof_thm_2}) ends the proof of the
Theorem \ref{thm: sum}.
\end{proof}
We now turn to the proof of Corollary \ref{THEOREM1}. We scale our
lattice and the torus by $\delta$, as in the introduction. Given
a lattice $\Lambda=n\omega_{1}+m\omega_{2}\subset\R^{2}$, we consider
the continuous Laplacian $\Delta=-\frac{\partial^{2}}{\partial x^{2}}-\frac{\partial^{2}}{\partial y^{2}}$
acting on the of twice continuously differentiable functions on the
torus $\R^{2}/\Lambda$. We then define the (Minakshisundaram–Pleijel)
zeta function by 
\begin{align*}
\zeta_{\T}(s):=\sum_{\lambda_{n}\neq0}\lambda_{n}^{-s},\quad\re{s}>1
\end{align*}
where the sum is over all non-zero eigenvalues of $\Delta$. This
$\zeta$-functions can be analytically continued into the vicinity
of $0$, and the $\zeta$-regularized determinant of the Laplacian
is defined by 
\begin{align*}
\log\text{det}_{\zeta}^{\star}\Delta & =-(\zeta_{\T})'(0).
\end{align*}

We will derive Corollary \ref{THEOREM1} from the following result
relating asymptotics of determinant of discrete Laplacian and to $\text{det}_{\zeta}^{\star}\Delta$:
\begin{prop}
(\cite[Theorem 1]{CHINTAJORGENSONKARLSSON1}, see also \cite{izyurov2020asymptotics})\label{PROPOSITIONASYMPTOTICSOFLAPLACIAN}
As $\delta\to0$ and $\omega_{1,2}^{\delta}\to\omega_{1,2}$, one
has 
\begin{align}
\mathrm{det}^{\star}\Delta_{\delta}^{00} & =\delta^{-2}\exp\left(C|\torus|)\right)\mathrm{det}_{\zeta}^{\star}\Delta\cdot(1+\oo 1),\label{prop: detlap_2}
\end{align}
where $C$ is an explicit constant. 
\end{prop}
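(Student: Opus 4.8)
The plan is to read off the asymptotics directly from the explicit spectrum. By Proposition \ref{prop: KW_to_Laplacian} with $i=j=0$, the operator $-\Delta_{\delta}^{00}$ has the $|\torus|$ eigenvalues
\[
\mu_{q}=4-2\cos(2\pi\re q)-2\cos(2\pi\im q),
\]
with $q$ running over a fundamental domain $D$, of cardinality $|\torus|$, for $\Z^{2}$ inside the relevant dual lattice, and the unique zero eigenvalue occurring at $q=0$. Thus $\mathrm{det}^{\star}\Delta_{\delta}^{00}=\prod_{q\neq0}\mu_{q}$, and everything reduces to the asymptotics of $\sum_{q\neq0}\log\mu_{q}$. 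Using the Frullani representation $\log\mu=\int_{0}^{\infty}(e^{-t}-e^{-\mu t})\frac{dt}{t}$, I would pass to the discrete heat trace
\[
\sum_{q\neq0}\log\mu_{q}=\int_{0}^{\infty}\Big((|\torus|-1)e^{-t}-(\Theta_{\delta}(t)-1)\Big)\frac{dt}{t},\qquad\Theta_{\delta}(t):=\sum_{q\in D}e^{-t\mu_{q}}.
\]
This is the discrete analogue of the Minakshisundaram--Pleijel construction used to define $\mathrm{det}_{\zeta}^{\star}\Delta$, so the two determinants will be compared through the small- and large-$t$ behaviour of their heat traces.

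The extensive (bulk) term is produced by the small-$t$ regime, where $\frac{1}{|\torus|}\Theta_{\delta}(t)$ is a Riemann sum over the unit square of dual variables. Extending the resulting single-cell integral to all $t$ gives a contribution $C\cdot|\torus|$ with
\[
C=\int_{0}^{1}\!\!\int_{0}^{1}\log\big(4-2\cos 2\pi x-2\cos 2\pi y\big)\,dx\,dy,
\]
the logarithmic Mahler measure of the dispersion relation; this is the explicit constant of \eqref{prop: detlap_2}. The finite-size correction comes from the large-$t$ regime, where only the eigenvalues closest to $0$ contribute. Since $\omega_{i}^{\delta}\to\omega_{i}$, these low modes are of the form $q=\delta q'\,(1+\oo1)$ with $q'\in\Lambda^{*}$ the dual of the continuum lattice $\Lambda$, and satisfy $\mu_{q}=\delta^{2}(2\pi)^{2}|q'|^{2}(1+\oo1)=\delta^{2}\lambda'(1+\oo1)$, where $\lambda'=(2\pi)^{2}|q'|^{2}$ runs over the nonzero eigenvalues of the continuum Laplacian $\Delta$ on $\T=\C/\Lambda$. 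Hence in the IR the discrete heat trace is approximated by the continuum theta series, and this regime reproduces $\log\mathrm{det}_{\zeta}^{\star}\Delta$ together with the power of $\delta$.

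To pin down that power I would track the effect of the rescaling $\mu_{q}\mapsto\delta^{2}\mu_{q}$ of the low modes at the level of the $\zeta$-regularised determinant: under a global rescaling of the eigenvalues by $\delta^{2}$ one has $\zeta_{\T}(s)\mapsto\delta^{-2s}\zeta_{\T}(s)$, so $\mathrm{det}_{\zeta}^{\star}$ acquires the factor $\delta^{2\zeta_{\T}(0)}$. For the flat $2$-torus the constant heat coefficient vanishes and the kernel is one-dimensional, whence $\zeta_{\T}(0)=-1$ and the factor is exactly $\delta^{-2}$, as claimed. Assembling the bulk and finite-size contributions then yields $\mathrm{det}^{\star}\Delta_{\delta}^{00}=\delta^{-2}\exp(C|\torus|)\,\mathrm{det}_{\zeta}^{\star}\Delta\,(1+\oo1)$.

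The main obstacle is the rigorous matching of the two regimes: one must control $\Theta_{\delta}(t)$ uniformly across the crossover, where neither the Riemann-sum nor the continuum approximation is individually valid. I would handle this by Poisson summation over the finite group $D$, which rewrites $\Theta_{\delta}(t)$ as an image sum of the full-plane discrete heat kernel on $\delta\Z^{2}$ over the translates by the combinatorial lattice: the diagonal ($0$-th) image reproduces the bulk integral above, while the off-diagonal images are exponentially small at short times and converge, at long times, to the continuum theta series on $\T$, thereby carrying the finite-size data. The delicate part is the quantitative estimate of these subleading Poisson terms and of the error incurred in replacing the dispersion relation $4-2\cos-2\cos$ by its quadratic approximation near $q=0$; these estimates are the technical core of \cite[Theorem 1]{CHINTAJORGENSONKARLSSON1} and of \cite{izyurov2020asymptotics}, which I would invoke to complete the argument.
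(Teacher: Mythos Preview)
The paper does not provide its own proof of this proposition: it is simply quoted from \cite[Theorem~1]{CHINTAJORGENSONKARLSSON1} (see also \cite{izyurov2020asymptotics}) and used as a black box in the derivation of Corollary~\ref{THEOREM1}. Your outline is a reasonable sketch of the strategy in those references---heat-trace representation, separation of the bulk term via the Mahler-measure integral, Poisson summation to isolate the finite-size correction, and the $\zeta_{\T}(0)=-1$ bookkeeping for the power of $\delta$---and you correctly identify the crossover estimates as the technical core that has to be imported from the cited works. Since the paper itself defers entirely to those references, there is no in-paper proof to compare against; your proposal is consistent with what those papers do.
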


Another ingredient is the classical computation of $\text{det}_{\zeta}^{\star}\Delta$,
due to Kronecker: 
\begin{align}
\text{det}_{\zeta}^{\star}\Delta=\im\tau|\T||\eta(\tau)|^{4},\label{eq: DetLapKronecker}
\end{align}
where $\eta(\tau)=\left(\frac{1}{2}\theta_{2}(\tau)\theta_{3}(\tau)\theta_{4}(\tau)\right)^{\frac{1}{3}}$
is the Dedekind eta-function, see e. g. \cite[section 10.2]{DIFRANCESCOMATHIEUSENECHAL}.
\begin{proof}[Proof of Corollary \ref{THEOREM1}]

In this proof, we will use several lattices, and hence we will use
the notation $V_{\omega_{1}^{\delta},\omega_{2}^{\delta}}^{ij}$ and
$\Delta_{\omega_{1}^{\delta},\omega_{2}^{\delta}}^{ij}$ for $V_{\delta}^{ij}$
and $\Delta_{\delta}^{ij}$, respectively, emphasizing the dependence
on the periods. Note that 
\begin{align*}
V_{2\omega_{1}^{\delta},\omega_{2}^{\delta}}^{00}=V_{\omega_{1}^{\delta},\omega_{2}^{\delta}}^{00}\oplus V_{\omega_{1}^{\delta},\omega_{2}^{\delta}}^{10};\quad & V_{\omega_{1}^{\delta},2\omega_{2}^{\delta}}^{00}=V_{\omega_{1}^{\delta},\omega_{2}^{\delta}}^{00}\oplus V_{\omega_{1}^{\delta},\omega_{2}^{\delta}}^{01};\quad & V_{2\omega_{1}^{\delta},\omega_{2}^{\delta}}^{01}=V_{\omega_{1}^{\delta},\omega_{2}^{\delta}}^{01}\oplus V_{\omega_{1}^{\delta},\omega_{2}^{\delta}}^{11},
\end{align*}
and the direct summands are invariant subspaces for the discrete Laplacian.
Hence, using the asymptotics (\ref{prop: detlap_2}), we obtain 
\begin{align}
\det\Delta_{\omega_{1}^{\delta},\omega_{2}^{\delta}}^{10}=\frac{\det^{\star}\Delta_{2\omega_{1}^{\delta},\omega_{2}^{\delta}}^{00}}{\det^{\star}\Delta_{\omega_{1}^{\delta},\omega_{2}^{\delta}}^{00}} & =e^{C|\torus|+\oo 1}\frac{\det_{\zeta}^{\star}\Delta_{2\omega_{1},\omega_{2}}}{\det_{\zeta}^{\star}\Delta_{\omega_{1},\omega_{2}}};\label{eq: det_1_reduce}\\
\det\Delta_{\omega_{1}^{\delta},\omega_{2}^{\delta}}^{01}=\frac{\det^{\star}\Delta_{\omega_{1}^{\delta},2\omega_{2}^{\delta}}^{00}}{\det^{\star}\Delta_{\omega_{1}^{\delta},\omega_{2}^{\delta}}^{00}} & =e^{C|\torus|+\oo 1}\frac{\det_{\zeta}^{\star}\Delta_{\omega_{1},2\omega_{2}}}{\det_{\zeta}^{\star}\Delta_{\omega_{1},\omega_{2}}};\label{eq: det_2_reduce}\\
\det\Delta_{\omega_{1}^{\delta},\omega_{2}^{\delta}}^{11}=\frac{\det\Delta_{2\omega_{1}^{\delta},\omega_{2}^{\delta}}^{01}}{\det^{\star}\Delta_{\omega_{1}^{\delta},\omega_{2}^{\delta}}^{01}} & =\frac{\det^{\star}\Delta_{2\omega_{1}^{\delta},2\omega_{2}^{\delta}}^{00}\det^{\star}\Delta_{\omega_{1}^{\delta},\omega_{2}^{\delta}}^{00}}{\det^{\star}\Delta_{2\omega_{1}^{\delta},\omega_{2}^{\delta}}^{00}\det^{\star}\Delta_{\omega_{1}^{\delta},2\omega_{2}^{\delta}}^{00}}\label{eq: det_3_reduce}\\
 & =e^{C|\torus|+\oo 1}\frac{\det_{\zeta}^{\star}\Delta_{2\omega_{1},2\omega_{2}}\det_{\zeta}^{\star}\Delta_{\omega_{1},\omega_{2}}}{\det_{\zeta}^{\star}\Delta_{2\omega_{1},\omega_{2}}\det_{\zeta}^{\star}\Delta_{\omega_{1},2\omega_{2}}}\nonumber 
\end{align}

By plugging $z=0$ into the reduction identities for theta functions
\cite[20.7.11-20.7.12]{NIST:DLMF}, we get 
\begin{align*}
\theta_{2}(2\tau)\theta_{3}(2\tau)\theta_{4}(2\tau) & =\frac{1}{2}\theta_{2}^{2}(\tau)\theta_{3}^{\frac{1}{2}}(\tau)\theta_{4}^{\frac{1}{2}}(\tau)\\
\theta_{2}\left(\tau/2\right)\theta_{3}\left(\tau/2\right)\theta_{4}\left(\tau/2\right) & =\sqrt{2}\theta_{2}^{\frac{1}{2}}(\tau)\theta_{3}^{\frac{1}{2}}(\tau)\theta_{4}^{2}(\tau).
\end{align*}
Using these identities and (\ref{eq: DetLapKronecker}), we get 
\begin{align*}
\frac{\det_{\zeta}^{\star}\Delta_{,2\omega_{1},\omega_{2}}}{\det_{\zeta}^{\star}\Delta_{\omega_{1},\omega_{2}}}=\frac{|\eta(\tau/2)|^{4}}{|\eta(\tau)|^{4}}=\frac{|\theta_{4}(\tau)|^{2}}{|\eta(\tau)|^{2}};\\
\frac{\det_{\zeta}^{\star}\Delta_{\omega_{1},2\omega_{2}}}{\det_{\zeta}^{\star}\Delta_{\omega_{1},\omega_{2}}}=4\frac{|\eta(2\tau)|^{4}}{|\eta(\tau)|^{4}}=\frac{|\theta_{2}(\tau)|^{2}}{|\eta(\tau)|^{2}};\\
\frac{\det_{\zeta}^{\star}\Delta_{2\omega_{1},2\omega_{2}}\det_{\zeta}^{\star}\Delta_{\omega_{1},\omega_{2}}}{\det_{\zeta}^{\star}\Delta_{2\omega_{1},\omega_{2}}\det_{\zeta}^{\star}\Delta_{\omega_{1},2\omega_{2}}}=\frac{|\eta(\tau)|^{8}}{|\eta(\tau/2)|^{4}|\eta(2\tau)|^{4}}=\frac{|\theta_{3}(\tau)|^{2}}{|\eta(\tau)|^{2}}
\end{align*}
Combining this with (\ref{eq: det_1_reduce})–(\ref{eq: det_3_reduce})
and and plugging these asymptotics and (\ref{eq: DetLapKronecker})
into the expression obtained in Theorem \ref{thm: difference}, we
get 
\begin{align*}
\SumEneDen=4\frac{(\im\tau)^{\frac{1}{2}}|\T|^{\frac{1}{2}}|\eta(\tau)|^{3}\delta^{-1}}{|\theta_{2}(\tau)|+|\theta_{3}(\tau)|+|\theta_{4}(\tau)|}\frac{1}{|\torus|}\left(1+\oo 1\right)\\
=\frac{4(\im\tau)^{\frac{1}{2}}|\eta(\tau)|^{3}}{|\T|^{\frac{1}{2}}\left(|\theta_{2}(\tau)|+|\theta_{3}(\tau)|+|\theta_{4}(\tau)|\right)}\delta+\oo{\delta},
\end{align*}
where we have noticed that $|\torus|\delta^{2}\to|\T|$. This concludes
the proof. 
\end{proof}

\section{Discrete holomorphic fermionic observables on a torus}

\label{sec: disc_hol}

To prove Theorems \ref{thm: difference} and \ref{thm: multipoint},
we employ the discrete holomorphic fermionic observables. We follow
the definitions and constructions in \cite{CHI_Mixed}; note that
the lattice there is scaled by $\sqrt{2}$ and rotated by $\pi/4$.
By a \emph{corner} of the lattice $\delta\Z^{2}$, we mean a midpoint
of a segment joining a lattice vertex $v\in\delta\Z^{2}$ with a vertex
of the dual lattice $(\delta\Z^{2})^{\star}=\delta\Z^{2}+\frac{\delta}{2}+\i\frac{\delta}{2}$.
The corners thus form another square lattice $\Cgr_{\delta}:=\frac{\delta}{2}\Z^{2}+\frac{\delta}{4}+\i\frac{\delta}{4}$.
We also define the corner graph $\Cgr(\torus)$ and the dual
graph $\Tdual$ of the torus $\torus$ in an obvious way. Given a
corner $z\in\Cgr(\torus)$, we denote by $z^{\circ}$ (respectively,
$z^{\bullet}$) the vertex of $\torus$ (respectively, $\Tdual$)
incident to $z$.

By the \emph{doubling} $\Tdbl^{\delta}$ of the torus $\torus$, we
mean the torus 
\begin{align*}
\Tdbl^{\delta}:=\delta\Z^{2}/\{2m\omega_{1}^{\delta}+2n\omega_{2}^{\delta}:m,n\in\Z^{2}\}.
\end{align*}
Note that $\Tdbl^{\delta}$ naturally forms a 4-sheet cover of $\torus$.
Given a corner $a\in\Cgr(\Tdbl^{\delta})$, we denote 
\begin{align*}
a_{pq}:=a+p\omega_{1}^{\delta}+q\omega_{2}^{\delta},\quad p,q\in\{0,1\},
\end{align*}
the four point that project to the same image in $\torus$.

Given a subset $\gamma$ of edges of $\Tdual$, we define the \emph{disorder
observable} 
\begin{align*}
\mu_{\gamma}:=e^{-2\beta\sum_{(xy)\cap\gamma\neq0}\sigma_{x}\sigma_{y}},
\end{align*}
where the sum is over the nearest neighbors $x\sim y\in\torus$ such
that the edge $(xy)$ intersects $\gamma$. We view subsets of edges
of $\Tdual$ as chains modulo $2$, in particular, $\gamma_{1}+\gamma_{2}=\gamma_{1}-\gamma_{2}$
will denote the symmetric difference of $\gamma_{1}$ and $\gamma_{2}$,
and $\partial\gamma$ is boundary of $\gamma$, that is, the set of
all vertices of $\Tdual$ incident to an odd number of edges in $\gamma$.
It turns out that the correlation of $\mu_{\gamma}$ with spins only
depends on $\partial\gamma$ and a homological class of $\gamma$
modulo $2$, as we now describe:
\begin{lem}
Let $v_{1},\dots,v_{2n}$ be vertices of $\torus$, and $\gamma_{1,2}$
two subsets of edges of $\Tdual$ with $\partial\gamma_{1}=\partial\gamma_{2}$.
Assume that $[\gamma_{1}]-[\gamma_{2}]=0$ in $H_{1}(\Tdual,\Z_{2})$, so that there exists a collection $F$ of faces of $\Tdual$ such that $\sum_{v\in F}\partial v=\gamma$. Then
\begin{align*}
\E[\sigma_{v_{1}}\dots\sigma_{v_{2n}}\mu_{\gamma_{1}}]=(-1)^N\E[\sigma_{v_{1}}\dots\sigma_{v_{2n}}\mu_{\gamma_{2}}],
\end{align*}
where $N=|\{v_{1},\dots,v_{2n}\}\cap F|$.
\end{lem}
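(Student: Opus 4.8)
The statement is the standard fact that disorder-variable correlations depend only on the boundary $\partial\gamma$ and the $\Z_2$-homology class of $\gamma$; here we are told the classes agree, so we want to show the correlation is invariant up to a sign counting how many of the marked spins lie in the ``filling'' $F$. The natural approach is to reduce to the case where $\gamma_1$ and $\gamma_2$ differ by the boundary of a single face of $\Tdual$, and then argue by induction on $|F|$.

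\textbf{Step 1: Reduce to flipping one face.} Since $[\gamma_1]-[\gamma_2]=0$ in $H_1(\Tdual,\Z_2)$, there is a collection $F$ of faces of $\Tdual$ with $\sum_{v\in F}\partial v = \gamma_1 - \gamma_2$ (as a chain mod $2$), where we abusively write $\partial v$ for the boundary of the face $v$ (equivalently, $v$ is a vertex of $\torus$, and $\partial v$ is the set of the four edges of $\Tdual$ surrounding it). Enumerate $F = \{w_1,\dots,w_M\}$ and set $\gamma^{(m)} := \gamma_2 + \sum_{\ell\le m}\partial w_\ell$, so $\gamma^{(0)}=\gamma_2$ and $\gamma^{(M)}=\gamma_1$, each $\gamma^{(m)}$ has $\partial\gamma^{(m)}=\partial\gamma_2$, and $\gamma^{(m)}$ differs from $\gamma^{(m-1)}$ by the single face-boundary $\partial w_m$. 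It then suffices to prove: for a single vertex $w$ of $\torus$,
\[
\E[\sigma_{v_1}\cdots\sigma_{v_{2n}}\,\mu_{\gamma+\partial w}] = (-1)^{\indicator_{w\in\{v_1,\dots,v_{2n}\}}}\,\E[\sigma_{v_1}\cdots\sigma_{v_{2n}}\,\mu_{\gamma}],
\]
and then multiply these signs over $w_1,\dots,w_M$; the cumulative sign is $(-1)^{|\{v_1,\dots,v_{2n}\}\cap F|}= (-1)^N$ because adding $\partial w$ twice cancels and the marked points are distinct — one should check that distinctness of the $v_i$ is used exactly here (a repeated $v_i$ contributes trivially to the spin product anyway, so this is harmless).

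\textbf{Step 2: The single-face flip via a spin flip.} Fix $w$, a vertex of $\torus$. The faces of $\Tdual$ adjacent to the vertex $w$ of $\torus$ correspond, under the bijection faces-of-$\Tdual\leftrightarrow$vertices-of-$\torus$, exactly to $w$ together with its four nearest neighbors in $\torus$; concretely, $\partial w$ consists of the four dual edges dual to the four primal edges incident to $w$. Replacing $\gamma$ by $\gamma+\partial w$ changes, for each of the four edges $(wy)$ incident to $w$, whether $(wy)$ is counted in the exponent of $\mu$. Now perform the change of variables $\sigma \mapsto \sigma'$ on spin configurations given by $\sigma'_w = -\sigma_w$ and $\sigma'_x=\sigma_x$ for $x\ne w$. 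This is a measure-preserving involution of $\{\pm1\}^{\torus}$. Under it: the Ising weight $\exp(\beta\sum_{x\sim y}\sigma_x\sigma_y)$ transforms by a factor $\exp(-4\beta\sum_{y\sim w}\sigma_w\sigma_y)$ (restoring $\sigma = \sigma'$), which is exactly $\mu_{\partial w}$; hence $\mu_\gamma(\sigma)$ times this Gibbs factor equals $\mu_{\gamma+\partial w}(\sigma')$ times the Gibbs weight of $\sigma'$, since $\mu$ is multiplicative over edges and $\mu_{\partial w}\mu_\gamma = \mu_{\gamma+\partial w}$ (symmetric difference corresponds to multiplication of the $\pm$ edge-indicators because $e^{\pm 2\beta x}$ only depends on the parity). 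Meanwhile $\sigma_{v_1}\cdots\sigma_{v_{2n}}$ picks up a factor $(-1)^{\indicator_{w\in\{v_i\}}}$. Taking the ratio against the partition function $Z$ (which is itself invariant under the involution) gives the claimed identity.

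\textbf{Main obstacle.} The only genuinely delicate point is bookkeeping: making precise that (i) $\mu_\gamma$ as a function of $\sigma$ depends on $\gamma$ only through the $\Z_2$-chain (i.e. parity of intersections per edge), so that $\mu_{\gamma_1}\mu_{\gamma_2}=\mu_{\gamma_1+\gamma_2}$; (ii) $\mu_{\partial w}(\sigma)$ is precisely the Radon–Nikodym factor produced by the $w$-spin-flip; and (iii) the telescoping of signs over $F$ really yields $(-1)^N$ with $N = |\{v_1,\dots,v_{2n}\}\cap F|$, using that $F$ is a set (each face used once) and the $v_i$ are distinct. Everything else is a one-line change of variables. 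I do not expect to need the doubling $\Tdbl^\delta$ or the corner graph for this lemma; those enter only in the later construction of the fermionic observable, and the hypothesis $\partial\gamma_1=\partial\gamma_2$ already guarantees $\mu_{\gamma_1}$ and $\mu_{\gamma_2}$ insert the same disorder operators.
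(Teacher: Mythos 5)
Your overall route --- realizing the change $\gamma\mapsto\gamma+\partial w$ as a spin flip at the vertex $w$ and telescoping over the faces in $F$ --- is sound, and it is a (de-covered) version of the paper's argument: the paper instead rewrites $\E[\sigma_{v_1}\cdots\sigma_{v_{2n}}\mu_{\gamma}]$ as a sum over spin configurations on a double cover of $\torus$ branched along $\gamma$, antisymmetric between sheets, and observes that the covers built from $\gamma_1$ and $\gamma_2$ are isomorphic by exchanging sheets at the vertices corresponding to $F$, which moves exactly $N$ of the marked points to the other sheet. Your version avoids the double cover entirely, which is perfectly adequate for this lemma.

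However, the key identity in your Step 2 is justified by two statements that are wrong as written. First, flipping $\sigma_w$ changes the Gibbs weight by $\exp\bigl(-2\beta\sum_{y\sim w}\sigma_w\sigma_y\bigr)$, not $\exp\bigl(-4\beta\sum_{y\sim w}\sigma_w\sigma_y\bigr)$; only with the factor $-2\beta$ is this equal to $\mu_{\partial w}(\sigma)$. Second, and more seriously, the multiplicativity $\mu_{\partial w}\mu_{\gamma}=\mu_{\gamma+\partial w}$ is \emph{false} whenever $\gamma$ crosses one of the four primal edges incident to $w$: on such an edge the left-hand side carries a factor $e^{-4\beta\sigma_x\sigma_y}$ while the right-hand side carries $1$ (so "parity" is not enough for $\mu$ alone). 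This situation occurs generically in your induction --- e.g.\ when $F$ contains two adjacent faces, or when $\gamma_2$ runs alongside a face of $F$ --- so the justification fails exactly where it is needed. The repair is standard and one line: work with the combined weight $e^{\beta\sum_{x\sim y}\sigma_x\sigma_y}\mu_{\gamma}(\sigma)=\exp\bigl(\beta\sum_{x\sim y}\epsilon^{\gamma}_{xy}\sigma_x\sigma_y\bigr)$, where $\epsilon^{\gamma}_{xy}=-1$ iff $(xy)$ crosses $\gamma$. This combined weight depends only on crossing parities, so $\epsilon^{\gamma+\partial w}_{xy}=\epsilon^{\gamma}_{xy}\epsilon^{\partial w}_{xy}$, and under the flip at $w$ one has $\epsilon^{\partial w}_{xy}\sigma'_x\sigma'_y=\sigma_x\sigma_y$ for every edge; hence $\mu_{\gamma+\partial w}(\sigma')\,e^{\beta\sum\sigma'_x\sigma'_y}=\mu_{\gamma}(\sigma)\,e^{\beta\sum\sigma_x\sigma_y}$, which is the identity you asserted. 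With that fixed, the sign from $\prod_i\sigma_{v_i}$ and the telescoping over $F$ (or, more simply, flipping all spins of $F$ simultaneously, since for each edge the parity of its endpoints in $F$ equals its crossing parity with $\sum_{v\in F}\partial v$) give $(-1)^N$ as claimed.
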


\begin{proof}
Using $\gamma_{1}$ as a branch cut, one can construct a double cover $\tilde{\T}_{1}$ of the graph $\torus$. Namely, consider
two copies (\textquotedbl{}sheets\textquotedbl{}) of $\torus$, remove
in each copy the edges crossing $\gamma_{1}$, and add instead the
two edges connecting the corresponding vertices on different sheets.
We can write 
\begin{align*}
\E[\sigma_{v_{1}}\dots\sigma_{v_{2n}}\mu_{\gamma_{1}}] & =\frac{1}{Z}\sum_{\sigma:\torus\to\{\pm1\}}\sigma_{v_{1}}\dots\sigma_{v_{2n}}e^{\beta\sum_{x\sim y}\sigma_{x}\sigma_{y}}e^{-2\beta\sum_{(xy)\cap\gamma\neq0}\sigma_{x}\sigma_{y}}\\
 & =\frac{1}{Z}\sum_{\substack{\sigma:\tilde{\T}_{1}\to\{\pm1\}\\
\sigma_{v}\equiv-\sigma_{v^{*}}
}
}\sigma_{v_{1}}\dots\sigma_{v_{2n}}e^{\frac{\beta}{2}\sum_{x\sim y}\sigma_{x}\sigma_{y}},
\end{align*}
where $v\in\torus$ is identified with its copy on the first sheet,
and $v^{*}$ denote its copy on the second sheet. We, if we repeat the same construction with $\gamma_2$, since $\gamma_{1}-\gamma_{2}=\sum_{v\in F}\partial v$,
then flipping the sheets of all vertices corresponding to $v\in F$
gives an isomorphism of the two double covers, under which exactly
$N$ of $v_{1},\dots,v_{2n}$ will move to the second sheet. This
gives the desired result. 
\end{proof}
In view of this lemma, the quantity $\E[\sigma_{v_{1}}\dots\sigma_{v_{2n}}\mu_{\gamma}]$
can be understood as a \textquotedbl{}multi-valued function\textquotedbl{}
of $v_{1},\dots,v_{2n}$ and $u_{1},\dots,u_{2m}$, where $\partial\gamma=\{u_{1},\dots,u_{2m}\}$,
meaning that once the initial condition (i. e., the choice of $\gamma$
for a certain position of $v_{1},\dots,u_{2m}$) is prescribed, there
is a natural way to extend its value as the marked points move around
in in the lattice. Namely, when moving $u_{i}\mapsto u'_{i}\sim u_{i}$,
one replaces $\gamma$ by $\gamma+(u_{i}u_{i}')$, and when moving
$v_{i}\mapsto v'_{i}$, one adds a $-$ sign whenever $(v_{i},v'_{i})$
crosses $\gamma$. With this convention, if all the points move and
come back to their initial positions, say, without leaving a fixed
fundamental domain of $\torus$, then the expression $\E[\sigma_{v_{1}}\dots\sigma_{v_{2n}}\mu_{\gamma}]$
changes sign in the same way as $\prod(u_{i}-v_{j})^{\frac{1}{2}}$

For $p,q\in\{0,1\}$, we denote 
\begin{align}
\mu_{pq}:=\mu_{\gamma_{pq}},
\end{align}
where $\gamma_{pq}$ is a simple loop on $\Tdual$ that lifts to a
path on $\delta\Z^{2}$ connecting a point $z$ with $z+p\omega_{1}^{\delta}+q\omega_{2}^{\delta}$.
In particular, we can choose $\mu_{00}=\emptyset$.
\begin{defn}
We define the \emph{Dirac spinor}
\begin{align}
\eta_{z}:=e^{\frac{\i\pi}{4}}\left(\frac{z^{\bullet}-z^{\circ}}{|z^{\bullet}-z^{\circ}|}\right)^{-\frac{1}{2}},\label{eq: Dirac}
\end{align}
understood as a two-valued function on the corner lattice $\Cgr(\torus)$,
i.e., the function the double cover of that graph ramified at every
face, see Fig. \ref{Fig: Dirac}. 
\end{defn}

\begin{figure}
\includegraphics[width=0.8\textwidth]{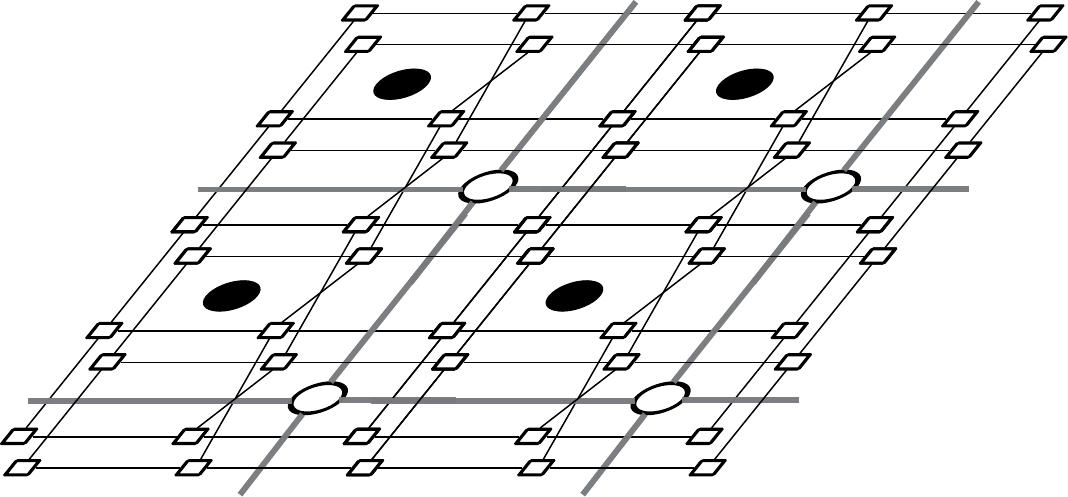}\caption{A piece of the double cover of the corner graph, ramified at every face of that graph, on which the Dirac spinor $\eta_z$ is defined. The white discs are the vertices of the original lattice $\delta\Z^2$, the black discs are vertices of the dual graph, and the small squares are vertices of the corner graph.} \label{Fig: Dirac}
\end{figure}

The fermionic observables we use are similar to the to multi-point
fermionic observables as in \cite[Section 2.4]{CHI_Mixed}. Since
we will only need those for the energy correlation, we will restrict
our consideration to a very specific configuration.

Let $\Cgr^{[e_{1}\dots e_{k}]}(\hat{\T}^{\delta})$ denote the graph
obtained by deleting from $\Cgr(\hat{\T}^{\delta})$ the edges crossing
$e_{1},\dots,e_{k}$ and their shifts by by $\omega_{1}^{\delta},\omega_{2}^{\delta},\omega_{1}^{\delta}+\omega_{2}^{\delta}$.
The quantity $F_{e_{1}\dots e_{k}}(a,\cdot)$ defined below in (\ref{eq: def_obs})
is a ``spinor'' on $\Cgr(\hat{\T}^{\delta})$ ramified at $a^{\bullet},a^{\circ}$
and their shifts by $\omega_{1}^{\delta},\omega_{2}^{\delta},\omega_{1}^{\delta}+\omega_{2}^{\delta}$,
i. e., a function on the double cover of $\Cgr(\hat{\T}^{\delta})$
ramified at those points and changing sign between sheets. We prefer
to view it as a function rather than spinor, by introducing a
branch cut $[a^{\circ}a^{\bullet}]$ that divides $a$ into two
vertices $a^{\pm}$, $a^{+}$ being on the left as seen from $a^{\bullet}$,
and similarly for the shifts. We denote the resulting graph by $\Cgr_{[a]}^{[e_{1}\dots e_{k}]}(\hat{\T}^{\delta}),$ see Fig. \ref{Fig: cut_graph}. 

\begin{figure}
\includegraphics[width=0.4\textwidth]{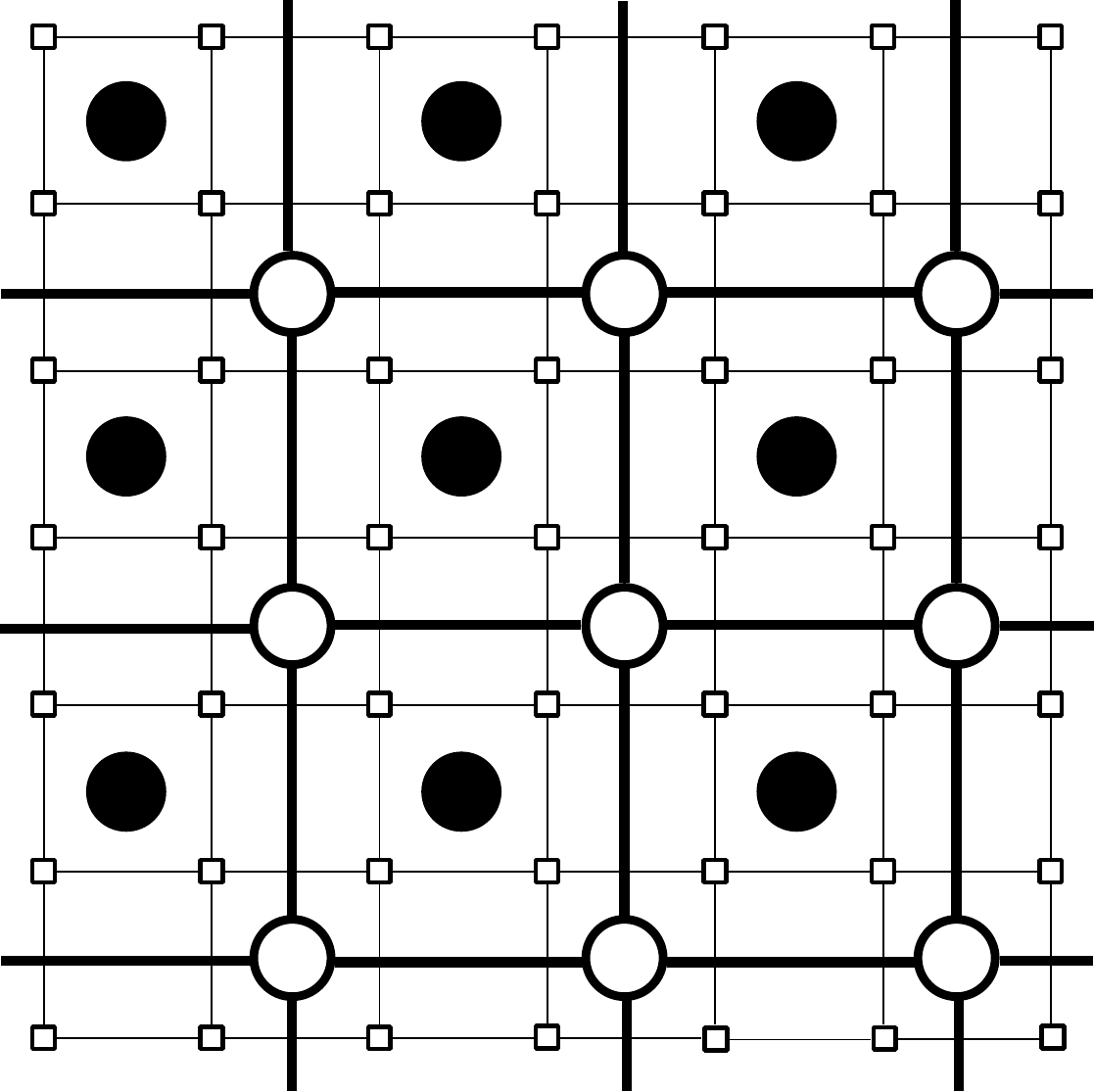}\qquad{}\includegraphics[width=0.4\textwidth]{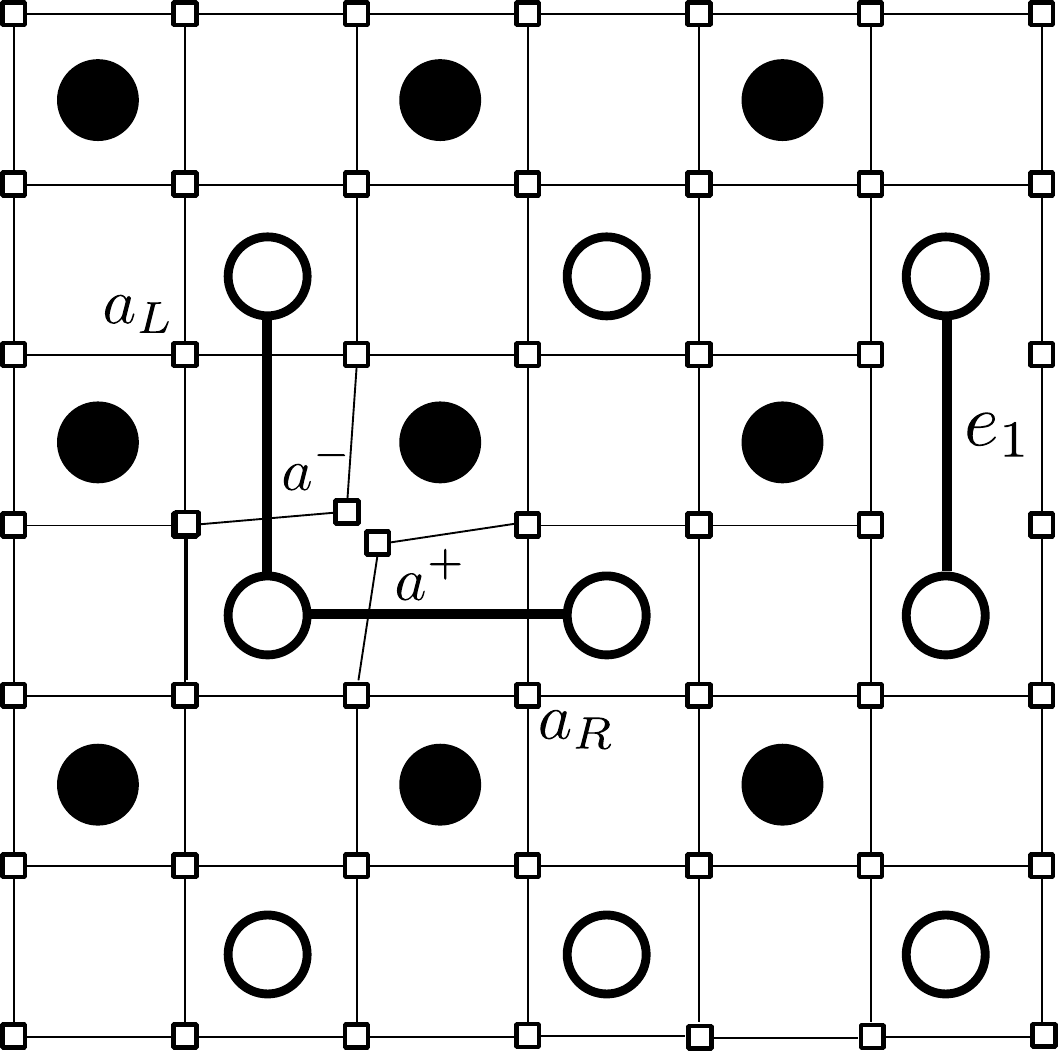}\caption{The lattice $\delta\protect\Z^{2}$ as $\circ$, its dual lattice
as $\bullet$, the corner lattice as $\square$. On the right: the
modified graph $\protect\Cgr_{[a]}^{[e_{1}]}(\cdot)$. Any closed
path on $\protect\Cgr_{[a]}(\cdot)$ either encircles both $a^{\bullet}$
and $a^{\circ}$, or none; hence $F(a,\cdot)$ is single-valued on
the modified graph. The values $F_{e_{1}}(a,a_{L})$ and $F_{e_{1}}(a,a_{R})$
are proportional to $\protect\E(\protect\en_{e_{L}}\protect\en_{e_{1}})$
and $\protect\E(\protect\en_{e_{R}}\protect\en_{e_{1}})$ respectively,
see Lemma \ref{lem: s-hol_spec_values}. }\label{Fig: cut_graph}
\end{figure}

\begin{defn}
Given distinct edges $e_{1},\dots,e_{k}$ in the torus $\T^{\delta}$
that pairwise have no common incident corner, and a corner $a\in\Cgr(\T^{\delta}),$
we define the observable 
\begin{equation}
F_{e_{1}\dots e_{k}}(a,z):=\eta_{z}\E\left[\sigma_{z^{\circ}}\sigma_{a^{\circ}}\mu_{z^{\bullet}}\mu_{a^{\bullet}}\prod_{i=1}^{k}\en_{e_{i}}\right], \quad z\in \Cgr_{[a]}^{[e_{1}\dots e_{k}]}(\hat{\T}^{\delta})\label{eq: def_obs}
\end{equation}
We fix a choice of a square root in the definition of $\eta_{a}$
and set the ``initial conditions'' for $F_{e_{1}\dots e_{k}}(a,z)$
of the above observable by $\mu_{a^{+,\bullet}a^{\bullet}}=\mu_{\emptyset}$
and $\eta_{a^{+}}=\eta_{a}$ , so that we have 
\[
F_{e_{1}\dots e_{k}}(a,a^{\pm})=\pm\eta_{a}\E\left[\prod_{i=1}^{k}\en_{e_{i}}\right].
\]
\end{defn}

Recall that a function $F$ defined on a corner graph is called \emph{s-holomorphic}
if it satisfies the phase condition $F(z)\in\eta_{z}\R$, and for
every edge $e$, one has 
\begin{align}
F(e_{NE})+F(e_{SW})=F(e_{NW})+F(e_{SE}),\label{eq: s-hol}
\end{align}
where $e_{NE},e_{SW},e_{NW},e_{SE}$ are the four corners neighboring
$e$. We will, in fact, only need the following simple properties
of s-holomorphic functions: if $F$ is s-holomorphic, then its restriction
to each of the two sub-lattices $\{z:\eta_{z}\in\eta_{z_{0}}\R\cup \i\eta_{z_{0}}\R\}$
are discrete holomorphic in the usual sense, that is, 
\begin{align}
F\left(z+\frac{\delta}{2}\right)-F\left(z-\frac{\delta}{2}\right)-\frac{1}{\i}\left(F\left(z+\i\frac{\delta}{2}\right)-F\left(z-\i\frac{\delta}{2}\right)\right)=0,\label{eq: discr_d_bar}
\end{align}
for every $z\in\Cgr$ such that $F$ is s-holomorphic at both edges
intersecting the segments $\left[z-\frac{\delta}{2};z+\frac{\delta}{2}\right]$
and $\left[z-\i\frac{\delta}{2};z+\i\frac{\delta}{2}\right]$. As a
consequence, if, for an open set $\Omega$, a sequence $F_{\delta}$
of s-holomorphic functions defined on refining lattices $\Omega\cap\Cgr(\delta\Z^{2})$
is uniformly bounded on every compact subset of $\Omega$, then there
is a holomorphic function $f:\Omega\to\C$ and a subsequence $\delta_{k}$
such that 
\begin{align*}
F_{\delta_{k}}(z)=\proj{\eta_{z}}{f(z)}+\oo{1},
\end{align*}
uniformly on compact subsets of $\Omega$. This is a simple consequence of the discrete Cauchy integral formula, see e.g. \cite{CHELKAKSMIRNOV}.
\begin{lem}
\label{lem: s-hol_spec_values}The observable $F_{e_{1}\dots e_{k}}(a,\cdot)$
is a well defined s-holomorphic function on $\Cgr_{[a]}^{[e_{1}\dots e_{k}]}(\Tdbl_{\delta}).$
Moreover, it has the following special values: 
\begin{align}
F_{e_{1}\dots e_{k}}(a,a^{\pm}) & =\pm\eta_{a}\E\left[\prod_{m=1}^{k}\en_{e_{m}}\right],\label{eq: F_value_sing_1}\\
F_{e_{1}\dots e_{k}}(a,a^{\pm}+p\omega_{1}^{\delta}+q\omega_{2}^{\delta}) & =\mp\eta_{a}\E\left[\mu_{pq}\prod_{m=1}^{k}\en_{e_{m}}\right], & (pq)\neq(00).\label{eq: F_value_sing_2}
\end{align}
If, moreover, we denote \textup{$\anw:=a+\i\cdot(a^{\bullet}-a^{\circ})$
}and $\ase=a-\i\cdot(a^{\bullet}-a^{\circ}),$ and by $e_{L,R}$
the edges incident to both $a$ and $a_{L,R}$ respectively, then

\begin{align}
F_{e_{1}\dots e_{k}}(a,a_{L,R}) & =-\i\eta_{a}\sqrt{2}\E\left[\en_{e_{L,R}}\prod_{m=1}^{k}\en_{e_{m}}\right],\label{eq: F_value_near}\\
F_{e_{1}\dots e_{k}}(a,a_{L,R}+p\omega_{1}^{\delta}+q\omega_{2}^{\delta}) & =\pm\i\eta_{a}\sqrt{2}\E\left[\mu_{pq}\en_{e_{L,R}}\prod_{m=1}^{k}\en_{e_{m}}\right], & (pq)\neq(00).\label{eq: F_value_near_2}
\end{align}
\end{lem}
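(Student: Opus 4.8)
The plan is to recognize $F_{e_1\dots e_k}(a,\cdot)$ as a discrete holomorphic fermionic observable of the kind used by Smirnov and Hongler, and to run the argument of \cite[Section 2.4]{CHI_Mixed} (cf.\ \cite{HONGLERSMIRNOV}) essentially verbatim; the only genuinely new features are the passage to the double cover $\Tdbl^\delta$ and the presence of the energy insertions. It is convenient to rewrite, for $e_i=(x_iy_i)$ and at criticality, $\en_{e_i}=\tfrac1{\sqrt2}\,\spin{x_i}\spin{y_i}\,e^{-2\beta\spin{x_i}\spin{y_i}}$, i.e.\ a spin pair times the disorder $\mu$ along the single dual edge crossing $e_i$; this exhibits (\ref{eq: def_obs}) as a correlator of $2+2k$ spins and $2+2k$ point disorders, so that $F_{e_1\dots e_k}(a,\cdot)$ is a spinor ramified only at $a^\circ,a^\bullet$ and their shifts by $\omega_1^\delta,\omega_2^\delta,\omega_1^\delta+\omega_2^\delta$ (the energy insertions carry spins and disorders in pairs and contribute no ramification). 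The phase condition $F_{e_1\dots e_k}(a,z)\in\eta_z\R$ is immediate since the expectation in (\ref{eq: def_obs}) is a real number; and the first assertion of the Lemma --- single-valuedness --- follows because $\eta_z$ has monodromy $-1$ around every face of $\Cgr$, which cancels the monodromy of the correlator (the standard one, governed by the homological-invariance lemma above), leaving only the ramifications at $a^\circ,a^\bullet$ and their shifts, which are exactly the points that the cuts defining $\Cgr_{[a]}^{[e_1\dots e_k]}(\Tdbl^\delta)$ remove.

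The substantive point is $s$-holomorphicity, i.e.\ the three-term relation (\ref{eq: s-hol}). For a primal edge $e$ that is not incident to $a$ and whose adjacent corners avoid the cuts and the deleted edges, (\ref{eq: s-hol}) is a local identity in the marked points: expanding both sides in the high-temperature representation and applying the standard sign-reversing pairing of configurations reduces it to the planar computation, the insertions $\spin{a^\circ}\mu_{a^\bullet}\prod_i\en_{e_i}$ being irrelevant because supported away from $e$; here I would simply invoke \cite[Section 2.4]{CHI_Mixed}. Near each $e_i$ the four $\Cgr$-edges crossing $e_i$ (and their shifts) have been deleted, so nothing is required there. \emph{The step I expect to demand the most care, and the main obstacle}, is checking (\ref{eq: s-hol}) at the two primal edges incident to $a$, where the cut $[a^\circ a^\bullet]$ splits $a$ into $a^{\pm}$: one must verify that the conventions $\mu_{a^{+,\bullet}a^\bullet}=\mu_\emptyset$ and $\eta_{a^+}=\eta_a$ are precisely those compatible with the three-term relation across the cut, through the bijection of \cite[Section 2.4]{CHI_Mixed} that trades a spin at $a^\circ$ for a disorder at $a^\bullet$; this verification also pins down the sign conventions that will later distinguish the four sectors. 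Once $s$-holomorphicity is established, the precompactness statement recalled before the Lemma applies to $F_{e_1\dots e_k}(a,\cdot)$.

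The special values are then bookkeeping. Identity (\ref{eq: F_value_sing_1}) is immediate: for $z=a^{\pm}$ one has $z^\circ=a^\circ$ and $z^\bullet=a^\bullet$, hence $\spin{z^\circ}\spin{a^\circ}=1$, while $\mu_{z^\bullet}\mu_{a^\bullet}=\mu_\emptyset=1$ and $\eta_{a^\pm}=\pm\eta_a$ by the conventions. For (\ref{eq: F_value_sing_2}) the corner $z=a^{\pm}+p\omega_1^\delta+q\omega_2^\delta$ projects to the same point of $\torus$ as $a$, so the spin factor is again $1$, whereas the disorder line realizing $\mu_{z^\bullet}\mu_{a^\bullet}$ now projects to a simple non-contractible loop of homology class $(p,q)$, i.e.\ $\mu_{z^\bullet}\mu_{a^\bullet}=\mu_{pq}$; the remaining sign $\mp$ is obtained by transporting $\eta_z$, together with the orientation convention of the disorder line, from $a^+$ along a path whose projection is a $(p,q)$-loop, which acquires an extra $-1$ because $\eta$ is ramified at every face --- the elementary sign count of \cite[Section 2.4]{CHI_Mixed}.

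Finally, for (\ref{eq: F_value_near}), $a_{L,R}$ is the diagonal neighbour of $a$ in $\Cgr$ forming, together with $a$, two opposite corners of the $\Cgr$-face dual to a primal edge $e_{L,R}$ incident to $a^\circ$; then $a_{L,R}^\circ$ is the other endpoint of $e_{L,R}$ and $a_{L,R}^\bullet$ is the dual vertex for which the disorder line of $\mu_{a_{L,R}^\bullet}\mu_{a^\bullet}$ is the single dual edge crossing $e_{L,R}$, so $\mu_{a_{L,R}^\bullet}\mu_{a^\bullet}=e^{-2\beta\spin{a^\circ}\spin{a_{L,R}^\circ}}$. Since at criticality $e^{-2\beta}=\alpha_c=\sqrt2-1$ and $e^{2\beta}=\alpha_c^{-1}$, evaluating $s\mapsto s\,e^{-2\beta s}$ at $s=\pm1$ and using $\tfrac12(\alpha_c+\alpha_c^{-1})=\sqrt2$ gives the pointwise identity $\spin{a^\circ}\spin{a_{L,R}^\circ}\,e^{-2\beta\spin{a^\circ}\spin{a_{L,R}^\circ}}=\sqrt2\,\en_{e_{L,R}}$; combining this with $\eta_{a_{L,R}}=-\i\,\eta_a$, read off from (\ref{eq: Dirac}) by transporting $\eta$ from $a^+$ across the shared face, yields (\ref{eq: F_value_near}). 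The shifted identity (\ref{eq: F_value_near_2}) is obtained in exactly the same way, the disorder line additionally carrying a $(p,q)$-loop and hence the extra factor $\mu_{pq}$ together with the sign $\pm$, as in (\ref{eq: F_value_sing_2}).
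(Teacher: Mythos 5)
Your reduction of the first assertion to the rewriting $\en_{e_m}=\tfrac1{\sqrt2}\,\sigma_{x_m}\sigma_{y_m}\mu_{[u_mv_m]}$ (with $u_m,v_m$ the endpoints of the dual edge crossing $e_m$) does not give what you claim. Once the energy is decomposed into a spin pair plus a disorder pair, the correlator, extended in $z$ by the standard disorder-line conventions, acquires ramification at \emph{all four} points $x_m,y_m,u_m,v_m$: e.g.\ around the face of $\Cgr$ centred at $u_m$, the moving spin $\sigma_{z^{\circ}}$ crosses the two disorder lines emanating from $u_m$ (the line to $a^{\bullet}$ and the fixed line $[u_mv_m]$), giving $+1$, while $\eta_z$ contributes $-1$, so the total monodromy is $-1$. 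Those faces are \emph{not} removed by the modification $\Cgr^{[e_1\dots e_k]}$ (only the two corner-edges per copy of $e_m$ crossing the primal edge are deleted — not four), so "the energy insertions contribute no ramification" is false in the decomposed picture, and single-valuedness on $\Cgr_{[a]}^{[e_1\dots e_k]}$ does not follow from it. The paper's argument keeps $\en_{e_m}=\sigma_{x_m}\sigma_{y_m}-\tfrac1{\sqrt2}$ fused: then the only disorder line is the one joining $z^{\bullet}$ to $a^{\bullet}$, and the deleted edges ensure that any cycle in the modified graph sweeps $x_m$ and $y_m$ together, so no sign is produced; the decomposition is used only later (Lemma \ref{lem: fmp_discrete_singularity}), and precisely there additional branch cuts $[z_{2m-1}^{\circ}z_{2m-1}^{\bullet}]$, $[z_{2m}^{\circ}z_{2m}^{\bullet}]$ must be introduced — which is the ramification your argument overlooks. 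Note also that in the fused picture the graph modification is genuinely needed for a different reason: winding around $x_m$ alone would flip $\sigma_{x_m}\sigma_{y_m}$ but not the constant $-\tfrac1{\sqrt2}$, so without the deletion the observable is not even spinor-like there.

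There is a second, more consequential, sign gap in (\ref{eq: F_value_near}): you read off $\eta_{a_{L,R}}=-\i\eta_a$ "by transporting from $a^{+}$". But $\anw$ and $\ase$ lie on \emph{opposite} sides of the cut $[a^{\circ}a^{\bullet}]$, so only one of them is reached from $a^{+}$ by a two-step path in the cut graph; the other must be reached from $a^{-}$ (with value $-\eta_a\E[\cdots]$) with the opposite sense of rotation of $z^{\bullet}-z^{\circ}$, and it is exactly the compensation of these two sign changes that makes \emph{both} values carry the same prefactor $-\i\sqrt2\eta_a$. Transporting both from $a^{+}$ as you state would produce opposite signs for $\anw$ and $\ase$; this relative sign is not cosmetic, since (\ref{eq: F^00 constant}) and the proof of Theorem \ref{thm: difference} rest on $F(a,\anw)-F(a,\ase)$ with these precise signs. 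A similar (smaller) gap occurs in your justification of the $\mp$ in (\ref{eq: F_value_sing_2}): the sign does not come from the ramification of $\eta$ over the faces, but from the fact that a simple loop realizing $\mu_{pq}$ lifts to a path joining $a^{+}$ to $a_{pq}^{-}$. The remaining points (s-holomorphicity away from $a$ by the local argument of \cite{CHI_Mixed}, the value (\ref{eq: F_value_sing_1}), and the criticality identity $\sigma\sigma e^{-2\beta\sigma\sigma}=\sqrt2\,\en$) agree with the paper's proof.
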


\begin{proof}
First, observe that $F_{e_{1}\dots_{e_{k}}}(a,\cdot)$ is locally
well-defined, since when $z$ moves around any face of $\Cgr(\Tdbl)$
except one corresponding to $a_{ij}^{\bullet}$ or $a_{ij}^{\circ}$,
the $-1$ sign acquired by $\eta_{z}$ cancels the $-1$ sign coming
from winding of $z^{\bullet}$ around $z^{\circ}$. Moving $z$ from
$a^{+}$ to $a^{-}$ around, say, $a^{\circ}$ results only in the
sign change of $\eta_{z}$, hence, $F(a,a^{-})=-\eta_{a}$. Moving
$z$ \textquotedbl{}around the torus\textquotedbl{} to $z_{pq}=z+p\omega_{1}^{\delta}+q\omega_{2}^{\delta}$
results in replacing $\gamma$ defining $\mu_{a^{\bullet}}\mu_{z^{\bullet}}$
with $\gamma+\gamma_{pq}$; hence, doing so twice yields the same
value. Finally, a \emph{simple} loop lifting to a path connecting
$a$ with $a_{pq}$ must start with $a^{+}$ and end at $a_{pq}^{-}$,
hence the $-$ sign in (\ref{eq: F_value_sing_2}) when $(pq)\neq(00)$.
The proof of s-holomorphicity in \cite{CHI_Mixed} is completely local
and thus extend verbatim to our case. To prove (\ref{eq: F_value_near}),
we start with the value $F_{e_{1}\dots e_{k}}(a,a^{+})=\eta_{a}\E[\dots]$
and move in two steps to $z=a_{R}$ so that $z^{\bullet}-z^{\circ}$
rotates couterclockwise. This way, $z^{\circ}$ does not cross the newly
created disorder line $[a^{\bullet}a_{L}^{\bullet}]$ , and we get
\begin{multline*}
\eta_{a_{L}}\sigma_{a^{\circ}}\sigma_{\anw^{\circ}}\mu_{a^{\bullet}}\mu_{\anw^{\bullet}}=-\i\eta_{a}\sigma_{a^{\circ}}\sigma_{\anw^{\circ}}e^{-2\beta\sigma_{a^{\circ}}\sigma_{\anw^{\circ}}}\\
=-\i\eta_{a}\left(\sigma_{a^{\circ}}\sigma_{\anw^{\circ}}\cosh(-2\beta)+\sinh(-2\beta)\right)=-\sqrt{2}\i\eta_{a}\en_{\sigma_{a^{\circ}}\sigma_{\anw^{\circ}}}
\end{multline*}
inside the correlation, where we have used that $\cosh(-2\beta)=\sqrt{2}$
and $\sinh(-2\beta)=-1$. The computation for $a_{R}$ is identical,
except we start $F_{e_{1}\dots e_{k}}(a,a^{-})=-\eta_{a}\E[\dots]$
and move so that $z^{\bullet}-z^{\circ}$ rotates clockwise.
The proof of (\ref{eq: F_value_near_2}) is similar, taking into account
the disorder line $\gamma_{pq}$ created when moving from $a$ to
$a_{pq}$.
\end{proof}
It is convenient to anti-symmetrize $F_{e_{1}\dots e_{k}}$ by introducing
\[
F_{e_{1}\dots e_{k}}^{(ij)}(a,z):=\frac{1}{4}\sum_{p,q\in\{0,1\}}(-1)^{ip+jq}F(a,z+p\omega_{1}^{\delta}+q\omega_{2}^{\delta}).
\]

We summarize their properties in the following Lemma:
\begin{lem}
The observables $F_{e_{1}\dots e_{k}}^{(ij)}(a,\cdot)$ are s-holomorphic
functions on $\Cgr_{[a]}^{[e_{1}\dots e_{k}]}(\hat{\T}^{\delta})$
that satisfy the anti-periodicity condition

\begin{equation}
F_{e_{1}\dots e_{k}}^{(ij)}(a,z+p\omega_{1}^{\delta}+q\omega_{2}^{\delta})=(-1)^{ip+jq}F_{e_{1}\dots e_{k}}^{(ij)}(a,z).\label{eq: anti-symmetry}
\end{equation}
They have special values given by 
\begin{align}
F_{e_{1}\dots e_{k}}^{(ij)}(a,a^{\pm}) & =\pm\eta_{a}\E\left[\mu^{(ij)}\prod_{m=1}^{k}\en_{e_{m}}\right],\label{eq: F_ij_at_sing}\\
F_{e_{1}\dots e_{k}}^{(ij)}(a,a_{L,R}) & =-\i\eta_{a}\sqrt{2}\E\left[\mu^{(ij)}\en_{e_{L,R}}\prod_{m=1}^{k}\en_{e_{m}}\right]\label{eq: F_ij_at_LR}
\end{align}
where 
\begin{align}
\mu^{(ij)}:=\frac{1}{4}\sum_{pq\in\{0,1\}}(-1)^{(1-i)p+(1-j)q+pq}\mu_{pq}.\label{eq: def_mu^ij}
\end{align}
Moreover, there exists a constant $c\in\R$ (depending on $\T^{\delta}$ and on $a$ but not on $z$) such that $F^{(00)}(a,z)=\proj{\eta_{z}}{\i\eta_{a}c}.$
In particular, 
\begin{align}
F^{(00)}(a,\anw)=F^{(00)}(a,\ase).\label{eq: F^00 constant}
\end{align}
\end{lem}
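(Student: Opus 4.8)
The first three assertions are, up to bookkeeping, consequences of the previous Lemma, so I would dispatch them quickly. s-Holomorphicity of $F^{(ij)}_{e_1\dots e_k}(a,\cdot)$ is inherited from that of $F_{e_1\dots e_k}(a,\cdot)$: the former is a fixed $\R$-linear combination of the translates $z\mapsto F_{e_1\dots e_k}(a,z+p\omega_1^\delta+q\omega_2^\delta)$, and the s-holomorphicity condition is local and invariant under the translations by $\omega_1^\delta,\omega_2^\delta$, which preserve the graph $\Cgr_{[a]}^{[e_1\dots e_k]}(\hat\T^\delta)$. The anti-periodicity (\ref{eq: anti-symmetry}) is exactly the defining property of the finite Fourier transform over the deck group $(\Z_2)^2$ of the cover $\hat\T^\delta\to\torus$: reindexing the four-term sum after a shift by $\omega_1^\delta$, and using that the cover is $2$-periodic, produces the factor $(-1)^i$, and likewise $(-1)^j$ under $\omega_2^\delta$. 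The special values (\ref{eq: F_ij_at_sing}) and (\ref{eq: F_ij_at_LR}) come from substituting $z=a^\pm$, resp.\ $z=\anw,\ase$, into the definition of $F^{(ij)}$ and invoking (\ref{eq: F_value_sing_1})--(\ref{eq: F_value_near_2}); the only point is that the signs collected over the four shifts --- a $+$ from $(p,q)=(0,0)$ (with $\mu_{00}=\mu_\emptyset$) and, for $(p,q)\neq(0,0)$, an extra $-1$ from the $\mp$ in (\ref{eq: F_value_sing_2}), resp.\ from the disorder line $\gamma_{pq}$ crossed in (\ref{eq: F_value_near_2}) --- are precisely the coefficients $(-1)^{(1-i)p+(1-j)q+pq}$ in (\ref{eq: def_mu^ij}). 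This reduces to the elementary identity
\[
-(-1)^{ip+jq}=(-1)^{(1-i)p+(1-j)q+pq}\qquad\text{for }(p,q)\in\{0,1\}^2\setminus\{(0,0)\},\ i,j\in\{0,1\},
\]
checked in the three nontrivial cases.

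The substantive statement is the one about $F^{(00)}$. With $i=j=0$ the anti-periodicity is genuine periodicity, so $F^{(00)}(a,\cdot)$ descends to $\torus$, a priori onto the graph obtained from $\Cgr(\torus)$ by retaining the cut $[a^\circ a^\bullet]$ and deleting the edges crossing $e_1,\dots,e_k$. The plan is to show this cut-and-slit structure is invisible to $F^{(00)}$, using that $Z^{(00)}(\alpha_c)=\sqrt{\det\KW^{00}}=0$ at criticality (Theorem \ref{LEMMACIMASONILEMMA}; equivalently, the identity (\ref{eq: lem_qf}) at $\alpha_c$). Concretely, I would expand $\E[\mu_{pq}\mathcal O]$ by the high-temperature expansion as a sum over subgraphs signed by their mod-$2$ intersection number with $\gamma_{pq}$; character orthogonality on $H_1(\torus,\Z_2)$ --- the analogue of Lemma \ref{Lemma: sum_qf} --- then collapses $\mu^{(00)}=\tfrac14(\mu_{00}-\mu_{10}-\mu_{01}-\mu_{11})$ to $\E[\mu^{(00)}\mathcal O]=-\tfrac{1}{2\ParFunIsiRed}\cdot Z^{(00)}\langle\mathcal O\rangle^{(00)}$, the unnormalised doubly periodic correlator. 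A Pfaffian (equivalently, $q_{00}$-weighted high-temperature) computation shows $Z^{(00)}\langle\mathcal O\rangle^{(00)}$ vanishes at $\alpha_c$ for every \emph{even} string $\mathcal O$ --- the mechanism being the one-dimensional kernel of $\KW^{00}$, corresponding to the constant discrete-analytic spinor, which is not paired off by an even observable --- so $\E[\mu^{(00)}\prod_m\en_{e_m}]=0$ and $\E[\mu^{(00)}\en_{e_{L,R}}\prod_m\en_{e_m}]=0$. By (\ref{eq: F_ij_at_sing}) this already forces $F^{(00)}(a,a^+)=F^{(00)}(a,a^-)=0$, and more generally $F^{(00)}$ has no monodromy and extends to an honest, bounded s-holomorphic function-spinor on the whole compact torus. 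At that point I would invoke the discrete analogue of Liouville's theorem --- via the subharmonic primitive $H$ of $\mathrm{Im}(F^2\,dz)$, which is now globally defined and, being subharmonic on $\torus$ and superharmonic on $\Tdual$, constant on each --- to conclude $F^{(00)}(a,z)=\proj{\eta_z}{w}$ for some $w\in\C$. Evaluating at $a$, $\proj{\eta_a}{w}=F^{(00)}(a,a^\pm)=0$ forces $w\in\i\eta_a\R$, i.e.\ $w=\i\eta_a c$ with $c\in\R$, which is the asserted form. Finally (\ref{eq: F^00 constant}) is immediate: from $\anw-a=\i(a^\bullet-a^\circ)=-(\ase-a)$ one checks directly that $\anw$ and $\ase$ share the same value of $\tfrac{z^\bullet-z^\circ}{|z^\bullet-z^\circ|}$, hence $\eta_{\anw}\R=\eta_{\ase}\R$ and the two projections of $\i\eta_a c$ coincide.

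The hard part is squarely the doubly periodic sector, and within it the step ``$F^{(00)}$ is s-holomorphic and periodic, hence a discrete constant''. Unlike the sectors $(01),(10),(11)$, where the observable is pinned by its prescribed singularity at $a$, here there is a one-complex-parameter family of constant discrete-analytic spinors $z\mapsto\proj{\eta_z}{w}$, so the discrete complex analysis by itself cannot locate $w$, equivalently the constant $c$; the Lemma only claims that $F^{(00)}$ lies in this family, and the value of $c$ will have to be recovered later from the scaling limit of the one-point function (Corollary \ref{Cor: hor_only}) and of the sector weights $\PartFun{ij}/\PFTotal$. The single most delicate ingredient in the plan is the vanishing $\E[\mu^{(00)}\cdot\,\cdots\,]=0$, resting on $Z^{(00)}(\alpha_c)=0$: it is what makes the cut at $a$ and the slits at $e_1,\dots,e_k$ removable and lets the Liouville-type argument even get started.
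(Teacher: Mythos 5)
Your handling of the first three assertions is correct and coincides with the paper's (linearity for s-holomorphicity, reindexing the four-term sum for (\ref{eq: anti-symmetry}), and summing (\ref{eq: F_value_sing_1})--(\ref{eq: F_value_near_2}) with exactly the sign identity you verify). The problem is in the $(00)$ part. The claim you single out as the key mechanism --- that the unnormalised doubly periodic correlator $Z^{(00)}\langle\mathcal{O}\rangle^{(00)}$ vanishes at $\alpha_{c}$ for \emph{every} even string, so that in particular $\E[\mu^{(00)}\en_{e_{L,R}}\prod_{m}\en_{e_{m}}]=0$ --- is false, and it contradicts the very lemma you are proving: by (\ref{eq: F_ij_at_LR}), $F^{(00)}(a,\anw)=-\i\eta_{a}\sqrt{2}\,\E[\mu^{(00)}\en_{e_{L}}]$, so your vanishing claim would force $c=0$, whereas the point of the lemma (and of your own closing paragraph) is that this constant is generically non-zero: its limit is governed by the one-point function, $\delta^{-1}\E[\mu^{(00)}\en_{e}]\to\pi^{-1}\En e{00}\neq0$ by (\ref{eq: conv_energy}) and (\ref{eq: En^00_e}), and already at the discrete level Lemma \ref{lem: B_ij} gives $\DisSpiObsVal{00}{\HorEdg}+\DisSpiObsVal{00}{\VerEdg}\neq0$, so $\E[\mu^{(00)}\en_{H}]$ and $\E[\mu^{(00)}\en_{V}]$ cannot both vanish. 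The zero-mode bookkeeping goes the other way: in the $(00)$ sector it is the correlators with an \emph{odd} number of energy insertions that survive (Definition \ref{def: en}); the only exact vanishing available is the insertion-free one, $\E\mu^{(00)}=-Z^{(00)}/(2\ParFunIsiRed)=0$. Relatedly, your remark about making the slits at $e_{1},\dots,e_{k}$ removable overreaches: the constancy statement concerns $k=0$ only, and for $k\geq1$ the function $F_{e_{1}\dots e_{k}}^{(00)}(a,\cdot)$ is genuinely non-constant.

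What the lemma actually needs can be salvaged from your plan: the true input $\E\mu^{(00)}=0$ (which does follow from $Z^{(00)}(\alpha_{c})=0$) gives $F^{(00)}(a,a^{\pm})=0$, the cut at $a$ can be glued, and then a Liouville-type argument together with $\proj{\eta_{a}}{w}=0$ yields $w\in\i\eta_{a}\R$. Two caveats, though. First, your Liouville step via the primitive $H$ of $\im(F^{2}\,dz)$ is not automatic on a torus, since $H$ may acquire periods along the two cycles; it is cleaner to argue, as the paper does, that the restrictions of the now globally s-holomorphic $F^{(00)}$ to the two diagonal sublattices are discrete harmonic on a finite torus and hence constant by the maximum principle. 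Second, the paper's own proof needs no Kac--Ward input at all: summing the discrete Cauchy--Riemann identity (\ref{eq: discr_d_bar}) over \emph{all} vertices makes every value enter with coefficient $1-1+\i-\i=0$, which forces (\ref{eq: discr_d_bar}) to hold at $z=a$ as well; constancy follows, and $F^{(00)}(a,a^{\pm})=0$ then comes out as a consequence rather than being fed in. Your route is workable once the false vanishing claims are removed, but as written the central ingredient is wrong.
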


\begin{proof}
The anti-periodicity (\ref{eq: anti-symmetry}) is manifest from the
construction, the s-holomorphicity follows from linearity, and (\ref{eq: F_ij_at_sing}–\ref{eq: F_ij_at_LR})
are obtained by simply summing (\ref{eq: F_value_sing_1}–\ref{eq: F_value_near_2}).
The fact that $F^{(00)}$ is essentially a constant is a discrete
analog of the claim that a meromorphic function on a torus with at
most one simple pole is constant. Indeed, consider the restriction
of $F^{(00)}(a,\cdot)$ onto the index 2 sub-lattice $\Cgr'(\torus):=\{z\in\Cgr(\torus):\eta_{z}\in e^{\i\pi/4}\eta_{a}\R\cup e^{-\i\pi/4}\eta_{a}\R\}$
of $\Cgr(\torus)$. As noticed above, this restriction is discrete
holomorphic, that is, the identity (\ref{eq: discr_d_bar}) holds
for every $z\in\Cgr(\torus)$ such that $\eta_{z}\in\eta_{a}\R\cup\i\eta_{a}\R$,
except, possibly, for $z=a$. However, summing (\ref{eq: discr_d_bar})
over \emph{all} $z$, we see that each value $F(z)$, $z\in\Cgr'(\torus)$,
enters the sum with the coefficient $1-1+\i-\i=0$. Therefore, (\ref{eq: discr_d_bar})
also holds for $z=a$. Since the restriction of a discrete holomorphic
function $F^{(00)}(a,\cdot)$ to each of the two sub-lattices $\{z\in\Cgr(\torus):\eta_{z}\in e^{\pm\i\pi/4}\eta_{a}\R\}$
is discrete harmonic, the maximum principle implies that these restrictions
are constant. The s-holomorphicity condition (\ref{eq: s-hol}) now
implies that the restrictions of $F^{(00)}(a,\cdot)$ to two other
sub-lattices $\{z\in\Cgr(\torus):\eta_{z}=\eta_{a}\R\}$ and
$\{z\in\Cgr(\torus):\eta_{z}=\i\eta_{a}\R\}$ are also constant
(and it's easy to see that the s-holomorphicity implies that in that
case, $F^{(00)}(a,z)=\proj{\eta_{z}}C$. Finally, since $F(a,a^{+})=-F(a,a^{-}),$
it is a simple direct check that the only value consistent with (\ref{eq: discr_d_bar})
vanishing at $a$ is $F(a,a^{+})=0,$ hence $F(a,z)\equiv0$ on the sublattice $\{z\in\Cgr(\torus):\eta_{z}=\pm\eta_{a}\},$
implying the restriction $C\in\i\eta_{a}\R.$
\end{proof}
We now describe the discrete analog of the Cauchy kernel $\frac{1}{z-a}$,
tailored for our purposes. Given $a\in\Cgr(\Z^{2})$ and a choice
of the square root at $\eta_{a}$ this is a unique s-holomorphic function
$\dzmone_{a}(\cdot)$ on $\Cgr_{[a]}(\Z^{2})$ such that 
\begin{align}
\dzmone_{a}(a^{+})=\eta_{a},\quad\dzmone_{a}(a^{-})=-\eta_{a},\label{eq: dzmone_res}
\end{align}
and, as $z\to\infty$, one has 
\begin{equation}
\dzmone_{a}(z)=\frac{\sqrt{2}}{\pi}\proj{\eta_{z}}{\bar{\eta}_{a}(z-a)^{-1}}+O(|z-a|^{-2}).\label{eq: asymp_zmone}
\end{equation}
Moreover, we have 
\begin{equation}
P_{a}(\anw)=P_{a}(\ase)=0.\label{eq: P_a_close}
\end{equation}
We also define the re-scaled version of this function, living on $\Cgr(\delta\Z^{2})$:
\begin{align*}
P_{a}^{\delta}(z)=\delta^{-1}\dzmone_{\delta^{-1}a}(\delta^{-1}z).
\end{align*}

The function $P_{a}$ is a multiple of the discrete Cauchy kernel
on the square lattice; see (\cite[Lemma 4.9]{CHI_Mixed}) for the
details. Since in (\cite{CHI_Mixed}), the lattice is scaled by $\sqrt{2}$
and rotated by $\frac{\pi}{4}$, one must re-define $P_{a}(z):=e^{\frac{\i\pi}{8}}P_{\sqrt{2}ae^{\i\frac{\pi}{4}}}\left(\sqrt{2}e^{\i\frac{\pi}{4}}z\right)$,
which is reflected in the pre-factor in (\ref{eq: asymp_zmone}).

The following lemma elucidates the singularity structure of $F_{e_{1}\dots e_{k}}^{(ij)}(a,z_{pq})$.
\begin{lem}
\label{lem: fmp_discrete_singularity}The function 
\begin{equation}
\tilde{F}_{a}(\cdot)=F_{e_{1}\dots e_{k}}^{(ij)}(a,\cdot)-\delta\cdot\E\left[\mu^{(ij)}\prod_{i=1}^{k}\en_{e_{i}}\right]\dzmone_{a}^{\delta}(\cdot)\label{eq: res_a}
\end{equation}
extends to an s-holomorphic function on $\Cgr(\T^{\delta})$ in a
neighborhood of $a$. If $z_{2m-1},z_{2m}$ are two corners
incident to $e_{m}$ and symmetric with respect to its centre, then 
\begin{multline}
\tilde{F}_{m}(\cdot)=F_{e_{1}\dots e_{k}}^{(ij)}(a,\cdot)\\
-\frac{\delta}{\sqrt{2}}\cdot\i\bar{\eta}_{z_{2m-1}}F_{e_{1}\dots\hat{e}_{m}\dots e_{k}}^{(ij)}(a,z_{2m})\dzmone_{z_{2m-1}}^{\delta}(\cdot)-\frac{\delta}{\sqrt{2}}\cdot\i\bar{\eta}_{z_{2m}}F_{e_{1}\dots\hat{e}_{m}\dots e_{k}}^{(ij)}(a,z_{2m-1})\dzmone_{z_{2m}}^{\delta}(\cdot)\label{eq: res_e_m}
\end{multline}
extends to an s-holomorphic function on $\Cgr(\T^{\delta})$ in a
neighborhood of $e_{m}$, where the choice of square root in $\eta_{z_{2m-1}}$
and $\dzmone_{z_{2m-1}}^{\delta}(\cdot)$ (resp. $\eta_{z_{2m}}$
and $\dzmone_{z_{2m}}^{\delta}(\cdot)$) are related by (\ref{eq: dzmone_res}).
Moreover, after that extension,
\begin{equation}
\tilde{F}(a)=\tilde{F}_{m}(z_{2m-1})=\tilde{F}_{m}(z_{2m})=0.\label{eq: values_at_e_i}
\end{equation}
\end{lem}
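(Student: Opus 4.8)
The statement is local near $a$ and near each $e_{m}$; away from these points $\tilde F_{a}$ and $\tilde F_{m}$ are s-holomorphic simply because they are differences of s-holomorphic functions (namely $F_{e_{1}\dots e_{k}}^{(ij)}(a,\cdot)$ and multiples of Cauchy kernels). So the plan is to handle the two types of singularity separately, and in each case to match a ``discrete residue''.

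Near $a$: rescaling the normalisation (\ref{eq: dzmone_res}) gives $\dzmone_{a}^{\delta}(a^{\pm})=\pm\delta^{-1}\eta_{a}$ (using that $\eta$ depends only on a direction and is scale invariant), while (\ref{eq: F_ij_at_sing}) gives $F_{e_{1}\dots e_{k}}^{(ij)}(a,a^{\pm})=\pm\eta_{a}\,\E[\mu^{(ij)}\prod_{i}\en_{e_{i}}]$. Hence $\tilde F_{a}(a^{+})=\tilde F_{a}(a^{-})=0$, so the branch cut $[a^{\circ}a^{\bullet}]$ becomes redundant for $\tilde F_{a}$: one glues $a^{+}$ and $a^{-}$ back into a single corner $a$ with $\tilde F_{a}(a)=0$, and the s-holomorphicity relations at the corner-edges incident to $a$ hold because they are the differences of the corresponding relations for $F_{e_{1}\dots e_{k}}^{(ij)}(a,\cdot)$ and for $\delta\,\E[\mu^{(ij)}\prod_{i}\en_{e_{i}}]\dzmone_{a}^{\delta}$, both valid on $\Cgr_{[a]}$. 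This gives the first claim together with $\tilde F_{a}(a)=0$.

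Near $e_{m}$ is the heart of the matter. The key is a discrete ``fusion'' identity expressing the value of the $k$-point observable at a corner incident to $e_{m}$ through the $(k-1)$-point observable at the symmetric corner, of the shape $F_{e_{1}\dots e_{k}}^{(ij)}(a,z_{2m-1})=\tfrac{\i}{\sqrt{2}}F_{e_{1}\dots\hat{e}_{m}\dots e_{k}}^{(ij)}(a,z_{2m})$ and $F_{e_{1}\dots e_{k}}^{(ij)}(a,z_{2m})=\tfrac{\i}{\sqrt{2}}F_{e_{1}\dots\hat{e}_{m}\dots e_{k}}^{(ij)}(a,z_{2m-1})$ (modulo the sign and square-root conventions fixed in the definitions of $\eta$, $\dzmone$ and the observable). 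I would prove this by the same algebra as in the proof of Lemma \ref{lem: s-hol_spec_values}: writing $\en_{e_{m}}=\tfrac{1}{\sqrt{2}}-\mu_{\gamma_{e_{m}}}$, where $\gamma_{e_{m}}$ is the single dual edge crossing $e_{m}$ (this uses $e^{-2\beta_{c}s}=\sqrt{2}-s$ for $s=\pm1$); then, for a corner $z$ incident to $e_{m}$, the product $\mu_{z^{\bullet}}\mu_{\gamma_{e_{m}}}$ is, in correlations, a disorder at the dual vertex on the other side of $e_{m}$, which, combined with the $\eta$-factors produced by the rotation of $z^{\bullet}-z^{\circ}$ across $e_{m}$ and with the s-holomorphic relations satisfied by the $(k-1)$-point observable among the four corners near $e_{m}$ (for which $e_{m}$ is not distinguished), yields the identity.

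Granting the fusion identity, the rest is residue bookkeeping. Since $z_{2m-1}$ and $z_{2m}$ are $L$- or $R$-partners of one another across $e_{m}$ in the notation of (\ref{eq: F_value_near}) (i.e.\ $z_{2m}\in\{(z_{2m-1})_{L},(z_{2m-1})_{R}\}$ and conversely), the rescaled identity (\ref{eq: P_a_close}) gives $\dzmone_{z_{2m-1}}^{\delta}(z_{2m})=\dzmone_{z_{2m}}^{\delta}(z_{2m-1})=0$, so the two Cauchy-kernel terms in (\ref{eq: res_e_m}) do not interfere at either corner; combining with $\dzmone_{z_{2m-1}}^{\delta}(z_{2m-1}^{\pm})=\pm\delta^{-1}\eta_{z_{2m-1}}$, $|\eta_{z_{2m-1}}|=1$ and the fusion identity, one finds $\tilde F_{m}(z_{2m-1})=0$ and, symmetrically, $\tilde F_{m}(z_{2m})=0$. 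Finally, the kernels in (\ref{eq: res_e_m}) are precisely those whose discrete residues at $z_{2m-1}$ and $z_{2m}$ match those of $F_{e_{1}\dots e_{k}}^{(ij)}(a,\cdot)$ along its branch cut near $e_{m}$ (which may be routed through $z_{2m-1},z_{2m}$, where the corner-edges crossing $e_{m}$ have been deleted); once these residues are cancelled, $\tilde F_{m}$ extends s-holomorphically across the deleted edges, i.e.\ to an s-holomorphic function on $\Cgr(\T^{\delta})$ in a neighbourhood of $e_{m}$. Alternatively, and more elementarily, one argues as in the proof that $F^{(00)}$ is constant: $\tilde F_{m}$ is s-holomorphic at every corner-edge near $e_{m}$ except possibly the two deleted ones, and summing the discrete $\bar{\partial}$-relations (\ref{eq: discr_d_bar}) over a checkerboard class of corners near $e_{m}$, with the prescribed values $\tilde F_{m}(z_{2m-1})=\tilde F_{m}(z_{2m})=0$, forces the remaining relations. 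The expected main obstacle is the fusion identity together with the bookkeeping of signs and square-root branches ($\eta$-branches, the $\pm$ in $\dzmone_{b}(b^{\pm})$, the rotation of $z^{\bullet}-z^{\circ}$ across $e_{m}$, the anti-symmetrisation $\mu^{(ij)}$): it is the interplay of all of these, with the explicit factor $-\i\sqrt{2}$ of (\ref{eq: F_ij_at_LR}), that produces exactly the coefficients $\tfrac{\delta}{\sqrt{2}}\i\bar\eta_{z_{2m-1}}$ and $\tfrac{\delta}{\sqrt{2}}\i\bar\eta_{z_{2m}}$ appearing in (\ref{eq: res_e_m}).
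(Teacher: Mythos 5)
Your proposal is correct and follows essentially the same route as the paper: near $a$ one checks $\tilde F_a(a^{\pm})=0$ from (\ref{eq: F_ij_at_sing}) and (\ref{eq: dzmone_res}) and glues the cut, and near $e_m$ one uses the order--disorder representation of $\en_{e_m}$ to obtain exactly the fusion identity $F_{e_1\dots e_k}(a,z_{2m-1})=\tfrac{\i}{\sqrt2}F_{e_1\dots\hat e_m\dots e_k}(a,z_{2m})$ (and its mirror), after which (\ref{eq: dzmone_res}) and (\ref{eq: P_a_close}) give the cancellation and the re-gluing. The only cosmetic difference is that the paper derives the fusion identity by substituting $\sqrt2\,\en_{e_m}=\sigma_{z_{2m-1}^{\circ}}\sigma_{z_{2m}^{\circ}}\mu_{z_{2m-1}^{\bullet}}\mu_{z_{2m}^{\bullet}}$ directly, whereas you use $\en_{e_m}=\tfrac1{\sqrt2}-\mu_{\gamma_{e_m}}$ plus the s-holomorphicity relation of the $(k-1)$-point observable at $e_m$; both are the same algebra.
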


\begin{proof}
It follows from (\ref{eq: F_ij_at_sing}) and (\ref{eq: dzmone_res})
that $\tilde{F}_{a}(a^{\pm})=0$, and hence these two corners of $\Cgr_{[a]}^{[e_{1}\dots e_{k}]}(\torus)$
can be glued back together, yielding the first claim. 

For the second claim, we use that 
\begin{multline*}
\sigma_{z_{2m-1}^{\circ}}\sigma_{z_{2m}^{\circ}}\mu_{z_{2m-1}^{\bullet}}\mu_{z_{2m}^{\bullet}}=\sigma_{z_{2m-1}^{\circ}}\sigma_{z_{2m}^{\circ}}e^{-2\beta\sigma_{z_{2m-1}^{\circ}}\sigma_{z_{2m}^{\circ}}}\\
=\sigma_{z_{2m-1}^{\circ}}\sigma_{z_{2m}^{\circ}}\cosh(-2\beta)+\sinh(-2\beta)=\sqrt{2}\en_{e_{m}},
\end{multline*}
so that we can replace $\en_{m}$ by $2^{-\frac{1}{2}}\sigma_{z_{2m-1}^{\circ}}\sigma_{z_{2m}^{\circ}}\mu_{z_{2m-1}^{\bullet}}\mu_{z_{2m}^{\bullet}}$
in the definition of $F_{e_{1}\dots e_{k}}(a,\cdot).$ With this substitution,
it becomes a spinor on $\Cgr_{[a]}(\hat{\T}^{\delta})$ ramified at
$z_{2m-1}^{\circ},z_{2m-1}^{\bullet},z_{2m}^{\circ},z_{2m}^{\bullet},$
and their shifts by $\omega_{1}^{\delta},\omega_{2}^{\delta},\omega_{1}^{\delta}+\omega_{2}^{\delta}$.
We modify the graph by introducing branch cuts $[z_{2m-1}^{\circ},z_{2m-1}^{\bullet}]$
and $[z_{2m}^{\circ},z_{2m}^{\bullet}]$ (and their shifts) which
splits the vertices $z_{2m-1},z_{2m}$ into $z_{2m-1}^{\pm}$ and
$z_{2m}^{\pm}$, so that $F_{e_{1}\dots e_{k}}(a,\cdot)$ becomes
a function on the resulting graph obeying the property $F_{e_{1}\dots e_{k}}(a,z_{p}^{+})=-F_{e_{1}\dots e_{k}}(a,z_{p}^{-})$, $p=2m-1,2m$.
More concretely, 
\begin{multline*}
F_{e_{1}\dots e_{k}}(a,z_{2m-1}^{+})=\frac{1}{\sqrt{2}}\eta_{z_{2m-1}^{+}}\E\left[\sigma_{z_{2m}^{\circ}}\mu_{z_{2m}^{\bullet}}\sigma_{a^{\bullet}}\mu_{a^{\bullet}}\prod_{\substack{i=1\\
i\neq m
}
}^{k}\en_{e_{i}}\right]\\
=\frac{1}{\sqrt{2}}\eta_{z_{2m-1}^{+}}\bar{\eta}_{z_{2m}}F_{e_{1}\dots\hat{e}_{m}\dots e_{k}}^{(ij)}(a,z_{2m});
\end{multline*}
\begin{multline*}
F_{e_{1}\dots e_{k}}(a,z_{2m}^{+})=\frac{1}{\sqrt{2}}\eta_{z_{2m}^{+}}\E\left[\sigma_{z_{2m-1}^{\circ}}\mu_{z_{2m-1}^{\bullet}}\sigma_{a^{\bullet}}\mu_{a^{\bullet}}\prod_{\substack{i=1\\
i\neq m
}
}^{k}\en_{e_{i}}\right]\\
=\frac{1}{\sqrt{2}}\eta_{z_{2m-1}^{+}}\bar{\eta}_{z_{2m}}F_{e_{1}\dots\hat{e}_{m}\dots e_{k}}^{(ij)}(a,z_{2m-1}).
\end{multline*}
Here the signs are related in such a way that $\eta_{z_{2m-1}^{+}}\bar{\eta}_{z_{2m}}=\eta_{z_{2m-1}^{+}}\bar{\eta}_{z_{2m}}=\i$
if $z_{2m-1}^{+}$ and $z_{2m}^{+}$ are on the \emph{outer} side
from the edge $e_{m}.$ Therefore, taking into account that $\dzmone_{z_{2m-1}}^{\delta}(z_{2m})=0$
and $\dzmone_{z_{2m}}^{\delta}(z_{2m-1})=0,$ we see that $\tilde{F}_{m}(z_{2m-1}^{\pm})=\tilde{F}_{m}(z_{2m}^{\pm})=0$,
so that $z_{2m-1}^{+}$ and $z_{2m-1}^{-}$ can be glued back together,
and similarly for $z_{2m}^{\pm}.$ These results hold with the same
proof for shifts of $e_{m}$ by $\omega_{1}^{\delta},\omega_{2}^{\delta},\omega_{1}^{\delta}+\omega_{2}^{\delta}$
and hence extend to $F^{(ij)}(a,z^{\pm})$ by linearity, thus proving
the claim.
\end{proof}
\begin{rem}
For $(ij)\neq(00),$ assuming $\E\mu^{(ij)}$ are known, the formulae
(\ref{eq: res_a}–\ref{eq: res_e_m}) identify the functions $F_{e_{1}\dots e_{k}}^{(ij)}$
uniquely by recursion. They \emph{do not} identify $F_{e_{1}\dots e_{k}}^{(00)}$ uniquely,
because of a possibility of adding a ``constant'' s-holomorphic
function $\proj{\eta_{z}}c,$ $c\in\C.$ However, if $k\geq1$, (\ref{eq: values_at_e_i})
removes this degree of freedom. We were unable to identify this
constant and its asymptotics for $k=0$ based on discrete holomorphicity
considerations only, hence the input from Corollary \ref{THEOREM1}
is needed to start the induction. For the energy \emph{difference}
of Theorem \ref{thm: difference}, the value of this is constant does
not matter as it cancels out, since $\E\left[\mu^{(00)}\en_{V}\right]=\E\left[\mu^{(00)}\en_{H}\right]$
by (\ref{eq: F_ij_at_LR}, \ref{eq: F^00 constant}).
\end{rem}

We conclude this section by identifying $\E\mu^{(ij)}$ with Kac-Ward
determinants from Section \ref{sec: Kac-Ward}, and deducing their
asymptotics:
\begin{lem}
\label{lem: pf_ratios}We have
\begin{align*}
\E(\mu^{(ij)})=\frac{Z^{(ij)}}{Z^{(01)}+Z^{(10)}+Z^{(11)}},
\end{align*}
where $Z^{(ij)}$ is defined in (\ref{eq: cimasoni}). Therefore,
we have, as $\delta\to0,$ 
\begin{gather}
\E\mu^{(10)}\to\En{}{10}:=\frac{\PartFun{10}}{\PFTotal}=\frac{|\theta_{4}|}{|\theta_{2}|+|\theta_{3}|+|\theta_{4}|},\label{eq: En_1}\\
\E\mu^{(01)}\to\En{}{01}:=\frac{\PartFun{01}}{\PFTotal}=\frac{|\theta_{2}|}{|\theta_{2}|+|\theta_{3}|+|\theta_{4}|},\label{eq: En_2}\\
\E\mu^{(11)}\to\En{}{11}:=\frac{\PartFun{11}}{\PFTotal}=\frac{|\theta_{3}|}{|\theta_{2}|+|\theta_{3}|+|\theta_{4}|}.\label{eq: En_3}
\end{gather}
\end{lem}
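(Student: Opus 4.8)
The plan is to reduce $\E\mu^{(ij)}$, through the high-temperature (Van der Waerden) expansion, to a signed sum over even subgraphs of exactly the type appearing in Cimasoni's formula (\ref{eq: cimasoni}), and then to feed in the asymptotics of the discrete Laplacian determinants already established in Section~\ref{SECTIONTHESUMOFTHEHORIZONTALANDVERTICALEDGEENERGYDENSITIES}.

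First I would treat the disorder loops in the high-temperature expansion. Since $\mu_{\gamma_{pq}}$ amounts to flipping the sign of the coupling on every edge crossing $\gamma_{pq}$, writing $e^{\pm\beta\sigma_{x}\sigma_{y}}=\cosh\beta\,(1\pm\alpha\sigma_{x}\sigma_{y})$ and expanding as in the high-temperature expansion leading to (\ref{HIGHTEMPERATUREEXPANSIONREDUCED}), the sum over $\sigma$ annihilates all but even subgraphs, and each surviving $\xi$ picks up one extra factor $-1$ per edge crossing $\gamma_{pq}$, giving
\[
\E\mu_{\gamma_{pq}}=\frac{1}{\ParFunIsiRed}\sum_{\xi\in\EveSub{\torus}}(-1)^{|\xi\cap\gamma_{pq}|}\alpha^{|\xi|},
\]
where $|\xi\cap\gamma_{pq}|$ counts the edges of $\xi$ crossing $\gamma_{pq}$. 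Modulo $2$ this is the homological intersection number of $[\xi]\in H_{1}(\torus,\Z_{2})$ with $[\gamma_{pq}]\in H_{1}(\Tdual,\Z_{2})$; since $\gamma_{pq}$ lifts to a path to $z+p\omega_{1}^{\delta}+q\omega_{2}^{\delta}$, it equals $p\,\varphi_{10}(\xi)+q\,\varphi_{01}(\xi)\bmod 2$ in the notation of (\ref{eq:def_phi}). (That $\E\mu_{\gamma}$ depends only on the homology class of $\gamma$ is also a consequence of the double-cover lemma above, applied with no marked spins.)

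Next I would assemble the antisymmetrization. Substituting the above into the definition (\ref{eq: def_mu^ij}) of $\mu^{(ij)}$ and exchanging the order of summation, the inner sum over $(p,q)\in\{0,1\}^{2}$ is a $\Z_{2}$-Gauss sum; using the elementary identity $\sum_{p,q\in\{0,1\}}(-1)^{\alpha p+\beta q+pq}=2(-1)^{\alpha\beta}$ it collapses to $2\,(-1)^{(1-i+\varphi_{10}(\xi))(1-j+\varphi_{01}(\xi))}$. Expanding this product modulo $2$ and matching with the quadratic-form identity (\ref{eq: qf}), the $\xi$-dependent part of the exponent is, up to the $\xi$-independent term $(1-i)(1-j)$, precisely $q_{ij}(\xi)$. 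Keeping track of the homology conventions relating $q_{ij}$ to $\varphi_{10},\varphi_{01}$ — in particular which spin structure is paired with which disorder loop $\gamma_{pq}$ — is the one genuinely delicate point, and I expect it to be the main obstacle; everything else is routine. With $Z^{(ij)}$ as in (\ref{eq: cimasoni}) this yields
\[
\E\mu^{(ij)}=\frac{(-1)^{(1-i)(1-j)}}{2\,\ParFunIsiRed}\,Z^{(ij)}.
\]
By Lemma~\ref{Lemma: sum_qf}, $2\,\ParFunIsiRed=-Z^{(00)}+Z^{(01)}+Z^{(10)}+Z^{(11)}$, and at the critical temperature $Z^{(00)}=\sqrt{\det\KW^{00}}=0$ (as shown in the proof of Lemma~\ref{lem: B_ij}); since also $(-1)^{(1-i)(1-j)}=1$ unless $(ij)=(00)$, in which case both sides vanish, this gives the first displayed identity of the lemma.

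For the asymptotics I would use Theorem~\ref{thm: det_KW} together with Proposition~\ref{prop: KW_to_Laplacian}, which give at $\alpha=\alpha_{c}$ the relation $Z^{(ij)}=\sqrt{\det\KW^{ij}}=(2\alpha_{c}^{2})^{|\torus|/2}\sqrt{\det\Delta^{ij}}$ for $(ij)\neq(00)$, hence $\E\mu^{(ij)}=\frac{\sqrt{\det\Delta^{ij}}}{\sqrt{\det\Delta^{01}}+\sqrt{\det\Delta^{10}}+\sqrt{\det\Delta^{11}}}$. Feeding in the reduction identities (\ref{eq: det_1_reduce})--(\ref{eq: det_3_reduce}) and Kronecker's formula (\ref{eq: DetLapKronecker}), exactly as in the proof of Corollary~\ref{THEOREM1}, one gets $\det\Delta^{ij}_{\omega_{1}^{\delta},\omega_{2}^{\delta}}=e^{C|\torus|+\oo{1}}\,|\theta_{\bullet}(\tau)|^{2}/|\eta(\tau)|^{2}$ with $\theta_{\bullet}=\theta_{4},\theta_{2},\theta_{3}$ for $(ij)=(10),(01),(11)$; taking square roots and forming the ratio, the common factors $e^{C|\torus|/2}$ and $|\eta(\tau)|^{-1}$ cancel, leaving precisely (\ref{eq: En_1})--(\ref{eq: En_3}).
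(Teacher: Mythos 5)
Your proposal is correct and follows essentially the same route as the paper's own proof: the high-temperature expansion of $\E\mu_{pq}$ as $(-1)^{\varphi_{pq}(\xi)}$-weighted sums over even subgraphs, the $\Z_{2}$-Gauss sum collapsing the antisymmetrization to $2(-1)^{(1-i)(1-j)}(-1)^{q_{ij}(\xi)}$ via (\ref{eq: qf}), the identity $2\ParFunIsiRed=Z^{(01)}+Z^{(10)}+Z^{(11)}$ at criticality (since $Z^{(00)}=0$), and then the Laplacian-determinant asymptotics already extracted in the proof of Corollary \ref{THEOREM1}. The index bookkeeping you single out as delicate is handled with exactly the same conventions as in the paper (same pairing of $(10),(01),(11)$ with $\theta_{4},\theta_{2},\theta_{3}$), so there is nothing further to fix.
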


\begin{proof}
We prove the identities using high-temperature expansion; since $\T^{\delta}$
is self-dual, we can freely pass between primal and dual lattice.
For $x,y\in(\T^{\delta})^{\star}$, $x\sim y$, recall the notation
(\ref{eq:def_phi}). The high-temperature expansion reads
\begin{multline*}
\E\mu_{pq}=\frac{\sum_{\sigma}\exp\left(\beta\sum_{x\sim y}(-1)^{\varphi_{pq}(xy)}\sigma_{x}\sigma_{y}\right)}{\sum_{\sigma}\exp\left(\beta\sum_{x\sim y}\sigma_{x}\sigma_{y}\right)}\\
=\frac{\sum_{\sigma}\prod_{x\sim y}\left(1+(-1)^{\varphi_{pq}(xy)}\sigma_{x}\sigma_{y}\alpha\right)}{\sum_{\sigma}\prod_{x\sim y}\left(1+\sigma_{x}\sigma_{y}\alpha\right)}=\frac{\sum_{\xi\in\EveSub{\torus}}(-1)^{\varphi_{pq}(\xi)}\alpha^{|\xi|}}{\sum_{\xi\in\EveSub{\torus}}\alpha^{|\xi|}}.
\end{multline*}
To compute $\E\mu^{(ij)},$ we recall that $\varphi_{pq}(\xi)=p\varphi_{10}(\xi)+q\varphi_{01}(\xi)\,\mod2$
and note that 
\begin{multline*}
\sum_{p,q}(-1)^{(1-i)p+(1-j)q+pq+p\varphi_{10}(\xi)+q\varphi_{01}(\xi)}\\
=(-1)^{q_{ij}(\xi)+(1-i)(1-j)}\sum_{p,q}(-1)^{(p+\varphi_{01}(\xi)+1-j)(q+\varphi_{10}(\xi)+1-i)}\\
=(-1)^{q_{ij}(\xi)+(1-i)(1-j)}\sum_{p,q}(-1)^{pq}=2(-1)^{q_{ij}(\xi)}(-1)^{(1-i)(1-j)}.
\end{multline*}
Therefore, plugging the above formula for $\E\mu_{pq}$ into (\ref{eq: def_mu^ij})
and taking into account (\ref{eq: lem_qf}) yields
\[
\E\mu^{(ij)}=\frac{1}{4}(-1)^{(1-i)(1-j)}\frac{2\sum_{\xi\in\EveSub{\torus}}(-1)^{q_{ij}(\xi)}\alpha^{|\xi|}}{\sum_{\xi\in\EveSub{\torus}}\alpha^{|\xi|}}=\frac{1}{4}\frac{2Z^{(ij)}}{\frac{1}{2}\left(Z^{(01)}+Z^{(10)}+Z^{(11)}\right)},
\]
where we have used that $Z^{(00)}=0$ and $(-1)^{(1-i)(1-j)}=1$ unless
$i=j=0$. For the asymptotics, we use $Z^{(ij)}=\sqrt{\KW^{ij}}=2^{-|\T^{\delta}|/2}\alpha_{c}^{-|\torus|}\sqrt{\det\Delta_{\delta}^{ij}}$,
and then use asymptotics of these determinants computed in the course
of the proof of Corollary~\ref{THEOREM1}.
\end{proof}

\section{Scaling limits of the fermionic observables}

\label{sec: Scaling-limits }In what follow, we define the continuous
limits of the observable $F_{e_{1}\dots e_{k}}^{(ij)}(a,z).$ These
limits will depend on $\eta_{a},$ i. e., the orientation of the corner
$a$ and the choice of the sign of the square root in (\ref{eq: Dirac}).
Since there are only $8$ options for $\eta_{a},$ we will from now
on assume it fixed. 

It will be convenient to use physics notation for the Pfaffian: for
symbols $\Op_{1},$$\dots,$$\Op_{2N},$ and a label $\diamond$ to
distinguish between different anti-periodicity ``sectors'', if a
$2N\times2N$ antisymmetric matrix $M$ is given whose entries are
denoted by $M_{n,m}=\ccor{\Op_{n}\Op_{m}}^{\diamond},$ we denote
\[
\ccor{\Op_{1}\dots\Op_{2n}}^{\diamond}:=\Pf M=\Pf\ccor{\Op_{n}\Op_{m}}_{1\leq n,m\leq2N}^{\diamond}.
\]

Recall that, given $\omega_{1,2}$, the Weierstrass $\zeta$-function
$\zeta_{\omega_{1},\omega_{2}}$ is the unique odd function that has
a simple pole of residue $1$ at the origin, and such that its derivative
is doubly periodic (in fact, $-\zeta'(z)=\wp(z),$ where $\wp(z)$
is the Weierstrass $\wp$ function). Thus, $\zeta(z)$ is not an elliptic
function, but has periodicity property $\zeta(z+\omega_{12})=\zeta(z)+c_{12}$
for some constants $c_{12}.$ However, a linear combination $\sum\alpha_{i}\zeta(z-\beta_{i})$
is an elliptic function provided that $\sum\alpha_{i}=0.$ We
denote by $\cs_{\omega_{1},\omega_{2}}(z),$ (respectively, $\ns_{\omega_{1},\omega_{2}}(z),$
$\ds_{\omega_{1},\omega_{2}}(z)$) the unique meromorphic function
on $\hat{\T}$ with four simple poles, including one with residue $1$ at the origin, and
satisfying the anti-periodicity relations (\ref{eq: f_ij_periodicity})
below with $(ij)=(01)$ (respectively, $(ij)=(10)$, $(ij)=(11)$). We have $\cs_{\omega_{1},\omega_{2}}(z)=\frac{2K}{\omega_{1}}\cdot\cs\left(\frac{2K}{\omega_{1}}\left(z-a\right),k\right)$
in the notation of \cite[Section 22]{NIST:DLMF}, and similarly for
$\ns,\ds$, where the elliptic modulus $\EllMod$ and the complete
elliptic integral $K$ are given by 
\begin{align*}
\EllMod:=\left(\frac{\JacThe 2{\ModPar}}{\JacThe 3{\ModPar}}\right)^{2},\quad K=\frac{\pi}{2}\theta_{3}^{2}(\tau).
\end{align*}
 
\begin{defn}
\label{def: fmp}Given a continuous torus $\T$ and distinct points
$a,e_{1},\dots,e_{k}\in\T$, for $(ij)\neq(00),$ we define 
\begin{equation}
f_{e_{1}\dots e_{k}}^{(ij)}(a,z)=\begin{cases}
\frac{\PartFun{ij}}{\PFTotal}\overline{\eta}_{a}\i^{k}\ccor{\psi_{e_{1}}\psi_{e_{1}}^{\star}\dots\psi_{e_{k}}\psi_{e_{k}}^{\star}\psi_{z}\psi_{a}}^{(ij)} & k\text{ even},\\
\frac{\PartFun{ij}}{\PFTotal}\eta_{a}\i^{k}\ccor{\psi_{e_{1}}\psi_{e_{1}}^{\star}\dots\psi_{e_{k}}\psi_{e_{k}}^{\star}\psi_{z}\psi_{a}^{\star}}^{(ij)} & k\text{ odd},
\end{cases}\label{eq: f_correlation}
\end{equation}
where $\ccor{\psi_{w}\psi_{\hat{w}}^{\star}}^{(ij)}\equiv0,$ $\ccor{\psi_{w}^{\star}\psi_{\hat{w}}^{\star}}^{(ij)}=\overline{\ccor{\psi_{w}\psi_{\hat{w}}}^{(ij)}}$,
and 
\begin{eqnarray}
\ccor{\psi_{w}\psi_{\hat{w}}}^{(01)} & = & \cs_{\omega_{1},\omega_{2}}(w-\hat{w}),\label{eq: ferm_ellipt}\\
\ccor{\psi_{w}\psi_{\hat{w}}}^{(10)} & = & \ns_{\omega_{1},\omega_{2}}(w-\hat{w}),\label{eq: ferm_ellipt_2}\\
\ccor{\psi_{w}\psi_{\hat{w}}}^{(11)} & = & \ds_{\omega_{1},\omega_{2}}(w-\hat{w}).\label{eq: ferm_ellipt_3}
\end{eqnarray}
\end{defn}

\begin{defn}
\label{def: fpm_00}Given data as above, we also define 
\begin{equation}
f_{e_{1}\dots e_{k}}^{(00)}(a,z)=\begin{cases}
\bar{\eta}_{a}\i^{k}\ccor{\psi_{e_{1}}\psi_{e_{1}}^{\star}\dots\psi_{e_{k}}\psi_{e_{k}}^{\star}\psi_{z}\psi_{a}}^{(00)}, & k\text{ odd,}\\
\eta_{a}\i^{k}\ccor{\psi_{e_{1}}\psi_{e_{1}}^{\star}\dots\psi_{e_{k}}\psi_{e_{k}}^{\star}\psi_{z}\psi_{a}^{\star}}^{(00)}, & k\ \text{even,}
\end{cases}\label{eq: f_corr_oo}
\end{equation}
where $\ccor{\psi_{e_{n}}^{\star}\psi_{e_{m}}^{\star}}^{(00)}=\overline{\zeta(e_{n}-e_{m})},$
\begin{equation}
\ccor{\psi_{e_{n}}\psi_{e_{m}}^{\star}}^{(00)}\equiv-\pi\i\frac{(\im\tau)^{\frac{1}{2}}|\theta_{2}\theta_{3}\theta_{4}|}{|\theta_{2}|+|\theta_{3}|+|\theta_{4}|}\cdot\frac{1}{|\T|^{\frac{1}{2}}},\label{eq: En^00_e}
\end{equation}
and $\ccor{\psi_{e_{n}}\psi_{e_{m}}}^{(00)}=\zeta(e_{n}-e_{m});$
here $\zeta$ denotes the Weierstrass $\zeta$-function.
\end{defn}

\begin{defn}
\label{def: en}Given data as above, we define for $(ij)\neq(00),$
\begin{gather*}
\En{e_{1}\dots e_{k}}{ij}:=\begin{cases}
\frac{\PartFun{ij}}{\PFTotal}\i^{k}\ccor{\psi_{e_{1}}\psi_{e_{1}}^{\star}\dots\psi_{e_{k}}\psi_{e_{k}}^{\star}}^{(ij)} & k\text{ even,}\\
0 & k\text{ odd},
\end{cases}
\end{gather*}
and 
\[
\En{e_{1}\dots e_{k}}{00}=\begin{cases}
\i^{k}\ccor{\psi_{e_{1}}\psi_{e_{1}}^{\star}\dots\psi_{e_{k}}\psi_{e_{k}}^{\star}}^{(00)}, & k\text{ odd},\\
0 & k\text{ even.}
\end{cases}
\]
\end{defn}
The above definitions extend to $k=0$, setting the empty Pfaffian to $1$.
\begin{prop}
\label{prop: expansions}The quantity $\fmp{ij}{e_{1}\dots e_{k}}{a,\cdot}$
is a meromorphic function on $\hat{\T}$ satisfying, for $p,q\in\{0,1\},$
\begin{equation}
\fmp{ij}{e_{1}\dots e_{k}}{a,z+p\omega_{1}+q\omega_{2}}=(-1)^{ip+jq}\fmp{ij}{e_{1}\dots e_{k}}{a,z}\label{eq: f_ij_periodicity}
\end{equation}
Its poles are simple and located at $e_{1},\dots,e_{k},a$, and 

\begin{eqnarray}
\fmp{ij}{e_{1}\dots e_{k}}{a,\cdot} & = & \i\frac{\overline{f_{e_{1}\dots\hat{e}_{m}\dots e_{k}}^{(ij)}(a,e_{m})}}{z-e_{m}}+\oo 1,\quad z\to e_{m},\label{eq: f_exp_em}\\
\fmp{ij}{e_{1}\dots e_{k}}{a,z} & = & \frac{\overline{\eta}_{a}\En{e_{1}\dots e_{k}}{ij}}{z-a}-\i\eta_{a}\cdot\En{e_{1}\dots e_{k}a}{ij}+\oo 1,\quad z\to a.\label{eq: f_exp_a}
\end{eqnarray}
\end{prop}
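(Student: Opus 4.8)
The plan is to read every assertion of the proposition off the Pfaffian structure of Definitions~\ref{def: fmp}--\ref{def: en}, using only three elementary facts about the sector-$(ij)$ two-point function $\ccor{\psi_{w}\psi_{\hat{w}}}^{(ij)}$, which I abbreviate $\kappa_{ij}(w-\hat{w})$ (so $\kappa_{01}=\cs_{\omega_{1},\omega_{2}}$, $\kappa_{10}=\ns_{\omega_{1},\omega_{2}}$, $\kappa_{11}=\ds_{\omega_{1},\omega_{2}}$, $\kappa_{00}=\zeta$): each $\kappa_{ij}$ is odd, has a simple pole of residue $1$ at $0$ (so $\kappa_{ij}(w)=w^{-1}+O(w)$), and --- for $(ij)\neq(00)$ --- satisfies the anti-periodicity (\ref{eq: f_ij_periodicity}). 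I also use that the relations imposed in Definitions~\ref{def: fmp}--\ref{def: fpm_00} make complex conjugation of any of our Pfaffians amount to interchanging every $\psi$ with the corresponding $\psi^{\star}$, and that $\ccor{\psi_{w}\psi_{\hat{w}}^{\star}}^{(ij)}$ is $0$ for $(ij)\neq(00)$ and a (purely imaginary) constant for $(ij)=(00)$.

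First I would fix, once and for all, the sign conventions for Pfaffians and for reorderings of symbols, and then compute the $z$-dependence of $\fmp{ij}{e_{1}\dots e_{k}}{a,z}$ by a Laplace expansion of the defining Pfaffian along the row and column of $\psi_{z}$. The sub-Pfaffians appearing no longer involve $\psi_{z}$, hence are $z$-independent; and in the $\psi_{z}$-row only the entries pairing $\psi_{z}$ with an unstarred symbol depend on $z$, and then as $\kappa_{ij}(z-e_{m})$ or, when $\psi_{a}$ (rather than $\psi_{a}^{\star}$) occurs, $\kappa_{ij}(z-a)$. This gives
\[
\fmp{ij}{e_{1}\dots e_{k}}{a,z}=\sum_{m=1}^{k}A_{m}\,\kappa_{ij}(z-e_{m})+A_{a}\,\kappa_{ij}(z-a)+A_{0},
\]
where the coefficients are $z$-independent and equal, up to the prefactor of (\ref{eq: f_correlation})/(\ref{eq: f_corr_oo}) and a reordering sign, Pfaffians of the propagator matrix with two rows/columns deleted (with $A_{a}=0$ when the symbol at $a$ is $\psi_{a}^{\star}$, and $A_{0}=0$ unless $(ij)=(00)$). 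From this the meromorphy on $\hat{\T}$, the simplicity of the poles, and the fact that they lie among $e_{1},\dots,e_{k},a$ are immediate, as is the anti-periodicity (\ref{eq: f_ij_periodicity}) for $(ij)\neq(00)$, since each $\kappa_{ij}(z-\cdot)$ enjoys it. For $(ij)=(00)$, where $\zeta$ is only quasi-periodic, the periodicity (\ref{eq: f_ij_periodicity}) requires in addition that $\sum_{m}A_{m}+A_{a}=0$, so that the quasi-periods of $\zeta$ cancel; this is the sum of the residues of $\fmp{00}{e_{1}\dots e_{k}}{a,\cdot}$, a combinatorial identity on Pfaffians of the $(00)$-propagator, which I would establish by induction on $k$, the base case $k=0$ --- where $\fmp{00}{}{a,z}=\eta_{a}\ccor{\psi_{z}\psi_{a}^{\star}}^{(00)}$ is constant --- being trivial.

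Next I would match the Laurent coefficients. The residue at $e_{m}$ is $A_{m}$, i.e.\ the prefactor times a sign times the sub-Pfaffian obtained by deleting the rows and columns of $\psi_{z}$ and $\psi_{e_{m}}$ (which still carries $\psi_{e_{m}}^{\star}$). Conjugating this sub-Pfaffian interchanges all $\psi$'s with $\psi^{\star}$'s; after reordering the symbols into the canonical pattern and collecting the prefactors $\i^{k}$ against $\overline{\i^{\,k-1}}$, it becomes exactly $\i\,\overline{\fmp{ij}{e_{1}\dots\hat{e}_{m}\dots e_{k}}{a,e_{m}}}$, which is (\ref{eq: f_exp_em}); the same conjugation explains why $\psi_{a}$ is replaced by $\psi_{a}^{\star}$ when passing between the even and odd cases of Definitions~\ref{def: fmp}--\ref{def: fpm_00}. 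At $z=a$ there are two cases. If the symbol at $a$ is $\psi_{a}$, so $a$ is a genuine pole, then the residue there is $A_{a}=\overline{\eta}_{a}\En{e_{1}\dots e_{k}}{ij}$ by Definition~\ref{def: en}, and the constant term at $z=a$ vanishes. The latter is clean: that constant term equals $\Pf M'$, where $M'$ is the propagator matrix with the symbol $\psi_{z}$ replaced by a second copy of $\psi_{a}$ and the resulting diagonal-type entry set to $0$ --- expanding $\Pf M'$ along that row reproduces exactly the $z$-independent term of the Laurent expansion at $z=a$ (the polar term drops because $\kappa_{ij}$ has no constant term), while $M'$ has two identical rows, so $\Pf M'=0$. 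This agrees with $-\i\eta_{a}\En{e_{1}\dots e_{k}a}{ij}=0$, the number of marked points being odd in this case. If instead the symbol at $a$ is $\psi_{a}^{\star}$, then $\ccor{\psi_{z}\psi_{a}^{\star}}^{(ij)}$ is regular at $z=a$, there is no pole there, and one may substitute $z=a$ directly; comparing the prefactors $\i^{k}$ and $\i^{\,k+1}$ gives $\fmp{ij}{e_{1}\dots e_{k}}{a,a}=-\i\eta_{a}\En{e_{1}\dots e_{k}a}{ij}$. In both cases this is (\ref{eq: f_exp_a}).

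All of this is elementary; the genuine work --- and the only real obstacle --- is the Pfaffian sign and normalization bookkeeping: tracking the reordering signs so that the factor $\i$ and the complex conjugate in (\ref{eq: f_exp_em}) come out precisely right, and establishing the vanishing of the sum of residues $\sum_{m}A_{m}+A_{a}$ in the doubly periodic sector. I would organize the write-up around the two ``local'' statements --- the residue recursion (\ref{eq: f_exp_em}) and the expansion at $z=a$ --- proving them first as purely combinatorial facts about Pfaffians of the propagator matrices, and deducing periodicity and the absence of spurious poles as consequences; doing the cases $k=0,1$ explicitly at the start both pins down the sign conventions and seeds the induction for the $(00)$ identity.
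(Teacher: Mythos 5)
Most of your plan coincides with the paper's actual proof: the Laplace expansion of the defining Pfaffian along the $\psi_{z}$ row (with all $z$-dependence confined to the entries $\ccor{\psi_{z}\psi_{w}}$), the identification of the residue at $e_{m}$ with $\i\,\overline{f^{(ij)}_{e_{1}\dots\hat{e}_{m}\dots e_{k}}(a,e_{m})}$ via the conjugation-swaps-$\psi\leftrightarrow\psi^{\star}$ convention, the two cases at $z=a$ (pole with residue $\bar{\eta}_{a}\En{e_{1}\dots e_{k}}{ij}$ versus direct evaluation giving $-\i\eta_{a}\En{e_{1}\dots e_{k}a}{ij}$, with the vanishing constant term in the pole case seen from a Pfaffian with a repeated row), and the periodicity for $(ij)\neq(00)$ read off from the anti-periodicity of $\cs,\ns,\ds$ — all of this is the paper's argument in substance, and your bookkeeping claims are consistent with Definitions \ref{def: fmp}–\ref{def: en}.

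The gap is exactly where you locate ``the only real obstacle'': the $(00)$ sector. Asserting that $\sum_{m}A_{m}+A_{a}=0$ is ``a combinatorial identity which I would establish by induction on $k$'' is not a proof; the base case $k=0$ carries no inductive mechanism, and the identity at level $k$ (relating values of the level-$(k-1)$ functions at $e_{1},\dots,e_{k}$ to the $k$-point Pfaffian) does not follow in any evident way from the level-$(k-1)$ identity. This is precisely the point where the paper takes a different route: it does \emph{not} prove the residue identity algebraically, but deduces it from the existence of \emph{some} solution of (\ref{eq: f_ij_periodicity}–\ref{eq: f_exp_a}) with the constant (\ref{eq: En^00_e}) — namely the scaling limit of the discrete observables constructed in the proof of Theorem \ref{thm: obs_conv} — whose residues agree with those of $f^{(00)}_{e_{1}\dots e_{k}}(a,\cdot)$ by induction and whose periodicity forces the residues to sum to zero; the induction is thus intertwined with the discrete convergence theorem. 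Your write-up must either import that analytic input or prove the identity honestly. The latter can in fact be done, but by a direct linear-algebra argument rather than your induction: the sum of residues equals, up to sign, $\Pf\tilde{M}$, where $\tilde{M}$ is the propagator matrix with the $\psi_{z}$-row entries replaced by $1$ against every unstarred symbol and $0$ against every starred one. In both parities the starred symbols are odd in number, so the antisymmetric block of entries $\overline{\zeta(\cdot-\cdot)}$ among them has a nonzero kernel vector $\gamma$; extending $\gamma$ by zero on the unstarred symbols and by $K\sum\gamma$ on $\psi_{z}$, where $K$ is the constant (\ref{eq: En^00_e}), gives a kernel vector of $\tilde{M}$ — this uses crucially that the mixed propagator is one and the same constant for all pairs. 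Hence $\tilde{M}$ is singular, $\Pf\tilde{M}=0$, and the residues sum to zero. Without this (or the paper's analytic substitute), the periodicity claim of the proposition in the doubly periodic sector — its only delicate assertion — remains unproven.
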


\begin{rem}
\label{rem: uniqueness}The equations (\ref{eq: f_ij_periodicity}–\ref{eq: f_exp_a})
give an \emph{overdetermined} set of conditions that identify $\fmp{ij}{e_{1}\dots e_{k}}{a,z}$
and $\En{e_{1}\dots e_{k}}{ij}$ uniquely by induction, given $\En{}{ij}$
for $(ij)\neq(00)$ and a constant $\En e{00}.$ Indeed, to see that
$f_{e_{1}\dots e_{k}}^{(ij)}(a,\cdot)$ is uniquely determined, note
that if two functions $f,g$ both satisfy (\ref{eq: f_ij_periodicity}–\ref{eq: f_exp_a}),
then, by induction hypothesis, their difference is holomorphic everywhere
on $\hat{\T}$ and vanishes, say, at $e_{1}$, hence it is zero. In
its turn, $f_{e_{1}\dots e_{k}}^{(ij)}$ uniquely determines $\En{e_{1}\dots e_{k+1}}{ij}$
by (\ref{eq: f_exp_a}). For $(ij)\neq(00),$ the vanishing of constant
terms in (\ref{eq: f_exp_em}) is not needed to prove uniqueness;
for $(ij)=(00),$ we need it just for one $m$. We note that $f^{(00)}(a,z)=-\i\eta_{a}\En e{00}$
(it is an elliptic function with at most one simple pole, hence a
constant), and, for $(i,j)\neq(0,0)$, 
\[
f^{(ij)}=\bar{\eta}_{a}\En{}{ij}\ccor{\psi_{z}\psi_{a}}^{(ij)},
\]
 where the latter is given by (\ref{eq: ferm_ellipt}–\ref{eq: ferm_ellipt_3})
\end{rem}

\begin{proof}
In all cases, $\ccor{\psi_{z}\psi_{w}}^{\diamond}=(z-w)^{-1}+\oo 1,$
since it is an odd function of $z-w$ with a simple pole of residue
$1$ at the origin; also, complex conjugating the Pfaffian amounts
to replacing $\psi\longleftrightarrow\psi^{\star}.$ Expanding the
Pfaffian and using the convention $\ccor{\psi_{e_{1}}\psi_{e_{1}}}^{\diamond}=0,$ we obtain as $z\to e_{1}$ 
\begin{multline*}
\ccor{\psi_{e_{1}}\psi_{e_{1}}^{\star}\dots\psi_{e_{k}}\psi_{e_{k}}^{\star}\psi_{z}\psi_{a}}^{\diamond}\\
=\ccor{\psi_{z}\psi_{e_{1}}}^{\diamond}\ccor{\psi_{e_{1}}^{\star}\dots\psi_{e_{k}}\psi_{e_{k}}^{\star}\psi_{a}}^{\diamond}+\ccor{\psi_{e_{1}}\psi_{e_{1}}^{\star}\dots\psi_{e_{k-1}}\psi_{e_{k-1}}^{\star}\psi_{e_{1}}\psi_{a}}^{\diamond}+\oo 1\\
=(z-e_{1})^{-1}\ccor{\psi_{e_{1}}^{\star}\dots\psi_{e_{k}}\psi_{e_{k}}^{\star}\psi_{a}}^{\diamond}+\oo 1\\
=(z-e_{1})^{-1}(-1)^{k-1}\overline{\ccor{\psi_{e_{1}}\dots\psi_{e_{k}}\psi_{e_{k}}^{\star}\psi_{a}^{\star}}^{\diamond}}+\oo 1,
\end{multline*}
and similarly for $m=2,\dots,k$ and with $\psi_{a}$ replaced with
$\psi_{a}^{\star}.$ This proves (\ref{eq: f_exp_em}). Similarly,
as $z\to a$, we have 
\begin{eqnarray*}
\ccor{\psi_{e_{1}}\psi_{e_{1}}^{\star}\dots\psi_{e_{k}}\psi_{e_{k}}^{\star}\psi_{z}\psi_{a}}^{\diamond} & = & \ccor{\psi_{z}\psi_{a}}\ccor{\psi_{e_{1}}^{\star}\dots\psi_{e_{k}}\psi_{e_{k}}^{\star}}^{\diamond}+\oo 1,\\
\ccor{\psi_{e_{1}}\psi_{e_{1}}^{\star}\dots\psi_{e_{k}}\psi_{e_{k}}^{\star}\psi_{z}\psi_{a}^{\star}}^{\diamond} & = & \ccor{\psi_{e_{1}}\psi_{e_{1}}^{\star}\dots\psi_{e_{k}}\psi_{e_{k}}^{\star}\psi_{a}\psi_{a}^{\star}}^{\diamond}+\oo 1.
\end{eqnarray*}
The first identity proves (\ref{eq: f_exp_a}) for $(ij)\neq(00),$
$k$ even and for $(ij)=0,$ $k$ odd, while the second identity proves
it for $(ij)\neq00,$ $k$ odd and for $(ij)=(00),$ $k$ even. 

To prove (\ref{eq: f_ij_periodicity}) for $(ij)\neq(00),$ simply
expand the Pfaffian in $\psi_{z}$ and note that $\ccor{\psi_{z}\psi_{w}}^{(ij)}$
satisfies (\ref{eq: f_ij_periodicity}) for each $w=e_{1},\dots,e_{k},a.$
For $(ij)=(00),$ the Weierstrass $\zeta$-function does not satisfy
(\ref{eq: f_ij_periodicity}); as discussed above, we need to check
that the sum of the residues of $f_{e_{1}\dots e_{k}}^{(00)}(a,\cdot)$
is zero. We claim that this result follows by induction from the existence
of \emph{any} solution to (\ref{eq: f_ij_periodicity}–\ref{eq: f_exp_a})
with $-\i\En e{00}$ given by (\ref{eq: En^00_e}). Indeed, suppose
$\tilde{f}_{e_{1}\dots e_{k}}^{(00)}(a,z)$ is such a solution, and
suppose by induction hypothesis that $f_{e_{1\dots}e_{k-1}}^{(00)}(a,z)=\tilde{f}_{e_{1}\dots e_{k-1}}^{(00)}(a,z)$
for any $e_{1},\dots,e_{k-1},a$. Then, by (\ref{eq: f_exp_em}–\ref{eq: f_exp_a}),
the residues of $f_{e_{1}\dots e_{k}}^{(00)}(a,z)$ and $\tilde{f}_{e_{1}\dots e_{k}}^{(00)}$
are the same; since the latter function satisfies (\ref{eq: f_ij_periodicity}),
their sum vanishes; hence also $f_{e_{1}\dots e_{k}}^{(00)}(a,z)$
satisfies (\ref{eq: f_ij_periodicity}), and we have $f_{e_{1}\dots e_{k}}^{(00)}\equiv\tilde{f}_{e_{1}\dots e_{k}}^{(00)}$,
completing the induction step. The solution to (\ref{eq: f_ij_periodicity}–\ref{eq: f_exp_a})
exists since it is constructed as the limit of discrete observables
in the proof of Theorem \ref{thm: obs_conv} below. 
\end{proof}
At the heart of our analysis is the following convergence result:
\begin{thm}
\label{thm: obs_conv}One has, as $\delta\to0,$
\begin{equation}
\delta^{-(k+1)}F_{e_{1}\dots e_{k}}^{(ij)}(a,z)=\Conk k\cdot\proj{\eta_{z}}{\fmp{ij}{e_{1}\dots e_{k}}{a,z}}+o(1),\label{eq: conv_observable}
\end{equation}
uniformly in $z,e_{1},\dots,e_{k},a$ away from each other, where
$\Conk k=\frac{\sqrt{2}}{\pi^{k+1}}$. Moreover, 
\begin{equation}
\delta^{-k}\E\left[\mu^{(ij)}\prod_{m=1}^{k}\en_{e_{m}}\right]\to\pi^{-k}\En{e_{1}\dots e_{k}}{ij}.\label{eq: conv_energy}
\end{equation}
\end{thm}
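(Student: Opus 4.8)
\emph{Overall strategy.} I would prove (\ref{eq: conv_observable}) by induction on the number $k$ of marked edges, and then read off (\ref{eq: conv_energy}) from it. The inductive engine is the standard Chelkak--Smirnov compactness machinery: granting a priori uniform bounds, the normalised observables $G_{\delta}:=\delta^{-(k+1)}F^{(ij)}_{e_{1}\dots e_{k}}(a,\cdot)$ form a precompact family on compact subsets of $\hat{\T}\setminus\{a,e_{1},\dots,e_{k}\}$; any subsequential limit is holomorphic there, its periodicity comes from (\ref{eq: anti-symmetry}), its poles and residues from Lemma~\ref{lem: fmp_discrete_singularity} combined with the Cauchy-kernel asymptotics (\ref{eq: asymp_zmone}) and the induction hypothesis, and the vanishing (\ref{eq: values_at_e_i}) supplies the otherwise missing normalisation in the $(00)$-sector; by the uniqueness in Remark~\ref{rem: uniqueness} this limit equals $\fmp{ij}{e_{1}\dots e_{k}}{a,\cdot}$, so the whole family converges. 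Tracking the prefactor $\tfrac{\sqrt2}{\pi}$ of (\ref{eq: asymp_zmone}) against the $\tfrac1{\sqrt2}$'s in (\ref{eq: res_a})--(\ref{eq: res_e_m}) gives $\Conk{k}=\Conk{k-1}/\pi$, hence $\Conk{k}=\sqrt2/\pi^{k+1}$ once the base case is settled.

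\emph{Base case $k=0$.} For $(ij)\neq(00)$: by Lemma~\ref{lem: fmp_discrete_singularity} with $k=0$, the function $F^{(ij)}(a,\cdot)-\delta\,\E(\mu^{(ij)})\,P_{a}^{\delta}$ extends to an s-holomorphic function on $\Cgr(\torus)$; since $\Delta^{ij}_{\delta}$ is invertible (Proposition~\ref{prop: KW_to_Laplacian}), equivalently $Z^{(ij)}\neq0$, there is no nonzero globally s-holomorphic function on $\Cgr(\torus)$ in the anti-periodic sector $(ij)$, so any subsequential limit of $\delta^{-1}F^{(ij)}(a,\cdot)$ is the unique meromorphic function on $\hat{\T}$ with periodicity (\ref{eq: f_ij_periodicity}) and a single simple pole at $a$ of residue $\tfrac{\sqrt2}{\pi}\bar\eta_{a}\lim\E(\mu^{(ij)})$; Lemma~\ref{lem: pf_ratios} identifies this with $\Conk{0}\proj{\eta_{z}}{\fmp{ij}{}{a,\cdot}}$, where $\fmp{ij}{}{a,z}=\bar\eta_{a}\frac{\PartFun{ij}}{\PFTotal}\ccor{\psi_{z}\psi_{a}}^{(ij)}$. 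For $(ij)=(00)$ we already know $F^{(00)}(a,\cdot)=\proj{\eta_{z}}{\i\eta_{a}c_{\delta}}$ for some real $c_{\delta}$, and all that is missing is the asymptotics of $c_{\delta}$. Evaluating at $\anw$ and using (\ref{eq: F_ij_at_LR}) gives $c_{\delta}=-\sqrt2\,\E[\mu^{(00)}\en_{e_{L}}]$; now (\ref{eq: F^00 constant}) forces $\E[\mu^{(00)}\en_{H}]=\E[\mu^{(00)}\en_{V}]$, while the high-temperature expansion gives $\E[\mu^{(00)}\en_{e}]=-\tfrac1{2\sqrt2\,\ParFunIsiRed}\,\DisSpiObsVal{00}{e}$, so combining with Lemma~\ref{lem: B_ij} and (\ref{eq: proof_thm_2}) one obtains $\E[\mu^{(00)}\en_{e_{L}}]=\tfrac12(\E\en_{H}+\E\en_{V})$, whose scaling limit is exactly what Corollary~\ref{THEOREM1} provides. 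A direct check shows this matches $\Conk{0}\proj{\eta_{z}}{-\i\eta_{a}\En{e}{00}}$ with $\En{e}{00}$ as in Definition~\ref{def: en}.

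\emph{Inductive step.} Assume (\ref{eq: conv_observable}), and hence also (\ref{eq: conv_energy}), for all configurations with at most $k-1$ marked edges. From $G_{\delta}$ subtract, in a fixed neighbourhood of each of $a,e_{1},\dots,e_{k}$, the rescaled discrete Cauchy kernels appearing in (\ref{eq: res_a})--(\ref{eq: res_e_m}); their coefficients are $\delta^{-(k-1)}$ times $(k-1)$-edge observables evaluated at $a^{\pm}$, respectively at corners converging to some $e_{m}$, and are therefore uniformly bounded and convergent by the induction hypothesis. Granting the a priori bound discussed below, $G_{\delta}$ is then precompact away from $\{a,e_{1},\dots,e_{k}\}$, and any subsequential limit $f$ is meromorphic on $\hat{\T}$ with at most simple poles at these points; (\ref{eq: anti-symmetry}) gives the periodicity (\ref{eq: f_ij_periodicity}); (\ref{eq: res_a})--(\ref{eq: res_e_m}) together with (\ref{eq: asymp_zmone}) and the induction hypothesis give precisely the singular expansions (\ref{eq: f_exp_em})--(\ref{eq: f_exp_a}) (and this is where $\Conk{k}=\Conk{k-1}/\pi$ is read off); and (\ref{eq: values_at_e_i}) gives $f(a)=f(e_{m})=0$. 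By Remark~\ref{rem: uniqueness} these data identify $f=\fmp{ij}{e_{1}\dots e_{k}}{a,\cdot}$, so the whole family converges, uniformly in the marked points because the a priori estimates are uniform. Finally, (\ref{eq: conv_energy}) follows by comparing the coefficient of $P_{a}^{\delta}$ in $F^{(ij)}_{e_{1}\dots e_{k}}(a,\cdot)$, which equals $\delta\,\E[\mu^{(ij)}\prod_{m}\en_{e_{m}}]$ by construction, with the residue $\bar\eta_{a}\En{e_{1}\dots e_{k}}{ij}$ of the limit at $a$: one finds $\delta^{-k}\E[\mu^{(ij)}\prod_{m}\en_{e_{m}}]\to\tfrac{\pi}{\sqrt2}\Conk{k}\,\En{e_{1}\dots e_{k}}{ij}=\pi^{-k}\En{e_{1}\dots e_{k}}{ij}$.

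\emph{Main obstacle.} The crux is the a priori uniform bound on $G_{\delta}$ on compact subsets of $\hat{\T}\setminus\{a,e_{1},\dots,e_{k}\}$. I would obtain it by the usual route of passing to the primitive $H_{\delta}$ of $(\delta^{-(k+1)}F^{(ij)}_{e_{1}\dots e_{k}})^{2}\,\mathrm{d}z$: on the torus $H_{\delta}$ is at worst additively multivalued, with increments around the two generating cycles controlled --- by (sums of squares of) the residues and the behaviour of the subtracted kernels --- through the induction hypothesis, and its two components are respectively discrete sub- and superharmonic, so a maximum principle bounds $H_{\delta}$, and hence $F^{(ij)}_{e_{1}\dots e_{k}}$, in the bulk in terms of its behaviour near $\{a,e_{1},\dots,e_{k}\}$, where it is explicit (a uniformly bounded multiple of a discrete Cauchy kernel plus an s-holomorphic remainder vanishing at the marked point, by Lemma~\ref{lem: fmp_discrete_singularity}). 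Equivalently, one may argue by contradiction: were $\sup$-norm of $G_{\delta}$ on a compact set away from the singularities to blow up, rescaling and passing to the limit would produce a nonzero bounded s-holomorphic function on $\hat{\T}$ in the sector $(ij)$ --- impossible for $(ij)\neq(00)$ by the rigidity used in the base case, and for $(ij)=(00)$ a nonzero ``constant'' $\proj{\eta_{z}}{c}$, which contradicts (\ref{eq: values_at_e_i}) when $k\geq1$, the case $k=0$ being handled by the explicit base case. Making either argument uniform in all of $z,e_{1},\dots,e_{k},a$ simultaneously is where the genuine work lies.
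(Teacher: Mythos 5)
Your overall scheme is the same as the paper's (induction, compactness, identification of subsequential limits via Remark \ref{rem: uniqueness}, with the one-point input fixing the $(00)$ constant), and your base case for the $(00)$ sector is a correct small variation: deducing $\E[\mu^{(00)}\en_{H}]=\E[\mu^{(00)}\en_{V}]=\tfrac12(\E\en_{H}+\E\en_{V})$ from (\ref{eq: F^00 constant}), Lemma \ref{lem: B_ij} and (\ref{eq: sum_mu}) lets you use Corollary \ref{THEOREM1} directly, without routing through Theorem \ref{thm: difference} and Corollary \ref{Cor: hor_only} as the paper does. However, there is a genuine gap in how you organize the induction. Your hypothesis at level $k$ is convergence for at most $k-1$ marked edges, yet the kernel you must subtract at $a$ in (\ref{eq: res_a}) carries the coefficient $\delta\,\E\bigl[\mu^{(ij)}\prod_{m=1}^{k}\en_{e_{m}}\bigr]$ --- a $k$-edge quantity. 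It is not ``$\delta^{-(k-1)}$ times a $(k-1)$-edge observable evaluated at $a^{\pm}$'': by (\ref{eq: F_ij_at_sing}) those are values of the $k$-edge observable at its own singularity, and if you instead express it through a $(k-1)$-edge observable via (\ref{eq: F_ij_at_LR}), you are evaluating that observable at a corner adjacent to one of its singular points, which is exactly what uniform convergence away from the singularities does not control. Without at least uniform boundedness of $\delta^{-k}\E[\mu^{(ij)}\prod\en_{e_{m}}]$ you can neither run the a priori bound (in the blow-up argument the subtracted term on the annulus around $a$ must be controlled; the paper's proof invokes the induction hypothesis precisely there) nor identify the subsequential limit: the uniqueness in Remark \ref{rem: uniqueness} uses the residue at $a$, and if that residue is unknown the difference of two candidate limits in a sector $(ij)\neq(00)$ is a multiple of $\ccor{\psi_{z}\psi_{a}}^{(ij)}$, which your remaining conditions at $e_{1},\dots,e_{k}$ do not kill for all configurations (the propagators $\cs,\ns,\ds$ have zeros). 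Your plan to read off the convergence of $\delta^{-k}\E[\mu^{(ij)}\prod\en_{e_{m}}]$ afterwards, from the residue of the already-identified limit, is therefore circular.

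The fix is the paper's interleaved induction $(\ref{eq: conv_energy})_{k}\Rightarrow(\ref{eq: conv_observable})_{k}\Rightarrow(\ref{eq: conv_energy})_{k+1}$: the missing implication $(\ref{eq: conv_observable})_{k}\Rightarrow(\ref{eq: conv_energy})_{k+1}$ is obtained not by residue matching but by evaluating the regularized observable $\tilde F_{a}$ of (\ref{eq: res_a}) at the adjacent corners $a_{L,R}$, using $\dzmone_{a}(\anw)=\dzmone_{a}(\ase)=0$ from (\ref{eq: P_a_close}) together with (\ref{eq: F_value_near}--\ref{eq: F_value_near_2}), so that $\tilde F_{a}(a_{L,R})=-\i\eta_{a}\sqrt2\,\E[\mu^{(ij)}\en_{e_{L,R}}\prod_{m}\en_{e_{m}}]$ converges to the analytically continued limit at $a$, i.e.\ to $-\i\eta_{a}\En{e_{1}\dots e_{k}a}{ij}$ up to the constant $\Conk k$. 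This supplies the $(k+1)$-point coefficient (with its limit) \emph{before} level $k+1$ is attempted, which is exactly what your version lacks at the corner $a$. With that step inserted, the rest of your argument (singularity analysis at the $e_{m}$, the $(00)$ normalization via (\ref{eq: values_at_e_i}), and the constant bookkeeping $\Conk{k}=\Conk{k-1}/\pi$) matches the paper.
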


\begin{proof}
We prove this result by induction: (\ref{eq: conv_energy})$_{k}\Longrightarrow$(\ref{eq: conv_observable})$_{k}\Longrightarrow$(\ref{eq: conv_energy})$_{k+1}$,
first separately for each $(ij)\neq(00)$ and then for $(ij)=(00).$
For the base of induction, for $(ij)\neq0$, we have (\ref{eq: En_1}–\ref{eq: En_3}),
and for $(ij)=(00),$ the induction starts with $k=1.$ Namely, we
have 
\[
\delta^{-1}\E\left[\mu^{(00)}\en_{e}\right]=\delta^{-1}\E\en_{e}-\sum_{\substack{(ij)\neq(00)}
}\delta^{-1}\E\left[\mu^{(ij)}\en_{e}\right]\to\pi^{-1}\En e{00}-\pi^{-1}\sum_{\substack{(ij)\neq(00)}
}\En e{ij},
\]
where the convergence of the first term is by Corollary (\ref{Cor: hor_only})
and the definition of $\En e{00}$, and the convergence of other three
terms will by that point have been proven. It remains to notice that
$\En e{ij}\equiv0$ for $(ij)\neq0$. Note that the proof of Corollary
\ref{Cor: hor_only} relies on Theorem \ref{thm: difference}
whose proof below does use (\ref{eq: conv_observable}), but only
in the case $(ij)\neq(00).$ Thus, the base of induction is established. 

To prove (\ref{eq: conv_observable}), we follow a general scheme
used in \cite{CHI_Mixed}, where we first identify the scaling limit
assuming precompactness, and then justify precompactness. Let us first
assume that for all $r>0$, there is a constant $C_{r}$ such that
the functions $\left|\delta^{-(k+1)}F_{e_{1}\dots e_{k}}^{(ij)}(a,\cdot)\right|$
are bounded by $C_{r}$ on the set $\T_{r}:=\T\setminus B_{r}(a)\cup B_{r}(e_{1})\cup\dots\cup B_{r}(e_{k})$,
uniformly in $\mesh$. We claim that in this case, 
\begin{align}
\delta^{-(k+1)}F_{e_{1}\dots e_{k}}^{(ij)}(a,\cdot)=\Conk k\cdot\proj{\eta_{z}}{\fmp{ij}{e_{1}\dots e_{k}}{a,z}}+\oo 1\label{EQUATIONAUXILIARYCLAIM-1}
\end{align}
uniformly on compact subsets of $\hat{\T}\setminus\{a_{pq}\}$. Indeed,
as noted above, the s-holomorphicity of $F_{e_{1}\dots e_{k}}^{(ij)}(a,z)$,
together with uniform boundedness, implies that (\ref{EQUATIONAUXILIARYCLAIM-1})
holds along a subsequence, and with \emph{some} holomorphic function
$f:\hat{\T}\setminus\{a_{pq}\}\to\C$ instead of $\fmp{ij}{e_{1}\dots e_{k}}{a,z}$.
Therefore, it is enough to show that any sub-sequential limit $f$
must be equal to $\fmp{ij}{e_{1}\dots e_{k}}{a,z}$. Clearly, $f$
must satisfy the (anti)-periodicity condition $f(z+p\omega_{1}+q\omega_{2})=(-1)^{ip+jq}f(z)$,
because of (\ref{eq: anti-symmetry}). 

Denote $U=\fmp{ij}{e_{1}\dots\hat{e}_{m}\dots e_{k}}{a,e_{m}}.$ We
have by induction hypothesis 
\begin{eqnarray*}
\delta^{-k}F_{e_{1}\dots\hat{e}_{m}\dots e_{k}}^{(ij)}(a,z_{2m}) & = & \Conk{k-1}\proj{\eta_{z_{2m}}}U+\oo 1,\\
\delta^{-k}F_{e_{1}\dots\hat{e}_{m}\dots e_{k}}^{(ij)}(a,z_{2m-1}) & = & \Conk{k-1}\proj{\eta_{z_{2m-1}}}U+\oo 1,
\end{eqnarray*}
so that, taking into account (\ref{eq: asymp_zmone}) and the identity
$\proj{\eta}U=\frac{1}{2}(U+\eta^{2}\overline{U})$, in a fixed annulus
around $B_{R}(e_{m})\setminus B_{r}(e_{m})$, we have, for the ``corrective''
term in the definition of $\tilde{F}_{k}(\cdot)$: 
\begin{multline*}
\i\bar{\eta}_{z_{2m-1}}\delta^{-k-1}\frac{\delta}{\sqrt{2}}F_{e_{1}\dots\hat{e}_{m}\dots e_{k}}^{(ij)}(a,z_{2m})\dzmone_{z_{2m-1}}^{\delta}(z)+\i\bar{\eta}_{z_{2m}}\delta^{-k-1}\frac{\delta}{\sqrt{2}}F_{e_{1}\dots\hat{e}_{m}\dots e_{k}}^{(ij)}(a,z_{2m-1})\dzmone_{z_{2m}}^{\delta}(z)\\
=\Conk{k-1}\cdot\frac{1}{\pi}\i\bar{\eta}_{z_{2m-1}}\proj{\eta_{z_{2m}}}U\proj{\eta_{z}}{\frac{\bar{\eta}_{z_{2m-1}}}{z-e_{m}}}\\
+\Conk{k-1}\cdot\frac{1}{\pi}\i\bar{\eta}_{z_{2m}}\proj{\eta_{z_{2m-1}}}U\proj{\eta_{z}}{\frac{\bar{\eta}_{z_{2m}}}{z-e_{m}}}+\oo 1\\
=\Conk{k-1}\cdot\frac{1}{\pi}\cdot\proj{\eta_{z}}{\frac{\i\bar{U}}{z-e_{m}}}+\oo 1.
\end{multline*}
Recall $\delta^{-(k+1)}\tilde{F}_{m}(\cdot)$ (\ref{eq: res_e_m})
extends to an s-holomorphic function near $e_{m}.$ We can express
its values by discrete Cauchy integral formula with contour in $B_{R}(e_{m})\setminus B_{r}(e_{m})$
and pass to the limit in that formula; this shows that $f(z)-C_{k}\cdot\frac{\i\bar{U}}{z-e_{m}}$
extends to a holomorphic function at $B_{2r}(e_{m});$ moreover, (\ref{eq: values_at_e_i})
shows that this function vanishes at $e_{m}$. Hence, $f$ must satisfy
(\ref{eq: f_exp_em}). 

The analysis near $a$ is similar. Since $\delta^{-k}\cdot\E\left[\mu^{(ij)}\prod_{i=1}^{k}\en_{e_{i}}\right]=\En{e_{1}\dots e_{k}}{ij}+\oo 1$
by induction hypothesis, we have the expansion 
\[
\delta^{-k}\E\left[\mu^{(ij)}\prod_{i=1}^{k}\en_{e_{i}}\right]\dzmone_{a}^{\delta}(\cdot)=\frac{\sqrt{2}}{\pi^{k+1}}\En{e_{1}\dots e_{k}}{ij}\cdot\proj{\eta_{z}}{\frac{\bar{\eta}_{a}}{z-a}}+\oo 1.
\]
uniformly in $B_{R}(e_{m})\setminus B_{r}(e_{m}).$ Therefore, by
the same argument as above, $f(z)-\overline{\eta}_{a}\frac{\En{e_{1}\dots e_{k}}{ij}}{z-a}$
can be analytically continued in $B_{R}(a)$, and thus satisfies (\ref{eq: f_exp_a}).
In other words, $f(\cdot)$ satisfies the defining conditions of $\fmp{ij}{e_{1}\dots e_{k}}{a,\cdot}$,
and therefore, due to Remark \ref{rem: uniqueness}, we have $f(\cdot)\equiv\fmp{ij}{e_{1}\dots e_{k}}{a,\cdot}.$

We now turn to justifying the uniform (in $\mesh$) boundedness away
form $a,$$e_{1},\dots,e_{k}$. Let us assume towards a contradiction
that there exist a (small fixed) $R>0$ such that $C_{R}^{\mesh}:=\max_{z\in\hat{\T}_{R}^{\delta}}\abs{\delta^{-1-k}F_{e_{1}\dots e_{k}}^{(ij)}(a,\cdot)}$
tends to infinity (at least along some sequence of $\mesh$). We claim
that in that case, the functions $\abs{(C_{R}^{\delta})^{-1}\delta^{-1-k}F_{e_{1}\dots e_{k}}^{(ij)}(a,\cdot)}$
are uniformly bounded on \emph{any} $\hat{\T}_{r}^{\delta}$ with
$r<R$. Indeed, 
\[
(C_{r}^{\mesh})^{-1}\left(\delta^{-1-k}F_{e_{1}\dots e_{k}}^{(ij)}(a,\cdot)-\delta^{-k-1}\E\left[\mu^{(ij)}\prod_{i=1}^{k}\en_{e_{i}}\right]\dzmone_{a}^{\delta}(\cdot)\right)
\]
is uniformly bounded on $B(a,2R)\setminus B(a,R)$ and discrete holomorphic
in $B(a,2R)$, and hence, by maximum principle, it is uniformly bounded
in the whole of $B(a,2R)$. As $\dzmone_{a}^{\delta}(z)$ is also
uniformly bounded on compact subsets of $B(a,2R)\setminus\{a\}$ and
$(C_{r}^{\mesh})^{-1}\to0$, we get the claim. Similarly, we justify
the uniform boundedness on each $B_{R}(e_{m})\setminus B_{r}(e_{m})$
for $m=1,\dots,k.$ 

Therefore, the above convergence argument can be applied verbatim
to 
\[
(C_{R}^{\delta})^{-1}\delta^{-1-k}F_{e_{1}\dots e_{k}}^{(ij)}(a,\cdot)
\]
with the conclusion that it converges uniformly on compact subsets
of $\hat{\mathbb{T}}$ to a function that is analytic in the whole
$\hat{\T}$ and vanishing, say, at $e_{1}$, that is, to $0$. This
contradicts the choice of $C_{r}^{\mesh}$. 

To derive (\ref{eq: conv_energy})$_{k+1}$, note that we have shown
above that $\tilde{F}_{a}(\cdot)$ extends to an s-holomorphic function
in a neighborhood of $a$, and that 
\[
\delta^{-(k+1)}\tilde{F}_{a}(z)=\Conk k\cdot\proj{\eta_{z}}{\fmp{ij}{e_{1}\dots e_{k}}{a,z}-\En{e_{1}\dots e_{k}}{ij}\frac{\bar{\eta}_{a}}{z-a}}+\oo 1,
\]
uniformly in a neighborhood of $a,$ where the function inside the
projection is analytically continued. It remains to take into account
(\ref{eq: F_value_near}–\ref{eq: F_value_near_2}) and (\ref{eq: P_a_close})
to note that 
\[
\tilde{F}_{a}(a_{L,R})=-\i\eta_{a}\cdot\sqrt{2}\cdot\E\left[\mu^{(ij)}\en_{e_{L,R}}\prod_{m=1}^{k}\en_{m}\right].
\]
\end{proof}
\begin{rem}
The planar domain counterpart of this theorem is a particular case
of (\cite[Theorem 2]{CHI_Mixed}), asserting that
\begin{equation}
\delta^{-k-1}\E\left[\en_{e_{1}}\dots\en_{e_{k}}\psi_{a}^{\eta_{a}}\psi_{z}\right]=\Ceps^{k}C_{\psi}^{2}\eta_{a}\proj{\eta_{z}}{\ccor{\en_{e_{1}}\dots\en_{e_{k}}\psi_{a}^{\eta_{a}}\psi_{z}}}+\oo 1.\label{eq: CHI}
\end{equation}
The expansions of the right-hand side at $z=e_{m}$ and $z=a$ can
be read off the fusion rules \cite[(6.2–6.4)]{CHI_Mixed}. In the
case $(ij)\neq(00),$ the same convergence proof as in \cite{CHI_Mixed},
by expressing $F_{e_{1}\dots e_{k}}^{(ij)}(a,z)$ as a Pfaffian of
two-point correlations, and then passing to the limit term-by-term,
could have been carried on. We were unable to do it for the $(00)$
sector, due to the lack of combinatorial analog of $\ccor{\psi_{z}\psi_{a}}^{(00)}.$
By the formalism of \cite[Section 5.2]{CHI_Mixed}, in the RHS of
(\ref{eq: CHI}) we can expand $\psi_{a}^{\eta_{a}}=\bar{\eta}_{a}\psi_{a}+\eta_{a}\psi_{a}^{\star}$;
in the torus case, only one of these terms will contribute for each
sector depending on pairity of $k$, which is reflected in (\ref{eq: f_correlation},
\ref{eq: f_corr_oo}).
\end{rem}

\section{Proofs of Theorem \ref{thm: difference} and Theorem \ref{thm: multipoint}.}

In order to prove Theorem \ref{thm: difference}, we need the one
more Lemma. Denote 
\begin{align*}
g^{(ij)}(a,z):=\fmp{ij}{}{a,z}-\frac{\En{}{ij}\bar{\eta}_{a}}{z-a};\\
G^{(ij)}(a,z):=\delta^{-1}F^{(ij)}(a,z)-\E\left[\mu^{(ij)}\right]\dzmone_{a}^{\delta}(z).
\end{align*}
It is a standard fact that if discrete holomorphic functions converge
uniformly in a ball $B(a,r)$, then so do their discretized derivatives.
This is what it means for our case: 
\begin{lem}
\label{lem: diff_obs_asymp}One has, as $\delta\to0$, for $(ij)\neq(00),$
\begin{align*}
\delta^{-2}\left(F^{(ij)}(a,\anw)-F^{(ij)}(a,\ase)\right)\to\frac{2}{\pi}\i\eta_{a}\re\left[\i\bar{\eta}_{a}^{3}\left.\pa_{z}g^{(ij)}(a,z)\right|_{z=a}\right].
\end{align*}
\end{lem}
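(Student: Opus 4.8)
The plan is to rewrite the left-hand side as a centered discrete directional derivative of the auxiliary function $G^{(ij)}(a,\cdot)$ at the corner $a$ and then pass to the continuum limit by convergence of discrete derivatives. First, note that $G^{(ij)}(a,\cdot)=\delta^{-1}\tilde{F}_{a}(\cdot)$, where $\tilde{F}_{a}=F^{(ij)}(a,\cdot)-\delta\,\E[\mu^{(ij)}]\,\dzmone_{a}^{\delta}(\cdot)$ is the function from Lemma~\ref{lem: fmp_discrete_singularity} in the case $k=0$; by that lemma it extends to an s-holomorphic function in a neighborhood of $a$ with $\tilde{F}_{a}(a)=0$, so $G^{(ij)}(a,a)=0$. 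Rescaling~(\ref{eq: P_a_close}) gives $\dzmone_{a}^{\delta}(\anw)=\dzmone_{a}^{\delta}(\ase)=0$, hence $F^{(ij)}(a,\anw)=\delta\,G^{(ij)}(a,\anw)$ and $F^{(ij)}(a,\ase)=\delta\,G^{(ij)}(a,\ase)$, and therefore
\[
\delta^{-2}\bigl(F^{(ij)}(a,\anw)-F^{(ij)}(a,\ase)\bigr)=\delta^{-1}\bigl(G^{(ij)}(a,\anw)-G^{(ij)}(a,\ase)\bigr).
\]
By the convergence established in the course of the proof of Theorem~\ref{thm: obs_conv} (the case $k=0$), $G^{(ij)}(a,z)=\tfrac{\sqrt{2}}{\pi}\proj{\eta_{z}}{g^{(ij)}(a,z)}+\oo 1$ uniformly in a neighborhood of $a$, with $g^{(ij)}(a,\cdot)$ holomorphic there; moreover $g^{(ij)}(a,a)=-\i\eta_{a}\En a{ij}=0$ for $(ij)\neq(00)$ by~(\ref{eq: f_exp_a}) and Definition~\ref{def: en} (the quantity $\En{e_{1}}{ij}$ vanishes for an odd number of marked points when $(ij)\neq(00)$), so that the limiting, a priori spinor-valued, function is genuinely regular at $a$.

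Now $\anw-\ase=2\i(a^{\bullet}-a^{\circ})=\sqrt{2}\,\delta\,\i u$, where $u:=(a^{\bullet}-a^{\circ})/|a^{\bullet}-a^{\circ}|$ and $|a^{\bullet}-a^{\circ}|=\delta/\sqrt{2}$, and $a$ is the midpoint of $[\ase,\anw]$. Since $F^{(ij)}(a,\anw),F^{(ij)}(a,\ase)\in\i\eta_{a}\R$ by~(\ref{eq: F_ij_at_LR}), the corners $\anw,\ase$ lie on the sublattice $\{z:\eta_{z}\in\i\eta_{a}\R\}$, on which $\proj{\eta_{z}}{\cdot}=\proj{\i\eta_{a}}{\cdot}$ and the restriction of $G^{(ij)}(a,\cdot)$ is discrete harmonic (a standard consequence of s-holomorphicity); by the previous paragraph this restriction converges uniformly near $a$ to the harmonic function $z\mapsto\tfrac{\sqrt{2}}{\pi}\proj{\i\eta_{a}}{g^{(ij)}(a,z)}$. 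As $|\anw-\ase|=\sqrt{2}\,\delta$ and $a$ is its midpoint, $\delta^{-1}\bigl(G^{(ij)}(a,\anw)-G^{(ij)}(a,\ase)\bigr)$ equals $\sqrt{2}$ times a centered difference quotient of that restriction along the direction $\i u$ at $a$; hence, by the convergence of discrete gradients of uniformly convergent discrete harmonic functions (see e.g.~\cite{CHELKAKSMIRNOV}), it converges to
\[
\sqrt{2}\cdot\frac{\sqrt{2}}{\pi}\,\i\eta_{a}\,\re\left[\overline{\i\eta_{a}}\cdot\i u\cdot\left.\pa_{z}g^{(ij)}(a,z)\right|_{z=a}\right]=\frac{2}{\pi}\,\i\eta_{a}\,\re\left[u\bar{\eta}_{a}\left.\pa_{z}g^{(ij)}(a,z)\right|_{z=a}\right].
\]
Finally, the definition~(\ref{eq: Dirac}) of the Dirac spinor gives $\eta_{a}^{2}=\i\bar{u}$, hence $\bar{\eta}_{a}^{2}=-\i u$ and $u\bar{\eta}_{a}=\i\bar{\eta}_{a}^{3}$; substituting this identity into the expression above yields exactly the asserted limit.

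The step I expect to be the main obstacle is the passage from the discrete directional derivative to the continuum one. This requires $G^{(ij)}(a,\cdot)$ to be s-holomorphic and uniformly convergent in a \emph{full} neighborhood of $a$, not merely away from it --- which is exactly what the extension of $\tilde{F}_{a}$ across $a$ from Lemma~\ref{lem: fmp_discrete_singularity} provides --- together with the regularity of the limiting function at $a$, guaranteed by the vanishing $g^{(ij)}(a,a)=0$ (and this is where the hypothesis $(ij)\neq(00)$ is used). The remaining ingredient is the classical fact that uniform convergence of discrete harmonic functions on a ball upgrades to uniform convergence of their discrete gradients on compact subsets.
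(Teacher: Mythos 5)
Your argument is correct and follows essentially the same route as the paper's proof: reduce via $\dzmone_{a}^{\delta}(\anw)=\dzmone_{a}^{\delta}(\ase)=0$ to a centered difference of $G^{(ij)}(a,\cdot)$, use that its restriction to the sublattice containing $\anw,\ase$ is discrete harmonic and converges (by the proof of Theorem \ref{thm: obs_conv}) uniformly near $a$ to $\frac{\sqrt{2}}{\pi}\proj{\i\eta_{a}}{g^{(ij)}(a,\cdot)}$, invoke convergence of discrete gradients, and finish with the same projection algebra ($u\bar{\eta}_{a}=\i\bar{\eta}_{a}^{3}$, matching the paper's $a_{LR}=-\sqrt{2}\delta\bar{\eta}_{a}^{2}$ computation). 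The only inessential difference is your appeal to $g^{(ij)}(a,a)=0$: regularity of $g^{(ij)}$ at $a$ already follows from subtracting the simple pole (Lemma \ref{lem: fmp_discrete_singularity}), so that vanishing is not actually needed.
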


\begin{proof}
First of all, note that by (\ref{eq: P_a_close}), we have 
\[
\delta^{-2}\left(F^{(ij)}(a,\anw)-F^{(ij)}(a,\ase)\right)=\delta^{-1}\left(G^{(ij)}(a,a_{L})-G^{(ij)}(a,a_{R})\right).
\]
Now, by (\ref{eq: res_a}), $G^{(ij)}(a,\cdot)$ is discrete holomorphic
in a (fixed) neighborhood of $a$, its restriction to the sub-lattice
$\{u\in\Cgr:\eta_{u}=\eta_{a_{L}}=\eta_{a_{R}}=\i\eta_{a}\}$ is a
discrete harmonic function, as it was shown in the proof of Theorem
\ref{thm: obs_conv} that converges to $\frac{\sqrt{2}}{\pi}\proj{\i\eta_{a}}{g^{(ij)}(z)}$.
It is well known that this implies convergence of its (normalized)
finite difference to the corresponding derivative. Denote $a_{L}-a_{R}:=a_{LR}=-\sqrt{2}\delta\bar{\eta}_{a}^{2}$,
we thus have 
\begin{multline*}
G^{(ij)}(a,\anw)-G^{(ij)}(a,\ase)\\
=\frac{\sqrt{2}}{\pi}\left(a_{LR}\left.\pa_{z}\proj{\i\eta_{a}}{g^{(ij)}(a,z)}\right|_{z=a}+\overline{a_{LR}}\left.\pa_{\overline{z}}\proj{\i\eta_{a}}{g^{(ij)}(a,z)}\right|_{z=a}\right)+\oo{\delta}.
\end{multline*}
Since $\proj{\eta}x=\frac{1}{2}\left(x+\eta^{2}\bar{x}\right),$ $a_{LR}=-\sqrt{2}\delta\bar{\eta}_{a}^{2}$,
and $g^{(ij)}(a,\cdot)$ is holomorphic, using the notation $X=\left.\pa_{z}g^{(ij)}(a,z)\right|_{z=a},$
this can be further simplified as 
\begin{multline*}
a_{LR}\left.\pa_{z}\proj{\i\eta_{a}}{g^{(ij)}(a,z)}\right|_{z=a}+\overline{a_{LR}}\left.\pa_{\overline{z}}\proj{\i\eta_{a}}{g^{(ij)}(a,z)}\right|_{z=a}\\
=-\frac{\sqrt{2}}{2}\delta\left(\bar{\eta}_{a}^{2}X-\eta_{a}^{4}\bar{X}\right)=\sqrt{2}\delta\i\eta_{a}\frac{1}{2\i}\left(\eta_{a}^{3}\bar{X}-\bar{\eta}_{a}^{3}X\right)\\
=\sqrt{2}\delta\i\eta_{a}\im[\eta_{a}^{3}\bar{X}]=\sqrt{2}\delta\i\eta_{a}\re[\i\bar{\eta}_{a}^{3}X].
\end{multline*}
\end{proof}
We are in the position to prove Theorem \ref{thm: difference}:
\begin{proof}[Proof of Theorem \ref{thm: difference}.]
 Note that the correlations $\E[\mu^{(ij)}(\en_{V}-\en_{H})]$ can
be expressed, using (\ref{eq: F_ij_at_LR}), as 
\begin{equation}
\E\left[\mu^{(ij)}(\en_{e_{L}}-\en_{e_{R}})\right]=-\frac{1}{\i\sqrt{2}\eta_{a}}\left(F^{(ij)}(a,a_{L})-F^{(ij)}(a,a_{R})\right).\label{eq: en_mu}
\end{equation}
In fact, this is zero for $(ij)=(00)$ by (\ref{eq: F^00 constant}).
Choose $a$ so that $\frac{a^{\bullet}-a^{\circ}}{|a^{\bullet}-a^{\circ}|}=e^{\i\frac{\pi}{4}}$,
i. e., $\eta_{a}=e^{\i\frac{\pi}{8}}$; then $\en_{e_{L}}=\en_{V},$
$\en_{e_{R}}=\en_{H}.$ Since $\sum_{i,j\in\{0,1\}}(-1)^{(1-i)p+(1-j)q+pq}=4\ind_{p=q=0}$
we have 
\begin{equation}
\sum_{(ij)}\mu^{(ij)}=\mu_{00}=1.\label{eq: sum_mu}
\end{equation}
Therefore, summing (\ref{eq: en_mu}) and applying Lemma \ref{lem: diff_obs_asymp}
yields
\begin{align}
\DifEneDen & =\delta^{2}\sum_{(ij)\neq(00)}\frac{\delta^{-2}}{\i\sqrt{2}\eta_{a}}\left(F^{(ij)}(a,\anw)-F^{(ij)}(a,\ase)\right).\label{eq: diff_intermediate}\\
 & =\delta^{2}\cdot\frac{\sqrt{2}}{\pi}\cdot\sum_{(i,j)\neq(0,0)}\left(\re\left[\i\bar{\eta}_{a}^{3}\pa_{z}g^{(ij)}(a,z)|_{z=a}\right]+\oo 1\right).\nonumber 
\end{align}
It remains to compute explicitly the derivatives. Jacobian elliptic
functions have series expansions
\begin{align*}
\cs\left(u,\EllMod\right) & =\frac{1}{u}+\left(\frac{-1}{3}+\frac{1}{6}\EllMod^{2}\right)u+\OO{u^{3}},\\
\ns\left(u,\EllMod\right) & =\frac{1}{u}+\frac{1}{6}\left(1+\EllMod^{2}\right)u+\OO{u^{3}},\\
\ds\left(u,\EllMod\right) & =\frac{1}{u}+\left(\frac{1}{6}-\frac{1}{3}\EllMod^{2}\right)u+\OO{u^{3}}
\end{align*}
and plugging them into the formula for $f^{(ij)}(a,\cdot)$ yields
\[
\left.\partial_{z}g^{(ij)}(a,z)\right|_{z=a}=\bar{\eta}_{a}\En{}{ij}\frac{4K^{2}}{\omega_{1}^{2}}\cdot\begin{cases}
\frac{1}{6}\EllMod^{2}-\frac{1}{3}\\
\frac{1}{6}\left(1+\EllMod^{2}\right)\\
\frac{1}{6}-\frac{1}{3}\EllMod^{2}
\end{cases}=\bar{\eta}_{a}\En{}{ij}\frac{\pi^{2}}{24\omega_{1}^{2}}\cdot\begin{cases}
\theta_{2}^{4}-2\theta_{3}^{4}, & (ij)=(01),\\
\theta_{2}^{4}+\theta_{3}^{4}, & (ij)=(10),\\
\theta_{3}^{4}-2\theta_{2}^{4}, & (ij)=(11).
\end{cases}
\]
 Plugging into (\ref{eq: diff_intermediate}) concludes the proof. 
\end{proof}
\begin{proof}[Proof of Theorem \ref{thm: multipoint}.]
 By (\ref{eq: sum_mu}) and (\ref{eq: conv_energy}), we have 
\[
\delta^{-k}\E\left[\prod_{m=1}^{k}\en_{m}\right]=\sum_{(ij)}\delta^{-k}\E\left[\mu^{(ij)}\prod_{m=1}^{k}\en_{m}\right]\to\pi^{-k}\sum_{(ij)}\En{e_{1}\dots e_{k}}{ij}.
\]
By Definition \ref{def: en}, only $\En{e_{1}\dots e_{k}}{00}$ contributes
for odd $k;$ chasing the definitions leads directly to (\ref{eq: corr_en_odd}).
For even $k$, only $\En{e_{1}\dots e_{k}}{01},\En{e_{1}\dots e_{k}}{10},\En{e_{1}\dots e_{k}}{11}$
contribute, and we have 
\begin{multline*}
\En{e_{1}\dots e_{k}}{ij}=\frac{\PartFun{ij}}{\PFTotal}(-1)^{\frac{k}{2}}\ccor{\psi_{e_{1}}\psi_{e_{1}}^{\star}\dots\psi_{e_{k}}\psi_{e_{k}}^{\star}}^{(ij)}\\
=\frac{\PartFun{ij}}{\PFTotal}(-1)^{\frac{k}{2}}(-1)^{\frac{k(k-1)}{2}}\ccor{\psi_{e_{1}}\dots\psi_{e_{k}}\psi_{e_{1}}^{\star}\dots\psi_{e_{k}}^{\star}}^{(ij)}\\
=\frac{\PartFun{ij}}{\PFTotal}\ccor{\psi_{e_{1}}\dots\psi_{e_{k}}}^{(ij)}\ccor{\psi_{e_{1}}^{\star}\dots\psi_{e_{k}}^{\star}}^{(ij)}=\frac{\PartFun{ij}}{\PFTotal}\left|\ccor{\psi_{e_{1}}\dots\psi_{e_{k}}}^{(ij)}\right|^{2}.
\end{multline*}
\end{proof}

\section{Theorem \ref{thm: sum} for the triangular lattice}

\label{sec: triangular}In this section, we prove an analog of Theorem
\ref{thm: sum} for triangular lattice $\H:=\{m+e^{\frac{\i\pi}{3}}n:m,n\in\Z\}$;
a similar argument (or duality) can be applied to treat the hexagonal
lattice. We follow the argument for the case of the square lattice.
The energy observable on the triangular lattice is defined by 
\[
\en_{(xy)}=\sigma_{x}\sigma_{y}-\bar{\en},
\]
where $\bar{\en}=\frac{2}{3}.$ The critical value of $\alpha$ is $\alpha_{\tri}=2-\sqrt{3}$. There are three types of edges on
the lattice, of which we choose representatives $e_{0},$ $e_{1}$,
$e_{2},$ where $e_{k}$ is aligned with $e^{\frac{\pi\i}{3}k}.$
Our result reads as follows: 
\begin{thm}
\label{thm: thm_1_triang}For the critical Ising model on 
$\torus=\delta\H/\Lambda^{\delta}$, we have 
\[
\E\en_{e_{0}}+\E\en_{e_{1}}+\E\en_{e_{2}}=6\sqrt{2}\frac{\sqrt{\det^{\star}\TwiLap{00}_{\delta}}}{\sqrt{\det\TwiLap{10}_{\delta}}+\sqrt{\det\TwiLap{01}_{\delta}}+\sqrt{\det\TwiLap{11}_{\delta}}}\frac{1}{|\torus|}.
\]
where $\det^{\star}$ denotes the product of all non-zero eigenvalues. 
\end{thm}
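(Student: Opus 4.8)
The plan is to follow the square-lattice argument of Section~\ref{SECTIONTHESUMOFTHEHORIZONTALANDVERTICALEDGEENERGYDENSITIES} step by step, making the modifications required by the triangular geometry. First I would establish the high-temperature expansion analog of Lemma~\ref{LEMMAAUXILIARYFORMOFENERGYDENSITY}: for any edge $e$ of $\torus=\delta\H/\Lambda^\delta$ at criticality $\alpha=\alpha_{\tri}=2-\sqrt 3$, one has
\[
\E(\en_e)=\bar\en\cdot\frac{1}{\ParFunIsiRed}\sum_{\xi\in\EveSub{\torus}}\bigl(\CriPar^{-1}\indicator_{e\in\xi}-\CriPar\indicator_{e\notin\xi}\bigr)\CriPar^{\abs\xi},
\]
the only change being that $\frac{1}{\sqrt2}$ is replaced by $\bar\en=\frac23$, and the identities $\alpha^{-1}-\bar\en=\bar\en\alpha^{-1}$, $\alpha-\bar\en=-\bar\en\alpha$ must be checked for $\alpha_{\tri}$ (indeed $2-\sqrt3$ satisfies $\alpha^2+1=\frac{1}{\bar\en}\alpha=\frac32\alpha$, equivalently $\alpha+\alpha^{-1}=\frac32$, so this works). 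The flat connections $\varphi_{ij}$, spin structures, quadratic forms $q_{ij}$ and Lemma~\ref{Lemma: sum_qf} carry over verbatim since that discussion is purely topological; thus, defining $\DisSpiObsVal{ij}e$ as before, $\E(\en_e)=\frac{\bar\en}{2}\cdot\frac{1}{\ParFunIsiRed}\bigl(\DisSpiObsVal{01}e+\DisSpiObsVal{10}e+\DisSpiObsVal{11}e-\DisSpiObsVal{00}e\bigr)$.

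Next I would prove the triangular analog of Lemma~\ref{lem: B_ij}: for $\alpha=\alpha_{\tri}$ and $e_0,e_1,e_2$ the three edge types,
\[
\DisSpiObsVal{ij}{e_0}+\DisSpiObsVal{ij}{e_1}+\DisSpiObsVal{ij}{e_2}=\begin{cases}-\,c\,\CriPar^{|\torus|_{\mathrm{e}}}\sqrt{\det{}^\star\Delta^{00}}\,\frac{1}{|\torus|}, & i=j=0,\\[2pt] 0,&\text{otherwise},\end{cases}
\]
where $|\torus|_{\mathrm e}=3|\torus|$ is the number of edges and $c$ is an explicit constant to be pinned down. The argument is the same differentiation trick: writing $\TwiParFunWei{ij}\alpha$ for the right side of \eqref{eq: cimasoni}, one has $\der{\alpha}\TwiParFunWei{ij}\alpha=\frac{1}{\alpha}\sum_{\xi}|\xi|\CriPar^{|\xi|}\loopfactor{ij}$, and since $\torus$ has $|\torus|$ edges of each type, this splits as $\frac{|\torus|}{\alpha}\sum_{k=0}^2\sum_\xi\indicator_{e_k\in\xi}\CriPar^{|\xi|}\loopfactor{ij}$; using $\indicator_{e\in\xi}=\frac{1}{\alpha+\alpha^{-1}}(\alpha^{-1}\indicator_{e\in\xi}-\alpha\indicator_{e\notin\xi})+\frac{\alpha}{\alpha+\alpha^{-1}}$ converts this into a relation between $\der{\alpha}\TwiParFunWei{ij}\alpha$, $\TwiParFunWei{ij}\alpha$ and $\sum_k\DisSpiObsVal{ij}{e_k}$. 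The key input is then an analog of Theorem~\ref{thm: det_KW}: $\det\KW^{ij}=\prod_{q}v_{\tri}(\alpha,q)$ for an explicit triangular Kac--Ward weight $v_{\tri}(\alpha,q)$ (obtained by the same one-vertex-torus computation, now with six oriented edges), together with the critical identity that $v_{\tri}(\alpha_{\tri},q)$ is proportional to a constant times the Laplacian eigenvalue $6-(\text{sum of six nearest-neighbor phases})$ and vanishes only at $q\in\Z^2$, which must be absent for $(ij)\neq(00)$. As in the proof of Lemma~\ref{lem: B_ij}, one then shows $\der{\alpha}\log\TwiParFunWei{ij}{\alpha_{\tri}}$ is a sum of $|\torus|$ equal terms, giving the vanishing for $(ij)\neq(00)$; for $(00)$ one expands $\TwiParFunWei{00}\alpha$ near its simple zero at $\alpha_{\tri}$, picking up $\der{\alpha}^2 v_{\tri}(\alpha_{\tri},0)$ and $\sqrt{\det{}^\star\Delta^{00}}$ via the triangular analog of Proposition~\ref{prop: KW_to_Laplacian}.

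Finally, summing the formula for $\E(\en_{e_k})$ over $k=0,1,2$ and using the $(00)$ case of the above, together with $\ParFunIsiRed=\frac12\bigl(\sqrt{\det\KW^{01}}+\sqrt{\det\KW^{10}}+\sqrt{\det\KW^{11}}\bigr)$ (Lemma~\ref{Lemma: sum_qf}, valid verbatim) rewritten in terms of $\sqrt{\det\Delta^{ij}}$, all $\alpha_{\tri}^{|\torus|_{\mathrm e}}$ prefactors cancel and one is left with $6\sqrt2$ times the stated ratio after the arithmetic is carried out. I expect the main obstacle to be the bookkeeping in the triangular Kac--Ward computation: getting the $6\times6$ matrix $\KW^{(q)}$ right, verifying that $\det\KW^{(q)}=v_{\tri}(\alpha,q)$, checking that at $\alpha_{\tri}$ this equals $2\alpha_{\tri}^2/(\text{appropriate power})$ times $6-\sum(\text{phases})$ with the \emph{same} normalization as the discrete Laplacian eigenvalues on $\delta\H$, and tracking all numerical constants (the $\bar\en=\frac23$, the factor $6$ from three edge types, the power $|\torus|_{\mathrm e}=3|\torus|$, and the overall $\sqrt2$) so that they combine into exactly $6\sqrt2$. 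The sign of the square root of $\der{\alpha}\TwiParFunWei{00}{\alpha_{\tri}}$ should again be fixed by the boundary behavior $Z_{00}(0)=1$, $Z_{00}(1)<0$ as in \cite{CIMASONIDUMINIL-COPIN}.
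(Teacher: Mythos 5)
Your overall architecture (high-temperature expansion, differentiation of the twisted Kac--Ward determinants, per-mode product formula, identification with Laplacian eigenvalues, expansion of $Z^{(00)}$ at its simple zero, sign fixed via \cite{CIMASONIDUMINIL-COPIN}) is the same as the paper's, but your first step contains an arithmetic error that breaks the whole scheme. The identities $\alpha^{-1}-\bar{\en}=\bar{\en}\alpha^{-1}$ and $\alpha-\bar{\en}=-\bar{\en}\alpha$ are equivalent to $\bar{\en}=\frac{1}{1+\alpha}$ and $\bar{\en}=\frac{\alpha}{1-\alpha}$ respectively, and they hold simultaneously \emph{only} at the square-lattice values $\alpha=\sqrt{2}-1$, $\bar{\en}=\frac{1}{\sqrt2}$; at $\alpha_{\tri}=2-\sqrt3$, $\bar{\en}=\frac23$ both fail (also note $\alpha_{\tri}+\alpha_{\tri}^{-1}=4$, not $\frac32$ as you claim). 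Consequently the combination $\alpha^{-1}\indicator_{e\in\xi}-\alpha\indicator_{e\notin\xi}$ you propose to carry over from the square case is \emph{not} proportional to the energy observable: under the high-temperature measure it corresponds to $\sigma_x\sigma_y-\frac{2}{\alpha+\alpha^{-1}}=\sigma_x\sigma_y-\frac12$, not to $\en_e=\sigma_x\sigma_y-\frac23$. Moreover, your claimed analog of Lemma \ref{lem: B_ij} is false with that combination: the logarithmic derivative of the triangular Kac--Ward factor at criticality is $c=2+\frac{1}{\sqrt3}$ per mode, which does not match $\frac{4}{\alpha_{\tri}+\alpha_{\tri}^{-1}}=1$ (on the square lattice both equal $\sqrt2$, which is exactly the coincidence your step exploits), so for $(ij)\neq(00)$ one finds $\sum_{k}\DisSpiObsVal{ij}{e_k}=\frac{1}{\sqrt3}\TwiParFunWei{ij}{\alpha_{\tri}}\neq0$ rather than $0$. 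Following your plan literally would therefore yield a formula for $\sum_k\bigl(\E\sigma_x\sigma_y|_{e_k}-\frac12\bigr)$ with uncancelled sector contributions, not the stated theorem.

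The missing ingredient, and the way the paper handles it, is to \emph{not} factor anything out: one keeps $b(e,\xi)=(\alpha^{-1}-\bar{\en})\indicator_{e\in\xi}+(\alpha-\bar{\en})\indicator_{e\notin\xi}$ intact, subtracts $\frac{c}{2}|\torus|Z^{(ij)}$ (with the triangular value $c=2+\frac{1}{\sqrt3}$) from the derivative identity, and then verifies the lattice-specific algebraic fact that the resulting combination $\hat b(e,\xi)=\bigl(\frac{c}{6}-\alpha^{-1}\bigr)\indicator_{e\in\xi}+\frac{c}{6}\indicator_{e\notin\xi}$ satisfies $b(e,\xi)=(6-4\sqrt3)\,\hat b(e,\xi)$. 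This is precisely where the value $\bar{\en}=\frac23$ enters, and it is a nontrivial coincidence of the triangular lattice, not a verbatim transport of the square-lattice identity. The remaining parts of your plan (the $6\times6$ per-mode determinant $v^{\tri}(\alpha,q)$, its proportionality at $\alpha_{\tri}$ to the triangular Laplacian eigenvalues up to an explicit constant, the treatment of the $(00)$ sector via the second derivative at $q=0$, and the constant bookkeeping) do match the paper and would go through once this step is corrected.
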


\begin{rem}
Since the asymptotics of the determinant of the Laplacian has been
worked out for arbitrary doubly-periodic lattices \cite{izyurov2020asymptotics},
this does lead immediately to the analog of Corollary \ref{THEOREM1},
obtained by other methods in \cite{SALAS2}. Other part of the paper
extend to $\H$ as well: the discrete holomorphicity techniques of
Section \ref{sec: disc_hol} are known to extend to a larger class
of isoradial lattices with critical weights, see \cite{CHELKAKSMIRNOV,ChelkakIzyurovMahfouf}.
\end{rem}

As in the square lattice case, we start with the high-temperature
expansion identity
\begin{equation}
\E\en_{e}=\frac{1}{Z^{I}}\sum_{\xi\in\EveSub{\torus}}b(e,\xi)\alpha^{|\xi|},\label{eq: ht_triang}
\end{equation}
where $b(e,\xi)=(\alpha^{-1}-\bar{\en})\ind_{e\in\xi}+(\alpha-\bar{\en})\ind_{e\notin\xi}.$ 

Denote the triangular lattice by $\H$, and define 
\begin{align*}
v^{\tri}(\alpha,q) & :=(1+3\alpha^{2}+8\alpha^{3}+3\alpha^{4}+\alpha^{6})\\
 & +(3\alpha^{3}-\alpha-\alpha^{5})\left(z+w+\frac{1}{z}+\frac{1}{w}+\frac{z}{w}+\frac{w}{z}\right),
\end{align*}
where, as in section 3, $z\left(q\right)=\exp\left(2\pi\i\re q\right)$
and $w\left(q\right)=\exp\left(2\pi\i\im q\right)$. The Kac–Ward
determinants in this case read, as before
\[
\det\KW^{ij}=\prod_{q\in\sfrac{\H}{\TorLat^{*}}+\TwiKacWarForShiVec{ij}}v^{\tri}(\alpha,q).
\]
As in the case of a square lattice, the determinant of the Laplacian
is related to the critical Kac–Ward determinant:
for the Laplacian $\Delta^{ij}$ on the triangular lattice, the eigenvalues
are 
\[
c_{\tri}v^{\tri}(\alpha_{\tri},q),\quad q\in\sfrac{\H}{\TorLat^{*}}+\TwiKacWarForShiVec{ij},
\]
with $c_{\tri}=\frac{1}{12(-26+15\sqrt{3})}$. 

Denoting 
\[
B^{(ij)}(e):=\sum_{\xi\in\EveSub{\torus}}b(e,\xi)\alpha^{|\xi|}\loopfactor{ij},
\]
we have the following analog of Lemma \ref{lem: B_ij}: 
\begin{lem}
\label{lem: B_ij-1}One has, for $\alpha=\alpha_{\tri},$ 
\begin{align*}
\DisSpiObsVal{ij}{e_{0}}+\DisSpiObsVal{ij}{e_{1}}+\DisSpiObsVal{ij}{e_{2}}=\begin{cases}
12(11-5\sqrt{3})c_{\tri}^{\frac{1-|\T^{\delta}|}{2}}\sqrt{\mathrm{det}^{\star}\Delta^{00}}\frac{1}{|\torus|}, & i=j=0;\\
0, & \text{otherwise}.
\end{cases}
\end{align*}
\end{lem}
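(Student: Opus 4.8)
The plan is to mimic exactly the proof of Lemma~\ref{lem: B_ij}, replacing the square-lattice factor $v(\alpha,q)$ by its triangular counterpart $v^{\tri}(\alpha,q)$ and the three horizontal/vertical edges by the three edge-classes $e_0,e_1,e_2$. The starting point is the differentiation identity for $\TwiParFunWei{ij}{\alpha}:=\sqrt{\det\KW^{ij}}$. Since $\torus=\delta\H/\Lambda^\delta$ has exactly $|\torus|$ edges of each of the three types, the same computation as in the square case gives
\[
\der{\alpha}\TwiParFunWei{ij}{\alpha}=\frac{3|\torus|}{\alpha+\alpha^{-1}}\TwiParFunWei{ij}{\alpha}+\frac{|\torus|\alpha^{-1}}{2(\alpha+\alpha^{-1})}\left(\DisSpiObsVal{ij}{e_0}+\DisSpiObsVal{ij}{e_1}+\DisSpiObsVal{ij}{e_2}\right),
\]
obtained by writing $|\xi|=\sum_e\ind_{e\in\xi}$, grouping the edges into their three classes, and using the algebraic identity $\ind_{e\in\xi}=\tfrac{1}{2}(\alpha+\alpha^{-1})^{-1}(\alpha^{-1}\ind_{e\in\xi}-\alpha\ind_{e\notin\xi})+\tfrac{\alpha}{2}(\alpha+\alpha^{-1})^{-1}$ together with $b(e,\xi)=(\alpha^{-1}-\bar\en)\ind_{e\in\xi}+(\alpha-\bar\en)\ind_{e\notin\xi}$ and $\bar\en=\tfrac{2}{3}$; the $\bar\en$-terms and the constant $\tfrac{\alpha}{2}$-terms combine into multiples of $\TwiParFunWei{ij}{\alpha}$, with coefficients to be tracked carefully.

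Next, by Theorem~\ref{thm: det_KW} (triangular version) we have $\log\TwiParFunWei{ij}{\alpha}=\tfrac12\sum_{q}\log v^{\tri}(\alpha,q)$, so $\der{\alpha}\log\TwiParFunWei{ij}{\alpha}=\tfrac12\sum_q \der{\alpha}\log v^{\tri}(\alpha,q)$. For $(ij)\neq(00)$, one checks, exactly as in the square case, that $v^{\tri}(\alpha_{\tri},q)\neq0$ for all $q\in\sfrac{\H}{\TorLat^*}+s_{(ij)}$: indeed $v^{\tri}(\alpha_{\tri},q)$ vanishes only when $z(q)=w(q)=1$, i.e.\ $q\in\H$, which is excluded once $i=1$ or $j=1$ since then $q\cdot\omega^\delta_1$ or $q\cdot\omega^\delta_2$ lies in $\Z+\tfrac12$. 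One then evaluates $\der{\alpha}\log v^{\tri}(\alpha,q)|_{\alpha=\alpha_{\tri}}$; since the $q$-dependent part of $v^{\tri}$ factors out as $(3\alpha^3-\alpha-\alpha^5)S(q)$ with $S(q)=z+w+z^{-1}+w^{-1}+z/w+w/z$, one sees that at criticality $\der{\alpha}\log v^{\tri}(\alpha,q)$ is \emph{independent of $q$}, equal to some explicit constant $c_*$. Summing over the $|\torus|$ values of $q$ gives $\der{\alpha}\TwiParFunWei{ij}{\alpha}|_{\alpha_{\tri}}=\tfrac{c_*|\torus|}{2}\TwiParFunWei{ij}{\alpha_{\tri}}$, and plugging this into the differentiation identity forces $\DisSpiObsVal{ij}{e_0}+\DisSpiObsVal{ij}{e_1}+\DisSpiObsVal{ij}{e_2}=0$ provided $c_*$ matches the coefficient $6/(\alpha_{\tri}+\alpha_{\tri}^{-1})$ coming from the $\TwiParFunWei{ij}{\alpha}$-terms on the right-hand side — this matching is the analog of the identity $2/(\alpha_c+\alpha_c^{-1})=\tfrac{\sqrt2}{2}$ used in the square case, and it is the bookkeeping I expect to be the main (though purely computational) obstacle.

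Finally, for $(ij)=(00)$ we have $0\in\sfrac{\H}{\TorLat^*}+s_{(00)}$ and $v^{\tri}(\alpha_{\tri},0)=0$. One checks from the explicit form of $v^{\tri}$ that $\left.\der{\alpha}\right|_{\alpha_{\tri}}v^{\tri}(\alpha,0)=0$ as well (since $S(0)=6$ and the $q=0$ specialization $v^{\tri}(\alpha,0)=(1+3\alpha^2+8\alpha^3+3\alpha^4+\alpha^6)+6(3\alpha^3-\alpha-\alpha^5)$ has a double root at $\alpha_{\tri}=2-\sqrt3$), while $\left.\tfrac{\mathrm d^2}{\mathrm d\alpha^2}\right|_{\alpha_{\tri}}v^{\tri}(\alpha,0)$ is an explicit nonzero constant; all other factors $v^{\tri}(\alpha_{\tri},q)$, $q\neq0$, are nonzero. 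Therefore $\TwiParFunWei{00}{\alpha}$ has a simple zero at $\alpha_{\tri}$ with $\left.\der{\alpha}\right|_{\alpha_{\tri}}\TwiParFunWei{00}{\alpha}=\pm\bigl(\tfrac12 v^{\tri}{}''(\alpha_{\tri},0)\prod_{q\neq0}v^{\tri}(\alpha_{\tri},q)\bigr)^{1/2}$, the sign being negative by the same argument from \cite{CIMASONIDUMINIL-COPIN} as in the square case (namely $\ZFK$-type monotonicity of $\det\KW^{00}(\alpha)$ along $\alpha\in(0,1)$ with a unique zero at $\alpha_{\tri}$). Rewriting $\prod_{q\neq0}v^{\tri}(\alpha_{\tri},q)$ via the eigenvalue identity $c_{\tri}v^{\tri}(\alpha_{\tri},q)=$ eigenvalues of $\Delta^{00}$, i.e.\ $\prod_{q\neq0}v^{\tri}(\alpha_{\tri},q)=c_{\tri}^{-(|\torus|-1)}\det^{\star}\Delta^{00}$, and plugging into the differentiation identity together with $\TwiParFunWei{00}{\alpha_{\tri}}=0$, one obtains the stated formula for $\DisSpiObsVal{00}{e_0}+\DisSpiObsVal{00}{e_1}+\DisSpiObsVal{00}{e_2}$; the overall numerical constant $12(11-5\sqrt3)$ is then just the product of the second-derivative constant, the factor $c_{\tri}^{1/2}$, and the coefficient $\tfrac{\alpha_{\tri}^{-1}}{2(\alpha_{\tri}+\alpha_{\tri}^{-1})}$ inverted, which I would verify by a direct substitution of $\alpha_{\tri}=2-\sqrt3$.
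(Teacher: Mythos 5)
Your overall strategy is exactly the paper's: differentiate $\TwiParFunWei{ij}{\alpha}=\sqrt{\det\KW^{ij}}$ both through the high-temperature expansion and through the product formula, use the $q$-independence of $\der{\alpha}\log v^{\tri}(\alpha,q)$ at $\alpha_{\tri}$ to kill the sectors $(ij)\neq(00)$, and treat $(00)$ via the double zero of $v^{\tri}(\alpha,0)$, its second derivative, the identification $\prod_{q\neq0}v^{\tri}(\alpha_{\tri},q)=c_{\tri}^{-(|\torus|-1)}\det^{\star}\Delta^{00}$, and the sign argument from \cite{CIMASONIDUMINIL-COPIN}. However, the differentiation identity you display is wrong, and not harmlessly so. First, the algebraic identity you quote actually evaluates to $\frac{1}{2}\ind_{e\in\xi}$, not $\ind_{e\in\xi}$. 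More importantly, on the triangular lattice $\DisSpiObsVal{ij}{e}$ is built from $b(e,\xi)=(\alpha^{-1}-\tfrac23)\ind_{e\in\xi}+(\alpha-\tfrac23)\ind_{e\notin\xi}$, so the relevant decomposition is
\[
\ind_{e\in\xi}=\frac{1}{\alpha^{-1}-\alpha}\,b(e,\xi)+\frac{\tfrac23-\alpha}{\alpha^{-1}-\alpha},
\]
with denominator $\alpha^{-1}-\alpha$, not the square-lattice $\alpha+\alpha^{-1}$ (which appeared there only because $\DisSpiObsVal{ij}{e}$ was built from $\alpha^{-1}\ind_{e\in\xi}-\alpha\ind_{e\notin\xi}$). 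The correct identity is therefore
\[
\der{\alpha}\TwiParFunWei{ij}{\alpha}=\frac{(2-3\alpha)|\torus|}{1-\alpha^{2}}\,\TwiParFunWei{ij}{\alpha}+\frac{|\torus|}{1-\alpha^{2}}\left(\DisSpiObsVal{ij}{e_{0}}+\DisSpiObsVal{ij}{e_{1}}+\DisSpiObsVal{ij}{e_{2}}\right),
\]
not the one in your proposal.

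This matters because the ``matching'' you defer as bookkeeping is the actual crux of the $(ij)\neq(00)$ case, and it is exactly where the specific choice $\bar{\en}=\tfrac23$ enters: the vanishing holds if and only if half of the $q$-independent logarithmic derivative, $\tfrac12\bigl(2+\tfrac{1}{\sqrt3}\bigr)=1+\tfrac{1}{2\sqrt3}$, equals the coefficient of $\TwiParFunWei{ij}{\alpha}$ in the identity at $\alpha=\alpha_{\tri}$. With the corrected identity one indeed has $\frac{2-3\alpha_{\tri}}{1-\alpha_{\tri}^{2}}=1+\frac{1}{2\sqrt3}$ at $\alpha_{\tri}=2-\sqrt3$, which is the paper's check $b(e,\xi)=(6-4\sqrt3)\hat{b}(e,\xi)$; with your coefficients the required matching would read $1+\frac{1}{2\sqrt3}=\frac{3}{\alpha_{\tri}+\alpha_{\tri}^{-1}}=\frac34$, which is false, so the argument as written would fail to establish the vanishing (and, read literally, would contradict the lemma). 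Your $(00)$ case otherwise mirrors the paper (double root of $v^{\tri}(\alpha,0)$ at $\alpha_{\tri}$, second derivative $144(2-\sqrt3)$, negative sign via the monotonicity of $\det\KW^{00}(\alpha)$, eigenvalue identity for $\Delta^{00}$), and goes through once the corrected identity above is used; but there too the final constant is precisely the piece you postponed, so it must actually be computed rather than asserted.
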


\begin{proof}
As in the square lattice case, we have 
\begin{equation}
\frac{d}{d\alpha}Z^{(ij)}=\frac{|\T^{\delta}|}{\alpha}\sum_{\xi\in\EveSub{\torus}}\left(\indicator_{e_{0}\in\xi}+\ind_{e_{1}\in\xi}+\ind_{e_{2}\in\xi}\right)\CriPar^{\abs{\xi}}\loopfactor{ij}\label{eq: dZ_hex_1}
\end{equation}
The following identities are straightforward to check: 
\begin{align}
\left.\frac{d}{d\alpha}v^{\tri}(\alpha,q)\right|_{\alpha=\alpha_{\tri}} & \equiv\left(2+\frac{1}{\sqrt{3}}\right)v^{\tri}(\alpha_{\tri},q).\label{eq: deriv_hex}\\
\left.\frac{d^{2}}{d\alpha{}^{2}}v^{\tri}(\alpha,0)\right|_{\alpha=\alpha_{\tri}} & =144(2-\sqrt{3}).\label{eq: deriv_hex_second}
\end{align}
on the other hand, for $(ij)\neq(00),$ we have $ \det\KW^{ij}\neq0$,
and therefore (\ref{eq: deriv_hex}) gives

\begin{equation}
\left.\frac{d}{d\alpha}Z^{(ij)}\right|_{\alpha=\alpha_{\tri}}=\left.\frac{d}{d\alpha}\sqrt{\det\KW^{ij}}\right|_{\alpha=\alpha_{\tri}}=\left.\frac{1}{2\sqrt{\det\KW^{ij}}}\frac{d}{d\alpha}\det\KW^{ij}\right|_{\alpha=\alpha_{\tri}}=\left.\frac{c}{2}|\T^{\delta}|Z_{ij}\right|_{\alpha=\alpha_{\tri}}\label{eq: dz_hex_2}
\end{equation}
where $c=2+\frac{1}{\sqrt{3}}.$ Subtracting (\ref{eq: dZ_hex_1})
and (\ref{eq: dz_hex_2}), we get 
\[
\sum_{\xi\in\EveSub{\torus}}\left(\hat{b}(e_{0},\xi)+\hat{b}(e_{1},\xi)+\hat{b}(e_{2},\xi)\right)\CriPar^{\abs{\xi}}\loopfactor{ij}=0,
\]
where $\hat{b}(e,\xi)=\left(\frac{c}{6}-\frac{1}{\alpha}\right)\ind_{e\in\xi}+\frac{c}{6}\ind_{e\notin\xi}.$
It is now a matter of simple algebra to check that $b(e,\xi)=(6-4\sqrt{3})\hat{b}(e,\xi),$
concluding the proof in the case $(i,j)\neq(0,0).$ 

For $(i,j)=(0,0),$ we have $Z^{(00)}=0$, therefore, subtracting
$0=\frac{c}{2}|\T^{\delta}|Z^{(00)}$ from the right-hand side of
(\ref{eq: dZ_hex_1}) yields
\[
\left.\frac{d}{d\alpha}Z_{ij}\right|_{\alpha=\alpha_{\tri}}=-\sum_{\xi\in\EveSub{\torus}}\left(\hat{b}(e_{0},\xi)+\hat{b}(e_{1},\xi)+\hat{b}(e_{2},\xi)\right)\CriPar_{\tri}^{\abs{\xi}}\loopfactor{ij}.
\]
Now, since $Z_{00}(\alpha)=\sqrt{\det\KW^{00}(\alpha)}$, and because
of (\ref{eq: deriv_hex}–\ref{eq: deriv_hex_second}), we have 
\[
\det\KW^{00}(\alpha)=\frac{1}{2}144(2-\sqrt{3})(\alpha-\alpha_{c})^{2}\prod_{q\in\sfrac{\H}{\TorLat^{*}}\setminus\{0\}}\TwiKacWarWeiDetFac{\alpha}q+\oo{(\alpha-\alpha_{c})^{2}},\quad\alpha\to\alpha_{c},
\]
therefore, we arrive at 
\begin{multline*}
\left.\frac{d}{d\alpha}Z_{ij}\right|_{\alpha=\alpha_{\tri}}=12\sqrt{1-\frac{\sqrt{3}}{2}}\left(\prod_{q\in\sfrac{\H}{\TorLat^{*}}\setminus\{0\}}\TwiKacWarWeiDetFac{\alpha}q\right)^{\frac{1}{2}}=6(\sqrt{3}-1)\left(\prod_{q\in\sfrac{\H}{\TorLat^{*}}\setminus\{0\}}\TwiKacWarWeiDetFac{\alpha}q\right)^{\frac{1}{2}}.\\
=6(\sqrt{3}-1)c_{\tri}^{\frac{1-|\T^{\delta}|}{2}}\sqrt{\mathrm{\det}^{\star}\Delta_{\delta}^{00}}.
\end{multline*}
Putting everything together, we arrive at 
\[
\sum_{\xi\in\EveSub{\torus}}\left(b(e_{0},\xi)+b(e_{1},\xi)+b(e_{2},\xi)\right)\CriPar_{\tri}^{\abs{\xi}}\loopfactor{ij}=(4\sqrt{3}-6)6(\sqrt{3}-1)c_{\tri}^{\frac{1-|\T^{\delta}|}{2}}\sqrt{\mathrm{\det}^{\star}\Delta_{\delta}^{00}},
\]
as required.
\end{proof}
\begin{proof}[Proof of Theorem \ref{thm: thm_1_triang}]
 As in the the case of the square lattice, (\ref{eq: ht_triang})
implies that 
\[
\E\en_{e}=\frac{1}{2Z^{I}}(B^{01}(e)+B^{(10)}(e)+B^{(11)}(e)-B^{(00)}(e)),
\]
 and summing this over $e_{0},e_{1},e_{2}$ and applying Lemma \ref{lem: B_ij-1}
yields 
\[
\E\en_{e_{0}}+\E\en_{e_{1}}+\E\en_{e_{2}}=\frac{1}{2Z^{I}}12(11-5\sqrt{3})c_{\tri}^{\frac{1-|\T^{\delta}|}{2}}\sqrt{\mathrm{det}^{\star}\Delta^{00}}\frac{1}{|\torus|}.
\]
Now, we can recall that 
\begin{multline*}
2Z^{I}=Z_{01}+Z_{10}+Z_{11}=\sqrt{\det\KW^{01}}+\sqrt{\det\KW^{10}}+\sqrt{\det\KW^{11}}\\
=c_{\tri}^{-\frac{|\torus|}{2}}\left(\sqrt{\det\Delta_{\delta}^{01}}+\sqrt{\det\Delta_{\delta}^{10}}+\sqrt{\det\Delta_{\delta}^{11}}\right),
\end{multline*}
and putting all together, after some tedious algebra to simplify the
constant in front, yields the result.
\end{proof}
\bibliographystyle{plain}
\bibliography{Biblio_of_Petri}

\end{document}